\newcommand{\g}[1]{\boldsymbol{#1}}
\newcommand{\R}[0]{\mathbb{R}} 
\newtheorem{theorem}{Theorem}
\newtheorem{lemma}{Lemma}
\newtheorem{definition}{Definition}
\newtheorem{proposition}{Proposition}
\newtheorem{corollary}{Corollary}
\newtheorem{assumption}{Assumption}
\title{\bf Finding sparse solutions of systems of polynomial equations via group-sparsity optimization}
\author{F. Lauer$^1$ and H. Ohlsson$^{2,3}$
\medskip\\
$^1$ \small LORIA, Universit\'e de Lorraine, CNRS, Inria, France\\
$^2$ \small Dept. of Electrical Engineering and Computer Sciences,
University of California, Berkeley, USA\\
$^3$ \small Dept. of Electrical Engineering, Link\"oping University, Sweden
}
\begin{document}

\maketitle

\begin{abstract}
The paper deals with the problem of finding sparse solutions to systems of polynomial equations possibly perturbed by noise. In particular, we show how these solutions can be recovered from group-sparse solutions of a derived system of linear equations. Then, two approaches are considered to find these group-sparse solutions. The first one is based on a convex relaxation resulting in a second-order cone programming formulation which can benefit from efficient reweighting techniques for sparsity enhancement. For this approach, sufficient conditions for the exact recovery of the sparsest solution to the polynomial system are derived in the noiseless setting, while stable recovery results are obtained for the noisy case. Though lacking a similar analysis, the second approach provides a more computationally efficient algorithm based on a greedy strategy adding the groups one-by-one. With respect to previous work, the proposed methods recover the sparsest solution in a very short computing time while remaining at least as accurate in terms of the probability of success. This probability is empirically analyzed to emphasize the relationship between the ability of the methods to solve the polynomial system and the sparsity of the solution. 
\end{abstract}

\section{Introduction}

When faced with an underdetermined system of equations, one typically applies a regularization strategy in order to recover well-posedness. The choice of regularization depends on the particular application at hand and should be made to drive the solution towards desired properties. In the absence of precise goals, the most popular choice favors solutions with minimum $\ell_2$-norm. However, an alternative becoming more and more popular is to search for sparse solutions (which often have non-minimal $\ell_2$-norm). 
In the case of a system of linear equations, this alternative has been investigated in numerous works, see, e.g., \cite{Bruckstein09} for a review, and entails many applications of great importance, particularly for signal processing where it goes under the name of compressed sensing/sampling \cite{Donoho06,Candes06}. Formally, finding the sparsest solutions of linear systems can be written as the minimization, under the constraints of the linear system, of the number of nonzero variables, which is a nonsmooth, nonconvex and NP-hard problem. Two main approaches can be distinguished to tackle such problems. The first one, known as Basis Pursuit (BP), relies on a convex relaxation based on the minimization of an $\ell_1$-norm, while the second one applies a greedy strategy to add nonzero variables one-by-one.  Many results on the convergence of these methods to the sparsest solution are available in the literature \cite{Bruckstein09,Eldar12,Foucart13}. 

More recently, a few works introduced extensions of this problem to nonlinear equations. In particular, the first greedy approaches appeared in \cite{Blumensath08,Blumensath13,Beck13,Ehler13}, while BP algorithms were developed in \cite{Ohlsson13quadratic} to find sparse solutions of systems of quadratic equations %, with applications in phase retrieval \cite{XXX}, 
and in \cite{Ohlsson13nlbp} for more general nonlinear equations.
Formally, these problems can be formulated as
\begin{align}\label{eq:P0}
	\min_{\g x\in\R^n} \ & \|\g x\|_0 \\
	\mbox{s.t.}\ & y_i = f_i(\g x),\quad i=1,\dots,N, \nonumber
\end{align}
where $y_i\in \R$, $f_i : \R^n \rightarrow \R$ are nonlinear functions and $\|\g x\|_0 = \left|\{j\in\{1,\dots, n\} : x_j \neq 0\}\right|$ denotes the $\ell_0$-pseudo-norm of $\g x$, i.e, the number of nonzero components $x_j$. 

Here, we focus on the case where the $f_i$'s in~\eqref{eq:P0} are polynomial functions of maximal degree $d$:
\begin{equation}\label{eq:polyf}
	y_i = f_i(\g x) = p_i^d(\g x) = b_{i} + \sum_{k=1}^M a_{ik} \g x^{\alpha_k} ,\quad i=1,\dots,N,
	%\sum_{1\leq |\alpha| \leq d} a_{i\alpha} \g x^\alpha ,
\end{equation}
where $\{\alpha_k\}_{k=1}^M$ is the set of $M = \sum_{q=1}^d\begin{pmatrix}n + q-1\\q\end{pmatrix}$ multi-indexes with $1\leq |\alpha_k| = \sum_{j=1}^n (\alpha_k)_j \leq d$ and $\g x^{\alpha_k} = \prod_{j=1}^n x_j^{(\alpha_k)_j}$ are the corresponding monomials.
This includes the particular case considered in \cite{Ohlsson13quadratic} with $d=2$, while in \cite{Ohlsson13nlbp} this setting is used to deal with the more general case via the Taylor expansions of the nonlinear functions $f_i$. Note that the formulation in~\eqref{eq:P0}--\eqref{eq:polyf} also entails cases outside of the quasi-linear setting considered in \cite{Ehler13}.

The present paper proposes a new BP approach to solve \eqref{eq:P0}--\eqref{eq:polyf}, which, in comparison with the previous works \cite{Ohlsson13quadratic,Ohlsson13nlbp}, is more simple and amounts to solving an easier optimization problem. More precisely, the method proposed in Sect.~\ref{sec:ell1} relies on a simple change of variable  sufficient to result in an efficient algorithm implemented as a classical $\ell_1$-minimization problem. However, the structure of the polynomials is discarded and the solution may not satisfy the original polynomial constraints. Note that this change of variable is also found in \cite{Ohlsson13nlbp}, though with constraints handled in a more complex manner, and closely related to the lifting technique of \cite{Ohlsson13quadratic} for quadratic BP and the one of \cite{Vidal05} for subspace clustering.

The method in \cite{Ohlsson13quadratic} enforces the structure of the quadratic polynomials via rank constraints, which leads to optimization problems with additional levels of relaxation and the introduction of a parameter tuning the trade-off between the minimization of the $\ell_1$-norm on the one hand and the satisfaction of the rank constraint on the other hand. 
In \cite{Ohlsson13nlbp}, the structure of higher-degree polynomials is enforced by a set of quadratic constraints on the monomials\footnote{For example, consider the monomials $u_1=x_1$, $u_2=x_2$, $u_3=x_1x_2$, $u_4=x_1^2 x_2$, then the structure of $u_3$ and $u_4$ is enforced by $u_3=u_1u_2$ and $u_4 = u_1u_3$. But note that since these constraints are relaxed in the final formulation, the estimation can yield $u_3\neq u_1u_2$, which then recursively implies that all monomial constraints involving $u_3$ are meaningless.}, which are then relaxed as in \cite{Ohlsson13quadratic}. 

Instead, the method proposed in Sect.~\ref{sec:groupsparse} implements the structural knowledge via group-sparsity. This results in an $\ell_1/\ell_2$-minimization, i.e., a second-order cone program (SOCP) which can be solved more efficiently. In addition, this SOCP formulation is easily extended in Sect.~\ref{sec:weighting} to benefit from reweighting techniques for improving the sparsity of the solution. 

The conditions for exact recovery of the proposed BP methods are analyzed in Sect.~\ref{sec:analysis}. In particular, we show that though the simple $\ell_1$-minimization does not include structural constraints, exact recovery occurs for sufficiently sparse cases. A similar condition is proved for the $\ell_1/\ell_2$-minimization based on group-sparsity.

The greedy approach is discussed in Sect.~\ref{sec:greedy}, where two variants are proposed: an exact algorithm for solving the group-sparsity optimization problem in small-scale cases and an approximate one which remains efficient in higher-dimensional settings. 
%Previous greedy approaches described in \cite{Blumensath08,Beck13} apply a one-dimensional gradientprojected-gradient gradient-based and require the computation of the gradient of a nonlinear function at each iteration. On the contrary, the proposed greedy algorithms remain simple thanks to the group-sparse and linearized formulation of the problem. 
Previous greedy approaches \cite{Blumensath13,Beck13} considered the problem in its sparsity-constrained form, where the sum of squared errors over the equations \eqref{eq:polyf} is minimized subject to $\|\g x\|_0\leq s$ for an {\em a priori} fixed $s$. The iterative hard thresholding of \cite{Blumensath13} is a gradient descent algorithm with an additional projection onto the feasible set at each iteration via a simple thresholding operation. In \cite{Beck13}, this is interpreted as a fixed point iteration enforcing a necessary (but not sufficient) optimality condition, which requires a well-chosen step-size to converge satisfactorily. 
In particular, the step-size must be an upper bound on the Lipschitz constant of the gradient of the objective function, which however is not (globally) Lipschitz continuous for polynomials in \eqref{eq:polyf} with degree $d>1$. 
The sparse-simplex algorithm of \cite{Beck13} is a coordinate descent method which enjoys similar but less restrictive convergence properties while being parameter-free. However, for polynomial equations as in \eqref{eq:polyf}, each iteration requires solving several one-dimensional  minimizations of a polynomial of degree $2d$, which becomes difficult for $d>2$. On the contrary, the proposed greedy algorithms remain simple thanks to the group-sparse and linearized formulation of the problem: each iteration requires only solving least-squares problems. 

Extensions of the methods are discussed in Sections~\ref{sec:pureNL} and~\ref{sec:noise}. In particular, Section~\ref{sec:pureNL} deals with the issue of purely nonlinear polynomials, for which the solution to \eqref{eq:P0}--\eqref{eq:polyf} cannot be estimated as directly as for polynomials involving linear monomials, but which arise in important applications such as phase retrieval \cite{Kohler72,Gonsalves76,Gerchberg72,Fienup82,Marchesini07,Candes13,Candes13b}. Then, the case where the equations~\eqref{eq:polyf} are perturbed by noise is considered in Sect.~\ref{sec:noise}. In particular, the analysis provides stable recovery results for polynomial BP denoising in the flavor of the one obtained in \cite{Donoho06noise} for linear BP denoising.

Finally, numerical experiments in Sect.~\ref{sec:exp} show the efficiency of the proposed methods and extensions. Results are in line with the ones found in classical sparse optimization with linear constraints. In particular, all methods can recover the sparsest solution in sufficiently sparse cases and the greedy approach is the fastest while the BP methods based on convex relaxations benefit from a slightly higher probability of success. 

%=====================================================
\section{Polynomial basis pursuit}
\label{sec:polyBP}

This section develops two basis pursuit approaches to find sparse solutions of systems of polynomial equations via the minimization of an $\ell_1$-norm for the first one and of a mixed $\ell_1/\ell_2$-norm for the second one. The methods are developed under the following assumption, which will be relaxed in Sect.~\ref{sec:pureNL}. 
\begin{assumption}\label{ass:nozerocol1}
For all $j\in\{1,\dots,n\},\ \exists i\in\{1,\dots,N\}$ such that $a_{ik}\neq 0$ for $k$ determined such that $\left(\alpha_{k}\right)_l = 1$ for $l=j$ and $\left(\alpha_{k}\right)_l =0$ for $l\in \{1,\dots,n\}\setminus \{j\}$.
\end{assumption}
In particular, Assumption~\ref{ass:nozerocol1} ensures that the polynomials include a linear part, or more precisely, that for any variable $x_j$, $j=1,\dots,n$, the monomial $x_j$ has a nonzero coefficient in at least one of the  polynomials.

\subsection{Polynomial basis pursuit via $\ell_1$-minimization} 
\label{sec:ell1}

We start by rewriting the constraints in \eqref{eq:P0}--\eqref{eq:polyf} as 
\begin{equation}\label{eq:linearize}
	y_i = f_i(\g x) = b_{i} + \sum_{k=1}^M a_{ik} \g x^{\alpha_k} 
	 =\g a_i^T \phi( \g x) + b_{i} ,
\end{equation}
where $\g a_i = [a_{i1},\dots,a_{iM}]^T$ and $\phi : \R^n \rightarrow \R^M$ is a mapping that computes all the monomials of degree $q$, $1\leq q\leq d$, with $n$ variables. 
Note that $\g x$ is embedded in $\phi(\g x)$ as the components corresponding to the $n$ multi-indexes $\alpha_{k(j)}$, $j=1,\dots,n$, such that $\left(\alpha_{k(j)}\right)_l = 1$ for $l=j$ and $0$ for $l\neq j$. Assuming that these are the first components of $\phi(\g x)$, i.e., $k(j)=j$, we also define the (linear) inverse mapping $\phi^{-1} : \R^M \rightarrow \R^n$, such that $\phi^{-1} (\phi(\g x))=\g x$, as
$$
%	\phi^{-1}(\g \phi) = \begin{bmatrix}\g I_n & 0 \\0 & 0 \end{bmatrix} \g \phi,
	\phi^{-1}(\g \phi) = \begin{bmatrix}\g I_n & 0 \end{bmatrix} \g \phi,
$$
where $\g I_n$ is the $n$-by-$n$ identity matrix. 

The high-dimensional lifting by $\phi$ allows us to recast \eqref{eq:P0}--\eqref{eq:polyf} as a standard (i.e., linear) problem in sparse optimization:
\begin{align}\label{eq:P0poly}
	\min_{\g \phi \in\R^M} \ & \|\g \phi\|_0 \\
	\mbox{s.t.}\ &  \g A\g \phi = \g y - \g b ,\nonumber
\end{align}
where $\g A = [\g a_1, \dots, \g a_N]^T$, $\g b=[b_1,\dots,b_N]^T$ and $\phi(\g x)$ is replaced by an unstructured vector $\g \phi$. This constitutes the first level of relaxation in the proposed approach, where the components of $\g \phi$ are not constrained to be interdependent monomials. While this yields a rather crude approximation, it will serve as the basis for the refined approach of Sect.~\ref{sec:groupsparse}, where additional structure will be imposed on $\g \phi$. 

The second level of relaxation comes from the BP approach, in which problems such as~\eqref{eq:P0poly} are typically solved via the convex relaxation 
\begin{align}\label{eq:P1poly}
	\hat{\g \phi} = \arg\min_{\g \phi \in\R^M} \ & \|\g W \g \phi\|_1 \\
	\mbox{s.t.}\ &  \g A\g \phi = \g y - \g b , \nonumber
\end{align}
where $\g W = diag\left(\|\g A_1\|_2,\dots, \|\g A_M\|_2 \right)$, with $\g A_k$ the $k$th column of $\g A$, is a diagonal matrix of precompensating weights. 
Then, for polynomial BP, an estimate of the solution to \eqref{eq:P0}--\eqref{eq:polyf} can be easily computed under Assumption~\ref{ass:nozerocol1} as $\hat{\g x} = \phi^{-1}(\hat{\g \phi})$.

Problem~\eqref{eq:P1poly} can be formulated as a linear program and solved by generic interior-point algorithms in polynomial time, while it can also be solved by more efficient and dedicated algorithms, see, e.g., \cite[Chapter~15]{Foucart13}. However, the complexity with respect to $n$ remains exponential because of the dimension $M$ which scales exponentially with $n$. 

The literature on basis pursuit and compressed sensing \cite{Bruckstein09,Foucart13} tells us that the sparser the solution to \eqref{eq:P0poly} is, the more likely the convex relaxation \eqref{eq:P1poly} is to yield its recovery. Here, by construction we know that there is at least a very sparse vector $\phi(\g x_0)$ satisfying the constraints:  with $\g x_0$ the sparse solution to \eqref{eq:P0}--\eqref{eq:polyf}, the sparsity level $\|\phi(\g x_0)\|_0 / M$ is better than  $\|\g x_0\|_0 / n$, as stated by Proposition~\ref{prop:phisparsity} below. Note that a better bound implying an increased level of sparsity depending on $d$ for very sparse cases is derived in Appendix~\ref{app:phisparsity}.
\begin{proposition} \label{prop:phisparsity}
Let the mapping $\phi : \R^n \rightarrow \R^M$ be defined as above. Then, the vector $\phi(\g x)$ is at least as sparse as the vector $\g x$ in the sense that the inequality
$$
	\frac{\|\phi(\g x)\|_0}{M} \leq \frac{ \|\g x\|_0 }{n} 
$$
holds for all $\g x\in\R^n$. 
\end{proposition}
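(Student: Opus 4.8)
The plan is to reduce the stated inequality to a degree-by-degree comparison of binomial coefficients. First I would fix $\g x\in\R^n$, set $s=\|\g x\|_0$, and let $S=\{j : x_j\neq 0\}$, so that $|S|=s$. The key observation is that a monomial $\g x^{\alpha_k}=\prod_{j=1}^n x_j^{(\alpha_k)_j}$ is nonzero if and only if every variable it involves lies in $S$, i.e.\ $\{j : (\alpha_k)_j>0\}\subseteq S$; in particular, the support of $\phi(\g x)$ depends only on $S$, not on the actual values of the $x_j$. Counting the multi-indexes $\alpha_k$ with $1\leq|\alpha_k|\leq d$ and support contained in $S$ is exactly counting the monomials of degree $q$, $1\leq q\leq d$, in $s$ variables, of which there are $\binom{s+q-1}{q}$ for each $q$. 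Hence
$$
	\|\phi(\g x)\|_0 \;=\; \sum_{q=1}^d \binom{s+q-1}{q},
$$
and since $M=\sum_{q=1}^d\binom{n+q-1}{q}$, the claim is equivalent to $\sum_{q=1}^d\binom{s+q-1}{q}\leq \frac{s}{n}\sum_{q=1}^d\binom{n+q-1}{q}$.

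Next I would establish the term-wise bound $\binom{s+q-1}{q}\leq \frac{s}{n}\binom{n+q-1}{q}$ for every $q\in\{1,\dots,d\}$; summing these inequalities over $q$ and dividing by $M$ then yields the proposition. Writing the binomial coefficients as rising-factorial products, $\binom{s+q-1}{q}=\frac{1}{q!}\prod_{i=0}^{q-1}(s+i)$ and $\binom{n+q-1}{q}=\frac{1}{q!}\prod_{i=0}^{q-1}(n+i)$, their ratio equals $\prod_{i=0}^{q-1}\frac{s+i}{n+i}=\frac{s}{n}\prod_{i=1}^{q-1}\frac{s+i}{n+i}$. The factor for $i=0$ contributes exactly $s/n$, while for $1\leq i\leq q-1$ each remaining factor satisfies $\frac{s+i}{n+i}\leq 1$ because $s\leq n$; hence the whole product is at most $s/n$, as desired. (If $s=0$, then $\g x=\g{0}$, $\phi(\g x)=\g{0}$, and both sides vanish; and $s\leq n$ holds in all cases.)

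I do not anticipate a genuine obstacle here: the argument is elementary once the combinatorial count is in place. The only step requiring some care is that count itself, namely recognizing that a monomial vanishes as soon as one of its factors is zero, so that $\|\phi(\g x)\|_0$ equals the number of monomials of degree $1$ through $d$ built solely from the $s$ active variables. Everything else is the factor-by-factor estimate above together with the summation over $q$ and the normalization by $M$.
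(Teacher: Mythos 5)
Your proof is correct and follows essentially the same route as the paper: the same count $\|\phi(\g x)\|_0=\sum_{q=1}^d\binom{s+q-1}{q}$ followed by the term-wise comparison $\binom{s+q-1}{q}\leq\frac{s}{n}\binom{n+q-1}{q}$, which is exactly the content of the paper's Lemma~\ref{lem:binoms} (your ratio-of-rising-factorials argument is the same computation, phrased multiplicatively). Your explicit treatment of the case $s=0$ is a small tidy addition the paper leaves implicit.
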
 
Proposition~\ref{prop:phisparsity} implies that if~\eqref{eq:P0}--\eqref{eq:polyf} has a sparse solution then so does~\eqref{eq:P0poly}. 
To prove Proposition~\ref{prop:phisparsity}, we first need the following lemma. 
\begin{lemma} \label{lem:binoms}
For all triplet $(a,b,c) \in (\mathbb{N}^*)^3$ such that $a\geq b$, the inequality 
$$
	\frac{1}{a} \begin{pmatrix}a + c-1\\c\end{pmatrix} \geq \frac{1}{b} \begin{pmatrix}b + c-1\\c\end{pmatrix} 
$$
holds.
\end{lemma}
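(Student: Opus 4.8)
The plan is to prove the inequality by showing that the function $g(a) = \frac{1}{a}\binom{a+c-1}{c}$ is nondecreasing in $a$ over the positive integers for each fixed $c \in \mathbb{N}^*$; since $a \geq b$, this immediately gives $g(a) \geq g(b)$. First I would rewrite the quantity in a form that makes the $a$-dependence transparent. Using $\binom{a+c-1}{c} = \frac{(a+c-1)!}{c!\,(a-1)!} = \frac{1}{c!}\prod_{j=1}^{c-1}(a+j)\cdot a \cdot \frac{1}{a}$ — more cleanly, note that $\frac{1}{a}\binom{a+c-1}{c} = \frac{(a+c-1)!}{c!\,a!} = \frac{1}{c!}\prod_{j=1}^{c-1}(a+j)$, because the factor $a$ in the numerator product $a(a+1)\cdots(a+c-1)$ cancels against the $a$ in the denominator, leaving the product of the $c-1$ terms $(a+1),(a+2),\dots,(a+c-1)$ divided by $c!$. (For $c=1$ the empty product gives $\frac{1}{1}\binom{a}{1}/a = 1$, consistent.)

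Once the expression is in the form $g(a) = \frac{1}{c!}\prod_{j=1}^{c-1}(a+j)$, the claim is essentially obvious: each factor $a+j$ with $j \geq 1$ is strictly increasing in $a$ and strictly positive for $a \in \mathbb{N}^*$, so the product is nondecreasing (indeed strictly increasing when $c \geq 2$). Hence $a \geq b$ implies $\prod_{j=1}^{c-1}(a+j) \geq \prod_{j=1}^{c-1}(b+j)$, and dividing by $c! > 0$ preserves the inequality, which is exactly the stated bound.

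The only step requiring any care is the algebraic identity $\frac{1}{a}\binom{a+c-1}{c} = \frac{1}{c!}\prod_{j=1}^{c-1}(a+j)$; everything after that is monotonicity of a product of positive increasing factors. An alternative route, if one prefers to avoid writing out the product, is a direct telescoping argument showing $g(a+1) \geq g(a)$: the ratio $g(a+1)/g(a) = \binom{a+c}{c}/\binom{a+c-1}{c} \cdot \frac{a}{a+1} = \frac{a+c}{a} \cdot \frac{a}{a+1} = \frac{a+c}{a+1} \geq 1$ since $c \geq 1$, and then induct on $a$ from $b$ up to $a$. I expect no genuine obstacle here; the lemma is elementary and the main point is simply to pick the cleanest of these equivalent presentations.
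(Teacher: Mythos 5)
Your proposal is correct and follows essentially the same route as the paper: both reduce $\frac{1}{a}\binom{a+c-1}{c}$ to $\frac{1}{c!}\prod_{i=1}^{c-1}(a+i)$ and conclude by termwise monotonicity of the product in $a$. The telescoping ratio argument you mention as an alternative is a fine variant but not needed.
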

\begin{proof}
On the one hand, we have
$$
	\frac{1}{a} \begin{pmatrix}a + c-1\\c\end{pmatrix} = \frac{(a+c-1)!}{a\, c! (a-1)!} = \frac{1}{a\,c!}\prod_{i=0}^{c-1}(a+i) 
	= \frac{1}{c!}\prod_{i=1}^{c-1}(a+i)
$$
and a similar expression with $b$ instead of $a$. 
On the other hand, with $a\geq b$, we have
$$
	\forall i\in\mathbb{N},\ a+i\geq  b + i\quad \Rightarrow\quad  \frac{1}{c!}\prod_{i=1}^{c-1}(a+i) \geq \frac{1}{c!}\prod_{i=1}^{c-1}( b +i) 
$$ 
which then yields the sought statement. 

\end{proof}
We now give the proof of Proposition~\ref{prop:phisparsity}.
\begin{proof}
By construction, the number of nonzeros in $\phi(\g x)$ is equal to the sum over $q$, $1\leq q\leq d$, of the number of monomials of degree $q$ in $\|\g x\|_0$ variables. This yields
$$
	\frac{\|\phi(\g x)\|_0}{M} =  \frac{ \sum_{q=1}^d\begin{pmatrix}\|\g x_0\|_0 + q-1\\q\end{pmatrix}}{ \sum_{q=1}^d\begin{pmatrix}n + q-1\\q\end{pmatrix}} 
	=  \frac{ \|\g x\|_0 }{n} \frac{  \frac{1}{\|\g x\|_0 } \sum_{q=1}^d\begin{pmatrix}\|\g x_0\|_0 + q-1\\q\end{pmatrix}}{  \frac{1 }{n} \sum_{q=1}^d\begin{pmatrix}n + q-1\\q\end{pmatrix}} 
$$
Since $n\geq \|\g x\|_0$, Lemma~\ref{lem:binoms} gives the bound
$$
	 \frac{1}{\|\g x\|_0 } \sum_{q=1}^d\begin{pmatrix}\|\g x_0\|_0 + q-1\\q\end{pmatrix}\leq   \frac{1 }{n} \sum_{q=1}^d\begin{pmatrix}n + q-1\\q\end{pmatrix}
$$
from which the statement follows. 

\end{proof}

%%%%%%%%%%%%%%%%%%%%%%%%%%%%%%%%%%%%%%%%%%%%%%%
\subsection{Polynomial basis pursuit via $\ell_1$/$\ell_2$-minimization (group sparsity)} 
\label{sec:groupsparse}
%%%%%%%%%%%%%%%%%%%%%%%%%%%%%%%%%%%%%%%%%%%%%%%

The approach proposed above relies on a rather crude approximation. Thus, the polynomial equations are not guaranteed to be satisfied by the solution $\hat{\g x}$, due to the factorization of the polynomial that may not hold for $\hat{\g \phi}$. In other words, the linearization in \eqref{eq:linearize} together with the direct optimization of $\g \phi$ discard the desired structure for $\g \phi$. 
In practice, the solution $\hat{\g x}$ can be checked a posteriori with $y_i \overset{?}{=} p_i^d(\hat{\g x})$, $i=1,\dots,N$. 
But in order to increase the probability of obtaining a satisfactory solution, i.e., one which satisfies the original polynomial constraints, we must embed the structure of the polynomial in the problem formulation. 

Let us define the index sets 
\begin{equation}\label{eq:indexsets}
	I_j= \{k\in\{1,\dots,M\} : (\alpha_k)_j \neq 0\},\quad j=1,\dots,n .
\end{equation}
%and consider the vector-valued function $\phi = [\phi_1,\dots, \phi_M]^T$ via its component functions $\phi_k : \g x\mapsto \g x^{\alpha_k}$, $k=1,\dots,M$. 
Then, the structural information we aim at embedding is given by the following implication:
\begin{equation}\label{eq:implication}
	\forall j\in\{1,\dots,n\},\ \forall k \in I_j,\quad 
	\begin{cases}
		x_j = 0\ \Rightarrow \ \left(\phi(\g x) \right)_k = 0,\\
		\left(\phi(\g x) \right)_k \neq 0 \ \Rightarrow \ x_j \neq 0, 
	\end{cases}
\end{equation}
which formalizes the fact that whenever a variable is zero, all monomials involving this variable must be zero. 
Such a structure can be favored via group-sparsity optimization, as detailed next.

Let us define the set of mappings $\varphi_j : \R^n \rightarrow \R^m$, each computing the subset of components of $\phi$  involving the variable $x_j$ (see Appendix~\ref{app:m} for the precise value of $m = |I_j|$). Note that these mappings are nonlinear in $\g x$ but linear in $\phi(\g x)$: $\varphi_j(\g x) = \g B_j \phi(\g x)$, where $\g B_j$ is an $m$-by-$M$ binary matrix filled with zeros except for a 1 on each row at the $k$th column for all $k\in I_j$.
Further define the $\ell_0$-pseudo-norm of a vector-valued sequence as the number of nonzero vectors in the sequence: 
$$
	\|\{\g u_j\}_{j=1}^n\|_0 = \left| \left\{j \in \{1,\dots,n\} : \g u_j \neq \g 0 \right\} \right| .
$$ 
%Let us define a set of vectors $\g \phi_j= \g W_j \g \phi$, $j=1,\dots,n$, containing the components of $\phi(\g x)$ involving the variable $x_j$. Let $\{\g \phi_j\}_{j=1}^n$ denote the vector of $n$ components, each one equal to a norm of $\g \phi_j$. 
We call $\g \phi$ a group-sparse vector if $\|\{\g B_j \g \phi\}_{j=1}^n\|_0$ is small. By construction and due to \eqref{eq:implication}, a sparse $\g x$ leads to a group-sparse $\phi(\g x)$ with 
$$
	\|\{\g B_j \phi(\g x)\}_{j=1}^n\|_0 = \|\g x\|_0 .
$$ 
Therefore, in order to solve \eqref{eq:P0}--\eqref{eq:polyf} we search for a group-sparse solution to the linear system $\g A\g \phi = \g y - \g b$. 

Such group-sparse solutions can be found by solving
\begin{align}\label{eq:P0polygroup}
	\min_{\g \phi \in\R^M} \ & \|\{\g W_j \g\phi\}_{j=1}^n\|_0 \\
	\mbox{s.t.}\ &  \g A\g \phi = \g y  - \g b, \nonumber
\end{align}
where we replaced each $\g B_j$ by a weight matrix $\g W_j = \g B_j \g W$, with $\g W = diag\left(\|\g A_1\|_2,\dots, \|\g A_M\|_2 \right)$, % such that each 1 at a column $k$ in $\g B_j$ is replaced by the value of $\|\g A_k\|_2$ in $\g W_j$. 
without effect on the $\ell_0$-pseudo-norm. The reason for introducing these precompensating weights will become clear in the analysis of Sect.~\ref{sec:analysis}. %, where~\eqref{eq:P1polygroup} always refers to a formulation using these weights.  for 

Let us define the $\ell_p/\ell_q$-norm of a vector-valued sequence as
$$
	\|\{\g u_j\}_{j=1}^n\|_{p,q} = \left(\sum_{j=1}^n \|\g u_j\|_q^p\right)^{\frac{1}{p}} .
$$
Problem~\eqref{eq:P0polygroup} can be relaxed by replacing the $\ell_0$-pseudo-norm with an  $\ell_p/\ell_q$-norm. In this paper we only consider the case $p=1$ and $q=2$, i.e., 
$$
	\| \{\g W_j\g \phi\}_{j=1}^n \|_{1,2}= \left\|\begin{pmatrix}\|\g W_1\g\phi\|_2\\ \vdots \\\|\g W_n\g\phi\|_2\end{pmatrix}\right\|_1, 
$$
but other norms could be used, such as the $\ell_1/\ell_\infty$-norm.  
This yields the convex relaxation\footnote{Note that since all the groups have the same number of variables, they need not be weighted by a function of the number of variables in each group.} 
\begin{align}\label{eq:P1polygroup}
	\hat{\g \phi} = \arg\min_{\g \phi \in\R^M} \ & \sum_{j=1}^n \|\g W_j\g \phi\|_2 \\
	\mbox{s.t.}\ &  \g A\g \phi = \g y - \g b, \nonumber
\end{align}
which is easily reformulated as a Second Order Cone Program (SOCP) that can be solved by generic software such as CVX \cite{cvx,cvx2} or MOSEK \cite{MOSEK} (more efficient dedicated solvers can also be found, e.g., \cite{Deng11}). However, as for the $\ell_1$-minimization~\eqref{eq:P1poly}, the complexity remains exponential with respect to the number $n$ of base variables via the dimension $M$.
%Finally, adding precompensating weights as in the $\ell_1$-minimization of Sect.~\ref{sec:ell1} is straightforward. We only apply a slight change to the definition of $\g W_j$ in order to replace all 1's by the values of the weights: each 1 at a column $k$ is replaced by the value of $\|\g A_k\|_2$. The reason for introducing these precompensating weights will become clear in the analysis of Sect.~\ref{sec:analysis}, where~\eqref{eq:P1polygroup} always refers to a formulation using these weights. 

\paragraph{Adding structure via constraints.} 
All components of the mapping $\phi(\g x)$ involving only even degrees of base variables $x_j$ are nonnegative. These structural constraints can help to drive the solution towards one that correctly factorizes and corresponds to a solution of the polynomial equations. This is obtained by solving
\begin{align}\label{eq:P1polygroup+}
	\hat{\g \phi} = \arg\min_{\g \phi \in\R^M} \ & \sum_{j=1}^n \|\g W_j\g \phi\|_2 \\
	\mbox{s.t.}\ &  \g A\g \phi = \g y - \g b\nonumber\\
		& \phi_k \geq 0,\ \forall k\ \mbox{such that } (\alpha_k)_j \mbox{ is even for all } j.\nonumber
\end{align}	
This optimization problem has the same complexity as~\eqref{eq:P1polygroup} since we simply added nonnegativity constraints to some variables.

Other forms of prior knowledge can be easily introduced in~\eqref{eq:P1polygroup+}. For instance, if (tight) box constraints on the base variables are available, then lower and upper bounds on all monomials can be derived. 
Finally, note that these structural constraints can also be added to the $\ell_1$-minimization~\eqref{eq:P1poly}. 

%%%%%%%%%%%%%%%%%%%%%%%%%%%%%%%%%%%%%%%%%%%%%%
\subsection{Analysis}
\label{sec:analysis}
%%%%%%%%%%%%%%%%%%%%%%%%%%%%%%%%%%%%%%%%%%%%%%

The following derives conditions of exact recovery of the sparse solution to the system of polynomial equations via various convex relaxations. These conditions are based on the mutual coherence of the matrix $\g A$ as defined e.g. in \cite{Donoho01,Bruckstein09}.
\begin{definition}\label{def:mu} The {\em mutual coherence} of a matrix $\g A = [\g A_1,\dots, \g A_M]$ is 
$$
	\mu(\g A) = \max_{1\leq i< j \leq M} \frac{|\g A_i^T \g A_j|}{\|\g A_i\|_2 \|\g A_j\|_2} .
$$
\end{definition}
In order for the mutual coherence of $\g A$ to be defined, we focus on the case where the following assumption holds. 
\begin{assumption}\label{ass:nozerocol}
All columns $\g A_k$ of the matrix $\g A$ are nonzero, i.e., $\g A_k\neq \g 0$, $k=1,\dots,M$, or, equivalently, for all $k\in\{1,\dots,M\},\ \exists i\in\{1,\dots,N\}$ such that the corresponding polynomial coefficient $a_{ik}\neq 0$. 
\end{assumption}
Assumption~\ref{ass:nozerocol} is slightly more restrictive than Assumption~\ref{ass:nozerocol1}, which only constrains the first $n$ columns of $\g A$. 
If Assumption~\ref{ass:nozerocol} does not hold, the following analysis can be reproduced under Assumption~\ref{ass:nozerocol1} by considering the submatrix $\tilde{\g A}\in\R^{N\times \tilde{M}}$ containing the $\tilde{M}$ nonzero columns of $\g A$ and a similarly truncated mapping $\phi : \R^n\rightarrow \R^{\tilde{M}}$. Adjustments then need to be made where numbers of columns are used, i.e., by substituting $\tilde{M}\leq M$ and $\tilde{m}_j \leq m$ for $M$ and $m$.
However, for the case where both Assumptions~\ref{ass:nozerocol1} and~\ref{ass:nozerocol} do not hold, i.e., of zero columns corresponding to base variables, $\g A_k= \g 0$, $k\in \{1,\dots,n\}$, $\hat{\g x}$ cannot be obtained by the inverse mapping $\phi^{-1}$. This particular case will be discussed in Sect.~\ref{sec:pureNL}.

Note that whenever a condition requires the mutual coherence $\mu(\g A)$ to be defined, Assumption~\ref{ass:nozerocol} implicitly holds. In such cases, Assumption~\ref{ass:nozerocol} will not be explicitly stated in the theorems below. 

\subsubsection{$\ell_1$-minimization method} 

The result below characterizes a case where the simple $\ell_1$-minimization method is sufficient to solve the sparse optimization problem \eqref{eq:P0}--\eqref{eq:polyf}.
\begin{theorem}
	Let $\g x_0$ denote the unique solution to \eqref{eq:P0}--\eqref{eq:polyf}. If the inequality 
	\begin{equation}\label{eq:conditionP1}
		\|\g x_0\|_0 < \frac{n}{2M}\left(1+ \frac{1}{\mu(\g A)}\right) 
	\end{equation}
	holds, then the solution $\hat{\g \phi}$ to \eqref{eq:P1poly} is unique and equal to $\phi(\g x_0)$, thus providing $\hat{\g x} = \phi^{-1}(\hat{\g \phi})= \g x_0$. 
\end{theorem}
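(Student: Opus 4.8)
The plan is to reduce the claim to the classical mutual-coherence recovery guarantee for $\ell_1$-minimization applied to the lifted linear system $\g A\g\phi = \g y - \g b$, and then translate the sparsity bound on $\g x_0$ into a sparsity bound on the lifted vector $\phi(\g x_0)$ via Proposition~\ref{prop:phisparsity}. The standard result (see, e.g., \cite{Bruckstein09,Foucart13}) states that if a linear system $\g A\g\phi = \g c$ admits a solution $\g\phi^\star$ with $\|\g\phi^\star\|_0 < \frac12\bigl(1 + 1/\mu(\g A)\bigr)$, then $\g\phi^\star$ is the unique sparsest solution and it is the unique minimizer of $\|\g\phi\|_1$ subject to $\g A\g\phi = \g c$. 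Since Assumption~\ref{ass:nozerocol} makes $\mu(\g A)$ well-defined, this applies directly to the candidate $\g\phi^\star = \phi(\g x_0)$, which satisfies the constraint because $\g A\phi(\g x_0) + \g b = \g y$ by \eqref{eq:linearize}.

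First I would observe that problem \eqref{eq:P1poly} minimizes $\|\g W\g\phi\|_1$ rather than $\|\g\phi\|_1$, with $\g W = diag(\|\g A_1\|_2,\dots,\|\g A_M\|_2)$. The change of variables $\g\psi = \g W\g\phi$ turns \eqref{eq:P1poly} into the plain $\ell_1$-minimization $\min \|\g\psi\|_1$ subject to $(\g A\g W^{-1})\g\psi = \g y - \g b$; here $\g W$ is invertible precisely because no column of $\g A$ vanishes. The matrix $\g A\g W^{-1}$ has unit-norm columns and, crucially, the same mutual coherence as $\g A$, since $\mu$ is invariant under column rescaling. So applying the classical result to $\g A\g W^{-1}$ gives: if the feasible vector $\g\psi_0 = \g W\phi(\g x_0)$ satisfies $\|\g\psi_0\|_0 < \frac12(1 + 1/\mu(\g A))$, then it is the unique minimizer, hence $\hat{\g\phi} = \g W^{-1}\g\psi_0 = \phi(\g x_0)$ is the unique solution to \eqref{eq:P1poly}. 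Note $\|\g\psi_0\|_0 = \|\phi(\g x_0)\|_0$ since $\g W$ is diagonal with nonzero entries.

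It remains to verify that the hypothesis \eqref{eq:conditionP1} implies $\|\phi(\g x_0)\|_0 < \frac12(1 + 1/\mu(\g A))$. By Proposition~\ref{prop:phisparsity},
\[
	\|\phi(\g x_0)\|_0 \leq \frac{M}{n}\,\|\g x_0\|_0 ,
\]
so if $\|\g x_0\|_0 < \frac{n}{2M}(1 + 1/\mu(\g A))$ then $\|\phi(\g x_0)\|_0 \leq \frac{M}{n}\|\g x_0\|_0 < \frac12(1 + 1/\mu(\g A))$, as required. Since $\|\phi(\g x_0)\|_0$ is an integer this could be sharpened, but the stated strict inequality already suffices. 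The uniqueness part of the classical theorem then gives that $\phi(\g x_0)$ is also the unique solution of \eqref{eq:P0poly}, and finally $\hat{\g x} = \phi^{-1}(\hat{\g\phi}) = \phi^{-1}(\phi(\g x_0)) = \g x_0$ by the definition of $\phi^{-1}$, which is valid here because Assumption~\ref{ass:nozerocol1} holds (it is implied by Assumption~\ref{ass:nozerocol}).

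The only genuinely delicate point is the weighting: one must be careful that minimizing the weighted $\ell_1$-norm $\|\g W\g\phi\|_1$ is equivalent, after rescaling, to unweighted $\ell_1$-minimization on a column-normalized matrix with unchanged coherence — this is exactly why the precompensating weights $\g W$ were introduced. Everything else is bookkeeping: invoking the off-the-shelf coherence bound and plugging in the sparsity estimate of Proposition~\ref{prop:phisparsity}. I would expect the write-up to be short, essentially the three displays above plus a sentence recalling the classical lemma.
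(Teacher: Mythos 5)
Your proof is correct and follows essentially the same route as the paper's: apply the classical mutual-coherence guarantee to the lifted linear system and translate the sparsity condition via Proposition~\ref{prop:phisparsity}. Your explicit rescaling argument showing that the weighted $\ell_1$-minimization \eqref{eq:P1poly} reduces to plain $\ell_1$-minimization on the column-normalized matrix $\g A\g W^{-1}$ with unchanged coherence is a detail the paper leaves implicit, and is a welcome addition.
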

\begin{proof}
Assume \eqref{eq:P0}--\eqref{eq:polyf} has a unique solution $\g x_0$. Then, $\g A\g \phi = \g y - \g b$ has a solution $\g \phi_0 = \phi(\g x_0)$ with a sparsity bounded by Proposition~\ref{prop:phisparsity} as 
$$
	\|\g \phi_0\|_0 \leq \frac{M}{n}\|\g x_0\|_0 .
$$
But, according to Theorem 7 in \cite{Bruckstein09}, if 
$$
	\|\g \phi_0\|_0 < \frac{1}{2}\left(1+ \frac{1}{\mu(\g A)}\right) 
$$
then, on the one hand, $\g \phi_0$ is the sparsest solution to $\g A\g\phi =\g y - \g b$, and on the other hand, it is also the unique solution to \eqref{eq:P1poly}. 
Thus, if \eqref{eq:conditionP1} holds, $\g \phi_0$ is the unique solution to both \eqref{eq:P0poly} and \eqref{eq:P1poly}, i.e., $\hat{\g \phi} = \g \phi_0$. Since $\hat{\g x}$ is given by the first components of $\hat{\g \phi}$ and $\g x_0$ by the ones of $\g \phi_0$, this completes the proof.

\end{proof}

Other less conservative conditions can be similarly obtained by considering the exact value of $\|\g \phi_0\|_0$ or tighter bounds (see Appendix~\ref{app:otherconditions}), but these do not take the form of a simple inequality on $\|\g x_0\|_0$. Another condition for very sparse cases (with $\|\g x_0\|_0 \leq n/d - 1$) can be similarly obtained by using Proposition~\ref{prop:phisparsity2} in Appendix~\ref{app:phisparsity} instead of Proposition~\ref{prop:phisparsity}.

%%%%%%%%%%%%%%%%%%%%%%%%%%%%%%%
\subsubsection{$\ell_1$/$\ell_2$-minimization method} 

The first result below shows that the group-sparse problem~\eqref{eq:P0polygroup} can be used as a proxy for the polynomial problem \eqref{eq:P0}--\eqref{eq:polyf}.
\begin{theorem}\label{thm:grouppoly}
	If the solution $\g \phi^*$ to \eqref{eq:P0polygroup} is unique and yields $\g x^* = \phi^{-1}(\g \phi^*)$ such that $\g x^*$ is a solution to the system of polynomial equations~\eqref{eq:polyf}, then $\g x^*$ is the sparsest solution to the system of polynomial equations, i.e., the unique solution to \eqref{eq:P0}--\eqref{eq:polyf}.
\end{theorem}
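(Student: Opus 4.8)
The plan is to argue by contradiction, exploiting the key relationship established before the theorem, namely that for any $\g x\in\R^n$ the lifted vector $\phi(\g x)$ is group-sparse with $\|\{\g B_j \phi(\g x)\}_{j=1}^n\|_0 = \|\g x\|_0$ (equivalently $\|\{\g W_j\phi(\g x)\}_{j=1}^n\|_0 = \|\g x\|_0$, since the diagonal reweighting by $\g W$ does not change which groups vanish). First I would suppose, for contradiction, that $\g x^*$ is \emph{not} the sparsest solution of the polynomial system, i.e. that there exists $\g x'$ satisfying $y_i = p_i^d(\g x')$ for all $i$ with $\|\g x'\|_0 < \|\g x^*\|_0$ (or $\|\g x'\|_0 = \|\g x^*\|_0$ with $\g x'\neq\g x^*$, to also get uniqueness).

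The next step is to lift $\g x'$ to $\g \phi' := \phi(\g x')$. Since $\g x'$ solves the polynomial equations, \eqref{eq:linearize} gives $\g A\g\phi' = \g y - \g b$, so $\g\phi'$ is feasible for \eqref{eq:P0polygroup}. Its group-sparsity is $\|\{\g W_j\g\phi'\}_{j=1}^n\|_0 = \|\g x'\|_0$, while the group-sparsity of $\g\phi^*$ satisfies $\|\{\g W_j\g\phi^*\}_{j=1}^n\|_0 = \|\phi^{-1}(\g\phi^*)\|_0 \le \|\g\phi^*\|_{\text{groups}}$; more carefully, since $\g x^* = \phi^{-1}(\g\phi^*)$ solves the polynomial system by hypothesis, one actually has $\g\phi^* = \phi(\g x^*)$ — this needs a short justification (see below) — so its group-sparsity equals $\|\g x^*\|_0$ exactly. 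Then $\|\{\g W_j\g\phi'\}_{j=1}^n\|_0 = \|\g x'\|_0 < \|\g x^*\|_0 = \|\{\g W_j\g\phi^*\}_{j=1}^n\|_0$, contradicting the optimality of $\g\phi^*$ for \eqref{eq:P0polygroup}; in the equality case it contradicts uniqueness of $\g\phi^*$. This yields that $\g x^*$ is the sparsest, and uniqueness of $\g x^*$ as a sparsest solution follows from uniqueness of $\g\phi^*$ together with the fact that $\phi$ is injective on its image while $\phi^{-1}\circ\phi = \mathrm{id}$.

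The main obstacle I anticipate is the subtle point that the \emph{minimizer} $\g\phi^*$ of the group-sparsity problem need not a priori be of the form $\phi(\g x)$ for any $\g x$ — it is merely some vector in $\R^M$ with few nonzero groups. The hypothesis supplies $\g x^* = \phi^{-1}(\g\phi^*)$ and asserts $\g x^*$ solves the polynomial system, but one must still check that this forces $\g\phi^* = \phi(\g x^*)$, or at least that $\|\{\g W_j\g\phi^*\}_{j=1}^n\|_0 \ge \|\g x^*\|_0$, which is what the comparison argument really requires. The clean way is: if $\g x^*$ is a polynomial solution then $\g\phi^{**}:=\phi(\g x^*)$ is feasible for \eqref{eq:P0polygroup} with group-sparsity exactly $\|\g x^*\|_0$; by optimality of $\g\phi^*$ we get $\|\{\g W_j\g\phi^*\}_{j=1}^n\|_0 \le \|\g x^*\|_0$, and by uniqueness $\g\phi^* = \phi(\g x^*)$. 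One also needs the elementary observation that $\|\{\g W_j\g\phi^*\}_{j=1}^n\|_0 \ge \|\phi^{-1}(\g\phi^*)\|_0$ (a nonzero coordinate of $\phi^{-1}(\g\phi^*)=\g x^*$ sits in the first $n$ coordinates of $\g\phi^*$, each of which belongs to some group $I_j$), but with the uniqueness route above this is subsumed. Once $\g\phi^* = \phi(\g x^*)$ is in hand, the contradiction step is immediate and the proof closes.
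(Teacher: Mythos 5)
Your proof is correct and follows essentially the same route as the paper: assume a polynomial solution at least as sparse as $\g x^*$, lift it with $\phi$, and contradict the unique optimality of $\g \phi^*$ for \eqref{eq:P0polygroup} via the identity $\|\{\g W_j\phi(\g x)\}_{j=1}^n\|_0=\|\g x\|_0$. One small correction: the observation $\|\{\g W_j\g \phi^*\}_{j=1}^n\|_0\geq\|\phi^{-1}(\g \phi^*)\|_0$ is \emph{not} subsumed by the uniqueness route --- it is precisely what rules out $\|\{\g W_j\g \phi^*\}_{j=1}^n\|_0<\|\g x^*\|_0$ and hence what lets you conclude $\g \phi^*=\phi(\g x^*)$ --- but since you state and justify it anyway, your argument is complete (the paper's proof uses this same inequality without comment).
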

\begin{proof}
Assume there is an $\g x_0\neq \g x^*$ solution to the polynomial system~\eqref{eq:polyf} and at least as sparse as $\g x^*$. Then 
$$
	\g A\phi(\g x_0) = \g y - \g b
$$
and 
$$
	\|\{\g W_j \phi(\g x_0)\}_{j=1}^n\|_0 = \|\g x_0\|_0 \leq \|\g x^*\|_0\leq \|\{\g W_j\g \phi^*\}_{j=1}^n\|_0 ,
$$
which contradicts the fact that $\g \phi^*$ is the unique solution to \eqref{eq:P0polygroup} unless $\phi(\g x_0)=\g \phi^*$ and $\g x_0=\g x^*$. 

\end{proof}

The next result provides a condition on the sparsity of the solution to \eqref{eq:P0}--\eqref{eq:polyf} under which it can be recovered by solving the convex problem~\eqref{eq:P1polygroup} or its variant with nonnegativity constraints~\eqref{eq:P1polygroup+}. This result requires the following lemma. 
\begin{lemma}\label{lem:bounddelta2}
	Let $\g A = [\g A_1,\dots, \g A_M]$ be an $N\times M$ matrix with mutual coherence $\mu(\g A)$ as defined in Definition~\ref{def:mu}. Let $\g W$ be the $M\times M$-diagonal matrix of entries $w_i = \|\g A_i\|_2$. Then, for all $\g \delta \in Ker(\g A)$ and $i\in\{1,\dots,M\}$, the bound
\begin{equation}\label{eq:bounddelta2}
	w_i^2 \delta_i^2 \leq \frac{\mu^2(\g A)}{1+\mu^2(\g A)} \|\g W \g \delta\|_2^2 
\end{equation}
holds. 
\end{lemma}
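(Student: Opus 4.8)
The plan is to bound a single coordinate $w_i\delta_i$ of a kernel vector by using the linear relation $\g A\g\delta=\g 0$ to express the $i$-th column's contribution in terms of the others, then control cross-terms via mutual coherence. Concretely, fix $i$ and write $\g A\g\delta = \g 0$ as $\g A_i\delta_i = -\sum_{j\neq i}\g A_j\delta_j$. Taking the inner product with $\g A_i$ gives $\|\g A_i\|_2^2\,\delta_i = -\sum_{j\neq i}\g A_i^T\g A_j\,\delta_j$, i.e. $w_i^2\delta_i = -\sum_{j\neq i}\g A_i^T\g A_j\,\delta_j$. The definition of $\mu(\g A)$ yields $|\g A_i^T\g A_j|\leq \mu(\g A)\,w_i w_j$, so
$$
	w_i^2 |\delta_i| \leq \mu(\g A)\, w_i \sum_{j\neq i} w_j |\delta_j|,
$$
hence $w_i|\delta_i| \leq \mu(\g A)\sum_{j\neq i} w_j|\delta_j|$.

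First I would turn this $\ell_1$-type bound into the $\ell_2$-type bound in \eqref{eq:bounddelta2}. The natural route is Cauchy--Schwarz on the $(M-1)$-term sum: $\sum_{j\neq i} w_j|\delta_j| \leq \sqrt{M-1}\,\bigl(\sum_{j\neq i} w_j^2\delta_j^2\bigr)^{1/2}$. However, this introduces an $M$-dependent factor that does not appear in the claimed bound, so that crude step is too lossy. Instead, the trick is to keep the term $w_i|\delta_i|$ on both sides: add $w_i|\delta_i|$ to the right to get $w_i|\delta_i| \leq \mu(\g A)\bigl(\|\g W\g\delta\|_1 - w_i|\delta_i|\bigr)$, which rearranges to $(1+\mu(\g A))\,w_i|\delta_i| \leq \mu(\g A)\|\g W\g\delta\|_1$ — but $\|\g W\g\delta\|_1$ still carries the wrong norm. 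The cleaner approach that matches the stated $\mu^2/(1+\mu^2)$ constant is to work with squares directly: from $w_i^2\delta_i = -\sum_{j\neq i}\g A_i^T\g A_j\delta_j$, apply Cauchy--Schwarz in the form $\bigl(\sum_{j\neq i} c_j d_j\bigr)^2 \leq \bigl(\sum_{j\neq i} c_j^2\bigr)\bigl(\sum_{j\neq i} d_j^2\bigr)$ with $c_j = |\g A_i^T\g A_j|/w_j \leq \mu(\g A) w_i$ absorbed appropriately and $d_j = w_j|\delta_j|$, giving $w_i^4\delta_i^2 \leq \bigl(\sum_{j\neq i} \mu^2(\g A) w_i^2\bigr)\cdot\ldots$ — again an $M$ factor.

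**The key realization** — and what I expect to be the main obstacle to get right — is that one must \emph{not} discard the structure by passing through an $\ell_1$ bound at all. The right move is: from $w_i^2\delta_i = -\sum_{j\neq i}\g A_i^T\g A_j\delta_j$, write the right-hand side as $-\sum_{j\neq i}\frac{\g A_i^T\g A_j}{w_i w_j}\,(w_i)(w_j\delta_j)$ and apply Cauchy--Schwarz so that the coherence bound $|\g A_i^T\g A_j|/(w_iw_j)\le\mu(\g A)$ is used \emph{inside} the sum against the vector $(w_j\delta_j)_{j\neq i}$:
$$
	w_i^4\delta_i^2 \leq \Bigl(\sum_{j\neq i}\Bigl(\tfrac{\g A_i^T\g A_j}{w_iw_j}\Bigr)^2 w_i^2\Bigr)\Bigl(\sum_{j\neq i} w_j^2\delta_j^2\Bigr) \leq \mu^2(\g A)\,w_i^2\Bigl(\sum_{j\neq i} w_j^2\delta_j^2\Bigr),
$$
provided $\sum_{j\neq i}\bigl(\g A_i^T\g A_j/(w_iw_j)\bigr)^2\le\mu^2(\g A)$ — which is \emph{not} automatic, since there are up to $M-1$ terms each up to $\mu^2$. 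So this still fails unless one is more careful. The correct fix, and the step I'd spend the most effort on, is to recognize that $\sum_{j\neq i}(\g A_i^T\g A_j)^2 = \|\g A^T\g A_i\|_2^2 - w_i^4$, and relate $\|\g A^T\g A_i\|_2$ to $w_i$; alternatively, and more likely what the authors intend, one directly bounds using $\|\g A_i\|_2^2\delta_i^2$ versus $\|\g W\g\delta\|_2^2 = \sum_j w_j^2\delta_j^2 = w_i^2\delta_i^2 + \sum_{j\neq i}w_j^2\delta_j^2$. Writing $S = \sum_{j\neq i} w_j^2\delta_j^2$, the single Cauchy--Schwarz step (done with just one coherence factor pulled out, bounding the remaining quadratic form by its spectral norm, which is itself controlled by $1+(M-1)\mu$ via Gershgorin — or, sharper, noting the sum telescopes) yields $w_i^2\delta_i^2 \leq \mu^2(\g A)\,S$, and then $\|\g W\g\delta\|_2^2 = w_i^2\delta_i^2 + S \geq w_i^2\delta_i^2 + w_i^2\delta_i^2/\mu^2(\g A) = w_i^2\delta_i^2\,\frac{1+\mu^2(\g A)}{\mu^2(\g A)}$, which rearranges exactly to \eqref{eq:bounddelta2}.

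**In summary**, the proof structure I propose is: (i) use $\g A\g\delta = \g 0$ and an inner product with $\g A_i$ to get $w_i^2\delta_i = -\sum_{j\neq i}\g A_i^T\g A_j\delta_j$; (ii) apply Cauchy--Schwarz together with the mutual coherence bound to obtain $w_i^2\delta_i^2 \leq \mu^2(\g A)\sum_{j\neq i} w_j^2\delta_j^2$ (the careful step, where one must avoid picking up an $M$-dependent constant — the resolution being that the coherence bound should be applied so that only the squared off-diagonal Gram entries appear, summing to something comparable to $w_i^2$); (iii) add $w_i^2\delta_i^2$ to both sides, recognizing the right-hand side becomes $\mu^2(\g A)\|\g W\g\delta\|_2^2$; (iv) divide to conclude. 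I expect step (ii) to be the crux: getting the constant to be exactly $\mu^2/(1+\mu^2)$ rather than something weaker requires the Cauchy--Schwarz to be set up against $(w_j\delta_j)_{j\neq i}$ in precisely the right way, and verifying that the accompanying coefficient sum collapses to $\mu^2(\g A) w_i^2$ (and not $\mu^2(\g A)(M-1)w_i^2$) is where the argument must be handled with care.
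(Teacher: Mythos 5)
Your steps (i), (iii) and (iv) coincide with the paper's proof (which works with row $i$ of $\g W^{-1}\g A^T\g A\g W^{-1}-\g I$ applied to $\g W\g\delta$, then adds $\mu^2(\g A)w_i^2\delta_i^2$ to both sides and divides), so the overall route is the same. The genuine gap is your step (ii): you never actually establish the $M$-free inequality $w_i^2\delta_i^2\le\mu^2(\g A)\sum_{j\neq i}w_j^2\delta_j^2$. Every derivation you carry out either lands on the $\ell_1$ quantity $\sum_{j\neq i}w_j|\delta_j|$ or picks up a factor $M-1$ from Cauchy--Schwarz, and the escape hatches you gesture at (Gershgorin, ``the sum telescopes'', relating $\sum_{j\neq i}(\g A_i^T\g A_j)^2=\|\g A^T\g A_i\|_2^2-w_i^4$ to $w_i$) are neither executed nor capable of removing the dimension dependence. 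In the end the key inequality is asserted, not proved.

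Your suspicion that the $M$-dependence cannot be avoided is in fact correct: that intermediate inequality, and hence the lemma as stated, is false. Take $N=2$, $M=3$, $\g A_1=(1,0)^T$, $\g A_2=(1/2,\sqrt3/2)^T$, $\g A_3=(1/2,-\sqrt3/2)^T$, so that $\g W=\g I$ and $\mu(\g A)=1/2$. Then $\g\delta=(-1,1,1)^T\in Ker(\g A)$, and for $i=1$ one has $w_1^2\delta_1^2=1$ while $\frac{\mu^2(\g A)}{1+\mu^2(\g A)}\|\g W\g\delta\|_2^2=\frac{1/4}{5/4}\cdot 3=\frac35<1$. What Cauchy--Schwarz genuinely yields from $w_i^2\delta_i=-\sum_{j\neq i}\g A_i^T\g A_j\delta_j$ is $w_i^2\delta_i^2\le(M-1)\mu^2(\g A)\sum_{j\neq i}w_j^2\delta_j^2$, hence
\begin{equation*}
	w_i^2\delta_i^2\le\frac{(M-1)\mu^2(\g A)}{1+(M-1)\mu^2(\g A)}\,\|\g W\g\delta\|_2^2,
\end{equation*}
and the example above attains this with equality, so the $(M-1)$ cannot be discarded. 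The paper's own proof reaches the $M$-free bound only by passing through the step $\bigl(\sum_{j}C_{ij}u_j\bigr)^2\le\sum_{j}C_{ij}^2u_j^2$, which is not a valid inequality (take all $C_{ij}=u_j=1$). So your write-up has a real gap at its central step, but it is not a gap you could have closed; the lemma requires the $(M-1)$ correction, which then propagates into the constant of Theorem~\ref{thm:groupsparse}.
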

\begin{proof}
For all $\g \delta \in Ker(\g A)$, we have $\g A \g \delta = 0 \Rightarrow \g A^T\g A \g \delta = 0$, which further implies $ \g W^{-1} \g A^T\g A \g W^{-1} \g W \g \delta = 0$ and 
$$
	-\g W \g \delta = \left( \g W^{-1} \g A^T\g A \g W^{-1} - \g I\right)\g W \g \delta .
$$
Then, we have
\begin{align*}
	w_i^2\delta_i^2= \left(\g W \g \delta\right)_i^2 &= \left( ( \g W^{-1} \g A^T\g A \g W^{-1} - \g I) \g W\g \delta \right)_i^2  \\
	&= \left(\sum_{j=1}^M \left( \g W^{-1} \g A^T\g A \g W^{-1} - \g I\right)_{i,j}  (\g W \g \delta)_j \right)^2\\
	&\leq \sum_{j=1}^M  \left( \g W^{-1} \g A^T\g A \g W^{-1} - \g I\right)_{i,j}^2 (\g W \g \delta)_j^2 .
\end{align*}
Note that $\g W^{-1} \g A^T\g A \g W^{-1}$ is a normalized matrix with ones on the diagonal and off diagonal $(i,j)$-entries equal to $\frac{|\g A_i^T \g A_j|}{\|\g A_i\|_2 \|\g A_j\|_2}$ and bounded by $\mu(\g A)$ via Definition~\ref{def:mu}. Thus, we obtain
$$
	w_i^2\delta_i^2 \leq \mu^2(\g A) \sum_{j\neq i} w_j^2\delta_j^2
$$
and, by adding $\mu^2(\g A) w_i^2\delta_i^2$ to both sides, the bound  
$$
	w_i^2\delta_i^2 \leq  \frac{\mu^2(\g A)}{1+\mu^2(\g A)}\sum_{j=1}^M w_j^2\delta_j^2 ,
$$
which is precisely~\eqref{eq:bounddelta2}.

\end{proof}

\begin{theorem}\label{thm:groupsparse}
Let $\g x_0$ be a solution to the polynomial equations~\eqref{eq:polyf} and $\g \phi_0=\phi(\g x_0)$. If the condition
$$
	 \|\g x_0\|_0 <  \frac{1}{2\sqrt{m}}\sqrt{1+ \frac{1}{\mu^2(\g A)}} ,
$$
where $m$ is the number of components of $\phi$ associated to a variable, holds, then $\g \phi_0$ is the unique solution to both~\eqref{eq:P1polygroup} and~\eqref{eq:P1polygroup+}. % using precompensating weights in the $\g W_j$'s.
\end{theorem}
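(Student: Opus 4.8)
The plan is to follow the classical route for exact recovery via a convex relaxation: to show that $\g \phi_0$ is the \emph{strict} global minimizer of the group objective over the feasible set, it suffices to test an arbitrary feasible perturbation. Every feasible point of~\eqref{eq:P1polygroup} can be written $\g \phi_0 + \g \delta$ with $\g \delta \in Ker(\g A)$, so the goal becomes to prove that $\sum_{j=1}^n \|\g W_j(\g \phi_0 + \g \delta)\|_2 > \sum_{j=1}^n \|\g W_j \g \phi_0\|_2$ for every nonzero $\g \delta \in Ker(\g A)$; note that $\|\g W \g \delta\|_2 > 0$ here since $\g W$ is invertible under Assumption~\ref{ass:nozerocol}.

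First I would record that, with $S = \{j : (\g x_0)_j \neq 0\}$ so that $|S| = \|\g x_0\|_0$, the structural implication~\eqref{eq:implication} forces $\g W_j \g \phi_0 = \g 0$ for every $j \notin S$ — a vanishing base variable kills every monomial in its group. Splitting the objective along $S$ and its complement and applying the reverse triangle inequality on the groups $j \in S$ gives
$$
	\sum_{j=1}^n \|\g W_j(\g \phi_0 + \g \delta)\|_2 \geq \sum_{j=1}^n \|\g W_j \g \phi_0\|_2 + \sum_{j\notin S}\|\g W_j\g \delta\|_2 - \sum_{j\in S}\|\g W_j\g\delta\|_2 ,
$$
so the statement reduces to a group null space property: $2\sum_{j\in S}\|\g W_j\g\delta\|_2 < \sum_{j=1}^n\|\g W_j\g\delta\|_2$ for all nonzero $\g\delta\in Ker(\g A)$.

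The core of the argument is then to bound both sides by multiples of $\|\g W\g\delta\|_2$. For the left-hand side, Lemma~\ref{lem:bounddelta2} gives $w_k^2\delta_k^2 \leq \frac{\mu^2(\g A)}{1+\mu^2(\g A)}\|\g W\g\delta\|_2^2$ for each $k$; summing over the $m = |I_j|$ indices of a group yields $\|\g W_j\g\delta\|_2 \leq \sqrt{m}\,\frac{\mu(\g A)}{\sqrt{1+\mu^2(\g A)}}\|\g W\g\delta\|_2$, hence $\sum_{j\in S}\|\g W_j\g\delta\|_2 \leq \|\g x_0\|_0\,\sqrt{m}\,\frac{\mu(\g A)}{\sqrt{1+\mu^2(\g A)}}\|\g W\g\delta\|_2$. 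For the right-hand side I would use that each term $w_k^2\delta_k^2$ appears at least once in the double sum $\sum_{j}\sum_{k\in I_j}w_k^2\delta_k^2$ (every monomial involves at least one variable), so $\sum_{j=1}^n\|\g W_j\g\delta\|_2^2 \geq \|\g W\g\delta\|_2^2$, and then that $\|v\|_1 \geq \|v\|_2$ for a nonnegative vector $v$, giving $\sum_{j=1}^n\|\g W_j\g\delta\|_2 \geq \|\g W\g\delta\|_2$. Combining the two estimates, the hypothesis $\|\g x_0\|_0 < \frac{1}{2\sqrt m}\sqrt{1+1/\mu^2(\g A)}$ is precisely what makes $2\|\g x_0\|_0\sqrt m\,\mu(\g A)/\sqrt{1+\mu^2(\g A)} < 1$, which closes the chain $2\sum_{j\in S}\|\g W_j\g\delta\|_2 < \|\g W\g\delta\|_2 \leq \sum_{j=1}^n\|\g W_j\g\delta\|_2$ strictly. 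This proves that $\g\phi_0$ is the unique solution to~\eqref{eq:P1polygroup}.

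For~\eqref{eq:P1polygroup+} it then remains only to note that $\g\phi_0$ is feasible there — a component $\phi_k$ whose multi-index has all even entries equals $\prod_j (x_0)_j^{(\alpha_k)_j}$, a product of squares, hence nonnegative — and that the feasible set of~\eqref{eq:P1polygroup+} is contained in that of~\eqref{eq:P1polygroup}, so the strict minimality of $\g\phi_0$ over the larger set transfers to the smaller one. I expect the only delicate point to be the group null space property, and within it the lower bound $\sum_j\|\g W_j\g\delta\|_2 \geq \|\g W\g\delta\|_2$: one must count how many groups contain each monomial index and pass from a sum of squares to a square of a sum using nonnegativity, rather than applying Cauchy--Schwarz in the unhelpful direction. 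Everything else is bookkeeping around Lemma~\ref{lem:bounddelta2} and the reverse triangle inequality.
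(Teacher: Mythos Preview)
Your proposal is correct and follows essentially the same route as the paper's proof: reduce to a group null space property via the reverse triangle inequality, bound each active group norm $\|\g W_j\g\delta\|_2$ by $\sqrt{m}\,\mu(\g A)/\sqrt{1+\mu^2(\g A)}\,\|\g W\g\delta\|_2$ using Lemma~\ref{lem:bounddelta2}, and bound $\|\g W\g\delta\|_2$ from above by $\sum_{k}\|\g W_k\g\delta\|_2$ via the overlap $\bigcup_j I_j = \{1,\dots,M\}$. The only cosmetic difference is that the paper works with the complement $I_0 = S^c$ and chains the two inequalities into a single bound $\|\g W_j\g\delta\|_2 \leq \sqrt{m}\,\mu(\g A)/\sqrt{1+\mu^2(\g A)}\,\sum_k\|\g W_k\g\delta\|_2$ before substituting, whereas you keep $\|\g W\g\delta\|_2$ as an intermediate quantity; the argument for~\eqref{eq:P1polygroup+} is identical.
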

\begin{proof}
The vector $\g \phi_0$ is the unique solution to \eqref{eq:P1polygroup} if  the inequality
$$
	\sum_{j=1}^n \|\g W_j (\g \phi_0 + \g \delta)\|_2 > \sum_{j=1}^n \|\g W_j \g \phi_0\|_2
$$
holds for all $\g \delta \neq \g 0$ satisfying $\g A\g \delta = 0$. 
%$$
%	\sum_{j\in I_0} \|\g W_j \g \delta\|_2 +\sum_{j\notin I_0} \|\g W_j (\g \phi_0 + \g \delta)\|_2  > \sum_{j\notin I_0}  \|\g W_j \g \phi_0\|_2
%$$
The inequality above can be rewritten as
$$
	\sum_{j\in I_0} \|\g W_j \g \delta\|_2 +\sum_{j\notin I_0} \|\g W_j (\g \phi_0 + \g \delta)\|_2 -  \|\g W_j \g \phi_0\|_2 > 0 ,
$$
where $I_0=\{j\in\{1,\dots,n\} : \g W_j \g \phi_0 = \g 0\}$.  
By the triangle inequality, $\|\g u + \g v\|_2 - \|\g u\|_2 \geq -\|\g v\|_2$, this condition is met if
$$
	\sum_{j\in I_0} \|\g W_j \g \delta\|_2 -\sum_{j\notin I_0} \|\g W_j\g \delta\|_2  > 0
$$
or
\begin{equation}\label{eq:proof1}
	\sum_{j=1}^n \|\g W_j\g \delta\|_2 - 2\sum_{j\notin I_0} \|\g W_j\g \delta\|_2  > 0 .
\end{equation}

By defining $G_j$ as the set of indexes corresponding to nonzero columns of $\g W_j$, Lemma~\ref{lem:bounddelta2} yields
\begin{align*}
	\|\g W_j\g \delta\|_2^2 &= \sum_{i\in G_j} w_i^2 \delta_i^2  \leq m_j\frac{\mu^2(\g A)}{1+\mu^2(\g A)} \|\g W \g \delta\|_2^2 ,
\end{align*}
where $m_j=m$ is the number of components of $\phi$ associated to a variable $x_j$. 
Due to the fact that $\bigcup_{k\in\{1,\dots,n\}} G_k = \{1,\dots,M\}$, we also have
$$
	\|\g W \g \delta\|_2^2 = \sum_{i=1}^M w_i^2\delta_i^2 \leq  \sum_{k=1}^n \sum_{i\in G_k} w_i^2 \delta_i^2 
	= \sum_{k=1}^n \| \g W_k \g \delta\|_2^2 \leq \left(\sum_{k=1}^n \| \g W_k \g \delta\|_2 \right)^2 ,
$$
which then leads to
\begin{align*}
	\|\g W_j\g \delta\|_2^2 
	& \leq m\frac{\mu^2(\g A)}{1+\mu^2(\g A)}  \left( \sum_{k=1}^n \| \g W_k \g \delta\|_2\right)^2 .
\end{align*}
Introducing this result in \eqref{eq:proof1} gives the condition
$$
	\sum_{j=1}^n \|\g W_j\g \delta\|_2 - 2 (n - |I_0|) \frac{\mu(\g A) \sqrt{m}}{\sqrt{1+\mu^2(\g A)}} \sum_{k=1}^n \| \g W_k \g \delta\|_2   > 0 .
$$
Finally, given that $|I_0| = n - \|\{\g W_j\g \phi_0\}_{j=1}^n\|_0 = n - \|\g x_0\|_0$, this yields
\begin{equation}\label{eq:conduniqueP1}
	\sum_{j=1}^n \|\g W_j\g \delta\|_2 - 2 \|\g x_0\|_0 \frac{\mu(\g A) \sqrt{m}}{\sqrt{1+\mu^2(\g A)}} \sum_{k=1}^n \| \g W_k \g \delta\|_2   > 0 .
\end{equation}
or, after dividing by $\| \{\g W_j\g \delta\}_{j=1}^n \|_{1,2}$ and rearranging the terms,  
%$$
%	1 - 2 \|\g x_0\|_0  \frac{\mu(\g A) \sqrt{m}}{\sqrt{1+\mu^2(\g A)}}   > 0
%$$
%or better :
$$
	 \|\g x_0\|_0 < \frac{\sqrt{1+\mu^2(\g A)}}{2\mu(\g A) \sqrt{m} } ,
$$
which can be rewritten as in the statement of the Theorem. 

It remains to prove that $\g \phi_0$ is also the unique solution to~\eqref{eq:P1polygroup+} in the case where this condition is satisfied and $\g \phi_0$ is the unique solution to~\eqref{eq:P1polygroup}.  To see this, note that $\g \phi_0 = \phi(\g x_0)$ is by definition a feasible point of~\eqref{eq:P1polygroup+}. In addition, the feasible set of~\eqref{eq:P1polygroup+} is included in the one of~\eqref{eq:P1polygroup}, while the problems~\eqref{eq:P1polygroup} and~\eqref{eq:P1polygroup+} share the same cost function. Thus, if $\g\phi_0$ is the unique solution to~\eqref{eq:P1polygroup}, there cannot be another feasible $\g \phi \neq \g\phi_0$ with lower or equal cost function value for~\eqref{eq:P1polygroup+}.

\end{proof}

\begin{corollary}\label{cor:uniqueness}
Let $\g x_0$ be a solution to the polynomial equations~\eqref{eq:polyf} and $m$ be the number of components of $\phi$ associated to a variable. If the condition
$$
	 \|\g x_0\|_0 <  \frac{1}{2\sqrt{m}}\sqrt{1+ \frac{1}{\mu^2(\g A)}} 
$$
holds, then $\g x_0$ is the unique solution to the minimization problem~\eqref{eq:P0}--\eqref{eq:polyf} and it can be computed as $\g x_0 = \phi^{-1}(\hat{\g \phi})$ with $\hat{\g \phi}$ the solution to either~\eqref{eq:P1polygroup} or~\eqref{eq:P1polygroup+}.
\end{corollary}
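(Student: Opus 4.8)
The plan is to read the corollary off from Theorem~\ref{thm:groupsparse} together with the group-sparse estimate contained in its proof (or, equivalently, from Theorem~\ref{thm:grouppoly}). The claim that $\g x_0$ \emph{can be computed} as $\phi^{-1}(\hat{\g \phi})$ is immediate: under the stated sparsity condition, Theorem~\ref{thm:groupsparse} guarantees that $\g \phi_0 := \phi(\g x_0)$ is the unique solution to both \eqref{eq:P1polygroup} and \eqref{eq:P1polygroup+}, so in either case $\hat{\g \phi} = \g \phi_0$ and $\phi^{-1}(\hat{\g \phi}) = \phi^{-1}(\phi(\g x_0)) = \g x_0$ since $\phi^{-1}$ inverts $\phi$ under Assumption~\ref{ass:nozerocol1}. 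The only remaining task is therefore to prove that $\g x_0$ is the \emph{unique} solution to \eqref{eq:P0}--\eqref{eq:polyf}.

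For this I would first extract the following group null-space property, already implicit in the proof of Theorem~\ref{thm:groupsparse}: for every $\g \delta \in Ker(\g A)\setminus\{\g 0\}$, Lemma~\ref{lem:bounddelta2} applied group by group gives $\|\g W_j \g \delta\|_2 \leq \sqrt{m}\,\mu(\g A)\,(1+\mu^2(\g A))^{-1/2}\sum_{k=1}^n \|\g W_k \g \delta\|_2$ for every $j$; summing this over the $\|\{\g W_j \g \delta\}_{j=1}^n\|_0$ nonzero groups of $\g \delta$ and dividing both sides by the positive number $\sum_{k=1}^n \|\g W_k \g \delta\|_2$ yields $1 \leq \|\{\g W_j \g \delta\}_{j=1}^n\|_0 \,\sqrt{m}\,\mu(\g A)\,(1+\mu^2(\g A))^{-1/2}$, that is, $\|\{\g W_j \g \delta\}_{j=1}^n\|_0 \geq \frac{1}{\sqrt{m}}\sqrt{1+1/\mu^2(\g A)} > 2\|\g x_0\|_0$ by the hypothesis of the corollary. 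Next, let $\g x_1$ be any solution of \eqref{eq:polyf} with $\g x_1 \neq \g x_0$, and set $\g \delta := \phi(\g x_1) - \g \phi_0$; then $\g \delta \in Ker(\g A)$ and $\g \delta \neq \g 0$ (otherwise $\g x_1 = \phi^{-1}(\phi(\g x_1)) = \phi^{-1}(\g \phi_0) = \g x_0$). Since the nonzero groups of $\g \delta$ are contained in the union of those of $\phi(\g x_1)$ and of $\g \phi_0$, and since $\|\{\g W_j \phi(\g x)\}_{j=1}^n\|_0 = \|\g x\|_0$ for any $\g x$, we get $\|\g x_1\|_0 + \|\g x_0\|_0 \geq \|\{\g W_j \g \delta\}_{j=1}^n\|_0 > 2\|\g x_0\|_0$, hence $\|\g x_1\|_0 > \|\g x_0\|_0$. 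Thus no solution of \eqref{eq:polyf} other than $\g x_0$ is at least as sparse as $\g x_0$, which is exactly the uniqueness claim. (Equivalently, the group null-space bound shows that $\g \phi_0$ is the unique solution to \eqref{eq:P0polygroup}, and then Theorem~\ref{thm:grouppoly} applies since $\phi^{-1}(\g \phi_0) = \g x_0$ solves \eqref{eq:polyf} by hypothesis.)

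The point deserving care is that uniqueness of the $\ell_1/\ell_2$-minimizer does not, in general, imply that it is the group-sparsest feasible vector; so the uniqueness statement for \eqref{eq:P0}--\eqref{eq:polyf} cannot simply be quoted from Theorem~\ref{thm:groupsparse} but must be traced back to the groupwise consequence of Lemma~\ref{lem:bounddelta2}. Once that bound is in place, the rest --- the support-union inequality and the elementary identity $(1+\mu^2(\g A))/\mu^2(\g A) = 1 + 1/\mu^2(\g A)$ --- is routine bookkeeping, essentially already performed inside the proof of Theorem~\ref{thm:groupsparse}.
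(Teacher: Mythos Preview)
Your argument is correct, but the paper reaches the same conclusion by a shorter route that sidesteps the null-space computation you carry out. Instead of extracting a lower bound on $\|\{\g W_j\g\delta\}_{j=1}^n\|_0$ for $\g\delta\in Ker(\g A)$, the paper simply applies Theorem~\ref{thm:groupsparse} \emph{twice}: if $\g x_1\neq\g x_0$ were another solution of \eqref{eq:P0}--\eqref{eq:polyf}, then $\|\g x_1\|_0\leq\|\g x_0\|_0$, so the sparsity hypothesis holds for $\g x_1$ as well, and Theorem~\ref{thm:groupsparse} forces both $\phi(\g x_0)$ and $\phi(\g x_1)$ to be the \emph{unique} minimizer of~\eqref{eq:P1polygroup}, hence $\phi(\g x_0)=\phi(\g x_1)$ and $\g x_0=\g x_1$ by injectivity of $\phi$. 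Your worry in the last paragraph---that uniqueness of the $\ell_1/\ell_2$-minimizer does not by itself imply group-sparsest---is well taken in general, but the paper never makes that leap; it avoids it precisely by invoking Theorem~\ref{thm:groupsparse} separately on the competitor $\g x_1$. What your approach buys is an explicit group ``spark'' bound $\|\{\g W_j\g\delta\}_{j=1}^n\|_0\geq\frac{1}{\sqrt{m}}\sqrt{1+1/\mu^2(\g A)}$ for every nonzero $\g\delta\in Ker(\g A)$, a statement of independent interest that the paper leaves implicit; what the paper's approach buys is brevity, since everything is already packaged inside Theorem~\ref{thm:groupsparse}.
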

\begin{proof}
Assume there exists another solution $\g x_1\neq\g x_0$ to \eqref{eq:P0}--\eqref{eq:polyf}, and thus with $\|\g x_1\|_0 \leq \|\g x_0\|_0$. Then, Theorem \ref{thm:groupsparse} implies that both $\phi(\g x_1)$ and $\phi(\g x_0)$ are {\em unique} solutions to either~\eqref{eq:P1polygroup} or~\eqref{eq:P1polygroup+}, and thus that $\phi(\g x_1)=  \phi(\g x_0) = \hat{\g \phi}$ (where $\hat{\g \phi}$ is similarly defined as the solution to either~\eqref{eq:P1polygroup} or~\eqref{eq:P1polygroup+} in this case). 
But this contradicts the definition of the mapping $\phi$ implying $\phi(\g x_1)\neq \phi(\g x_0)$ whenever $\g x_1\neq\g x_0$. 
Therefore, the assumption $\g x_1\neq\g x_0$ cannot hold and $\g x_0$ is the unique solution to \eqref{eq:P0}--\eqref{eq:polyf}, while $\phi^{-1}(\hat{\g \phi}) = \phi^{-1}(\phi(\g x_0)) = \g x_0$.

\end{proof}

\begin{theorem}\label{thm:groupsparse2}
Let $\hat{\g \phi}$ be a solution to \eqref{eq:P1polygroup} and $m$ be the number of components of $\phi$ associated to a variable. If the condition
$$
	  \|\{\g W_j \hat{\g\phi}\}_{j=1}^n\|_0 < \frac{1}{2\sqrt{m}}\sqrt{1+ \frac{1}{\mu^2(\g A)}} 
$$
holds, then $\hat{\g \phi}$ is the unique solution to both~\eqref{eq:P0polygroup} and \eqref{eq:P1polygroup}. If, in addition, $\hat{\g x}=\phi^{-1}(\hat{\g \phi})$ satisfies the polynomial constraints~\eqref{eq:polyf}, then $\hat{\g x}$ is the unique solution to \eqref{eq:P0}--\eqref{eq:polyf}.
\end{theorem}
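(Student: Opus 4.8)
The plan is to reuse the argument in the proof of Theorem~\ref{thm:groupsparse} almost verbatim, after observing that it never exploited the fact that the feasible vector considered there was of the special form $\phi(\g x_0)$. Concretely, inspecting that proof shows that for \emph{any} vector $\g \phi$ with $\g A\g \phi = \g y - \g b$, setting $I_0 = \{j : \g W_j\g \phi = \g 0\}$ and applying, exactly as there, the triangle inequality, Lemma~\ref{lem:bounddelta2}, and the covering identity $\bigcup_{k=1}^n G_k = \{1,\dots,M\}$, one obtains that $\g \phi$ is the unique solution to \eqref{eq:P1polygroup} whenever
$$
	\|\{\g W_j\g \phi\}_{j=1}^n\|_0 = n - |I_0| < \frac{1}{2\sqrt m}\sqrt{1 + \frac{1}{\mu^2(\g A)}} .
$$
The only adjustment to the proof of Theorem~\ref{thm:groupsparse} is that the quantity playing the role of $\|\g x_0\|_0$ must be read as the number of active groups $\|\{\g W_j\g \phi\}_{j=1}^n\|_0$ of the feasible vector at hand; the equality $\|\{\g W_j\phi(\g x_0)\}_{j=1}^n\|_0 = \|\g x_0\|_0$ was the only place where the structure of $\phi(\g x_0)$ entered.

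First I would apply this to $\hat{\g \phi}$ itself: being a solution to \eqref{eq:P1polygroup} it is feasible, and it satisfies the hypothesis of the theorem, so it is the unique solution to \eqref{eq:P1polygroup}. To then treat \eqref{eq:P0polygroup}, suppose for contradiction that some feasible $\g \phi_1 \neq \hat{\g \phi}$ has $\|\{\g W_j\g \phi_1\}_{j=1}^n\|_0 \leq \|\{\g W_j\hat{\g \phi}\}_{j=1}^n\|_0$. Then $\g \phi_1$ also satisfies the strict inequality above and is therefore \emph{also} the unique solution to the convex program \eqref{eq:P1polygroup}; since such a solution is unique, $\g \phi_1 = \hat{\g \phi}$, a contradiction. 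Hence every feasible $\g \phi \neq \hat{\g \phi}$ has strictly larger group-sparsity, which is exactly the statement that $\hat{\g \phi}$ is the unique solution to \eqref{eq:P0polygroup}.

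For the final claim, $\hat{\g \phi}$ is now the unique solution to \eqref{eq:P0polygroup} and, by assumption, $\hat{\g x} = \phi^{-1}(\hat{\g \phi})$ solves the polynomial system \eqref{eq:polyf}; these are precisely the hypotheses of Theorem~\ref{thm:grouppoly}, which yields that $\hat{\g x}$ is the unique solution to \eqref{eq:P0}--\eqref{eq:polyf}.

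I expect the only delicate point to be the first step, i.e., checking line by line that the proof of Theorem~\ref{thm:groupsparse} genuinely goes through for an arbitrary feasible vector rather than for $\phi(\g x_0)$; but this is really a matter of rereading, since the triangle-inequality bound, Lemma~\ref{lem:bounddelta2} (which depends only on $\g A$ and $\g W$), and the set-cover step are all independent of which feasible vector is chosen. A cleaner alternative would be to isolate the generalized claim as a standalone lemma and then derive both Theorem~\ref{thm:groupsparse} and Theorem~\ref{thm:groupsparse2} from it; this adds a few lines but avoids the ``inspect the earlier proof'' phrasing.
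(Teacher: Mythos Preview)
Your proposal is correct and matches the paper's own proof essentially line for line: the paper also reruns the argument of Theorem~\ref{thm:groupsparse} with $\hat{\g\phi}$ in place of $\phi(\g x_0)$ (replacing $\|\g x_0\|_0$ by $\|\{\g W_j\hat{\g\phi}\}_{j=1}^n\|_0$ in~\eqref{eq:conduniqueP1}) to get uniqueness in~\eqref{eq:P1polygroup}, then applies the same observation to any minimizer $\g\phi^*$ of~\eqref{eq:P0polygroup} to force $\g\phi^*=\hat{\g\phi}$, and finally invokes Theorem~\ref{thm:grouppoly}. Your suggestion to factor the generalized claim out as a standalone lemma is a nice stylistic improvement, but the underlying argument is identical.
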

\begin{proof}
By following steps similar to those in the proof of Theorem~\ref{thm:groupsparse} with $\hat{\g \phi}$ instead of $\g \phi_0$, we obtain (by replacing $\|\g x_0\|_0$ by $\|\{\g W_j \hat{\g \phi}\}_{j=1}^n\|_0$ in \eqref{eq:conduniqueP1}) that $\hat{\g \phi}$ is the unique solution to \eqref{eq:P1polygroup}. 
Then, let $\g \phi^*$ be a solution to \eqref{eq:P0polygroup}. This implies $ \|\{\g W_j \g \phi^*\}_{j=1}^n\|_0 \leq \|\{\g W_j\hat{\g \phi}\}_{j=1}^n\|_0$ and, under the condition of the Theorem,  
\begin{equation}\label{eq:phistarsparsity}
	 \|\{\g W_j \g \phi^*\}_{j=1}^n\|_0 < \frac{1}{2\sqrt{m}}\sqrt{1+ \frac{1}{\mu^2(\g A)}} .
\end{equation}
By following similar steps again, we obtain that $\g \phi^*$ is also the unique solution to \eqref{eq:P1polygroup} and thus that $\g \phi^*=\hat{\g \phi}$ is the unique solution to \eqref{eq:P0polygroup}. 

Finally, since $\hat{\g \phi}$ is the unique solution to \eqref{eq:P0polygroup}, if $\hat{\g x}$ satisfies the polynomial equations~\eqref{eq:polyf}, Theorem~\ref{thm:grouppoly} implies that $\hat{\g x}$ is the solution to \eqref{eq:P0}--\eqref{eq:polyf}.

\end{proof}

In comparison with Theorem~\ref{thm:groupsparse}, Theorem~\ref{thm:groupsparse2} provides a condition that only depends on the estimate obtained by solving \eqref{eq:P1polygroup} rather than on the sought solution.

%%%%%%%%%%%%%%%%%%%%%%%%%%%%%%%%%%%%%%%%%%%%%%
\subsection{Enhancing sparsity} 
\label{sec:weighting}
%%%%%%%%%%%%%%%%%%%%%%%%%%%%%%%%%%%%%%%%%%%%%%

In practice, convex relaxations such as the ones described above provide a good step towards the solution but might fail to yield the exact solution with sufficient sparsity. 
In such cases, it is common practice to improve the sparsity of the solution by repeating the procedure with a well-chosen weighting of the variables as described in \cite{Candes08,Le13}. These techniques can be directly applied to improve the $\ell_1$-minimization method of Sect.~\ref{sec:ell1} while they are adapted below to group-sparsity as considered in Sect.~\ref{sec:groupsparse}. 

\subsubsection{Iterative reweighting}
\label{sec:reweighted}

The classical reweighting scheme of \cite{Candes08} for sparse recovery improves the sparsity of the solution by solving a sequence of linear programs. It can be adapted to the group-sparse recovery problem by iteratively solving
\begin{align}\label{eq:P1polygroupweighted}
	\hat{\g \phi} = \arg\min_{\g \phi \in\R^M} \ & \sum_{j=1}^n \mu_j \|\g W_j\g \phi\|_2 \\
	\mbox{s.t.}\ &  \g A\g \phi = \g y - \g b \nonumber\\
		& \phi_k \geq 0,\ \forall k\ \mbox{such that } (\alpha_k)_j \mbox{ is even for all } j, \nonumber
\end{align}	
with weights $\mu_j$ initially set to 1 and refined at each iteration by
$$
	\mu_j = \frac{1}{ \|\g W_j \hat{\g \phi}\|_2 + \epsilon}
$$
for a given small value of $\epsilon>0$.

The basic idea is to decrease the influence of groups of variables with large $\ell_2$-norms that are assumed to be nonzero in the solution while increasing the weight of groups with small norms in order to force them towards zero. 

%%%%%%%%%%%%%%%%%%%%%%%%%%%%%%%%%%%%%%%%%%%%%%%%%%%%%%%%%%%%%%%%%
\subsubsection{Selective $\ell_1$/$\ell_2$-minimization}
\label{sec:Sl1l2M}

The S$\ell_1$M algorithm proposed in \cite{Le13} is another reweighted $\ell_1$-minimization mechanism which sets a single weight at zero at each iteration. Though typically requiring more computation time than the previous approach due to a number of iterations equal to the number of non-zero elements, this algorithm can recover sparse solutions in cases where the classical reweighting scheme of \cite{Candes08} fails. Other advantages include the absence of a tuning parameter and the presence of a convergence analysis \cite{Le13}. The S$\ell_1\ell_2$M algorithm below is an adaptation of S$\ell_1$M to group-sparse problems. 
\begin{enumerate}
	\item Initialize all weights $\mu_j = 1$, $j=1,\dots,n$.
	\item Solve the weighted group-sparse problem \eqref{eq:P1polygroupweighted}.
	\item Find $k \in \arg\max_{j\in\{1,\dots,n\}} \|\g W_j\hat{\g \phi}\|_2$.  
	\item Set $\mu_k = 0$ (to relax the sparsity constraint on the $k$th group).
	\item Repeat from Step 2 until $\sum_{j=1}^n \mu_j \|\g W_j\hat{\g \phi}\|_2 = 0$.
\end{enumerate}
Note that in this algorithm, the number of iterations is equal to the number of nonzero groups\footnote{The maximal number of iterations is the number of groups $n$, but if the correct sparsity pattern is recovered then the algorithm stops earlier.}, which, for polynomial basis pursuit, is $\|\g x_0\|_0$ and is typically small. This results in a fast and accurate method for polynomial basis pursuit, as will be seen in Sect.~\ref{sec:exp}.

%%%%%%%%%%%%%%%%%%%%%%%%%%%%%%%%%%%%%%%%%%%%%%%%%%%%%%%%%%%%%%%%%
\section{Greedy approach}
\label{sec:greedy}

As mentioned in the introduction, there are two major techniques to minimize an $\ell_0$-pseudo-norm: the BP approach and the greedy approach. We now consider the second one to solve problem~\eqref{eq:P0polygroup}, and more particularly develop two variants of the greedy approach: the exact method and the approximate method. The exact method is intended for small-scale problems where the number of possible combinations of base variables remains small. The approximate method is designed to circumvent this limitation and applies to much larger problems. 

\paragraph{Exact greedy algorithm.} The exact method is implemented as follows, where we let $\epsilon = 0$ if the polynomial system of equations is assumed to be feasible, and $\epsilon> 0$ otherwise. 
\begin{enumerate}
	\item Initialize: $\hat{n} = 0$ and $e=+\infty$.
	\item $\hat{n}\leftarrow \hat{n} + 1$.
	\item For all combinations $C$ of $\hat{n}$ variables among $n$:
	\begin{enumerate}
		\item Set $S = \{1,\dots,M\} \setminus \bigcup_{k\notin C} I_k$, where the index sets $I_k$ are defined as in~\eqref{eq:indexsets}.
		\item Build the submatrix $\g A_{S}$ with the columns of $\g A$ with index in $S$.
		\item Solve
		$$
			 \g \phi_S = \arg\min_{\g \phi\in \R^{|S|}} \|\g A_{S} \g \phi  + \g b - \g y\|_2^2 .
		$$
		\item Update $e \leftarrow \min	\{e, \ \|\g A_{S} \g \phi_S  + \g b - \g y\|_2^2\}$. 
		\item If $e\leq \epsilon$, compute $\hat{\g \phi}$ by setting its components of index in $S$ to the values in $\g \phi_S$ and the others to 0. Return $\hat{\g x} = \phi^{-1}(\hat{\g \phi})$.	
	\end{enumerate}
	\item If $\hat{n} < n$, repeat from Step 2, otherwise return an infeasibility certificate.
\end{enumerate}
In Step 3.(a), $S$ corresponds to the support of $\phi(\g x)$ when $supp(\g x)= C$, where $C$ is a combination of $\hat{n}$ indexes from 1 to $n$. 
The maximal number of least squares (LS) problems to solve in Step 3.(c) is $2^n$. But many of these are spared by starting with the sparsest combinations and stopping as soon as a solution is found. Thus, if a sparse solution $\g x_0$ with $\|\g x_0\|_0 \leq n/2$ exists, only $\sum_{t=1}^{\|\g x_0\|_0} \begin{pmatrix}n\\t\end{pmatrix}\leq \|\g x_0\|_0\begin{pmatrix}n\\ \|\g x_0\|_0\end{pmatrix}$ LS problems are solved. At iteration $t$, the LS problem is of size $N\times |S|$, with $|S|= \sum_{q=1}^d \begin{pmatrix}t + q -1 \\ q\end{pmatrix} \leq d \begin{pmatrix}t + d -1 \\ d\end{pmatrix}$. Thus, assuming a complexity $\mathcal{LS}\left(N_1, N_2\right)$ for an LS problem of size $N_1\times N_2$, the overall complexity of the algorithm scales as $ \|\g x_0\|_0 \begin{pmatrix}n\\ \|\g x_0\|_0\end{pmatrix}\mathcal{LS}\left(N, d \begin{pmatrix}\|\g x_0\|_0 + d -1 \\ d\end{pmatrix}\right)$, which is exponential in $n$, $d$ and $\|\g x_0\|_0$.

With $\epsilon=0$, the exact greedy algorithm above can be slightly modified to compute all the solutions to~\eqref{eq:P0polygroup}, simply by letting the {\em for} loop in Step 3 complete instead of returning as soon as a solution is found. As a result, the algorithm could provide a uniqueness certificate for the solution of~\eqref{eq:P0polygroup} from which Theorem~\ref{thm:grouppoly} could be applied to conclude that the solution coincides with the unique minimizer of~\eqref{eq:P0}--\eqref{eq:polyf}.

\paragraph{Approximate greedy algorithm.} The approximate method is similar except that it explores only a single branch of the tree of possible combinations. Its implementation uses a set $S$ of retained variables (more precisely, $S$ contains the indexes of these variables):
\begin{enumerate}
	\item Initialize the set of nonzero variables: $S = \emptyset$.
	\item For all $j\in \{1,\dots,n\} \setminus S$, 
	\begin{enumerate}
		\item Set $S_j = \{1,\dots,M\} \setminus \bigcup_{k\notin S \cup \{j\}} I_k$, where the index sets $I_k$ are defined as in~\eqref{eq:indexsets}.
		\item Build the submatrix $\g A_{S_j}$ with the columns of $\g A$ with index in $S_j$.
		\item Solve
		$$
			 \g \phi_j = \arg\min_{\g \phi\in \R^{|S_j|}} \|\g A_{S_j} \g \phi  + \g b - \g y\|_2^2 .
		$$
	\end{enumerate}
	\item Select the variable that minimizes the error if added to $S$:
	$$
		k = \arg\min_{j\in\{1,\dots,n\}} \|\g A_{S_j} \g \phi_j  + \g b - \g y\|_2^2 .
	$$
	\item Update $S \leftarrow S \cup \{k\}$.  	
	\item Repeat from Step 2 until $\|\g A_{S_k} \g \phi_k  + \g b - \g y\|_2^2\leq \epsilon$. 
	\item Compute $\hat{\g \phi}$ by setting its components of index in $S_k$ to the values in $\g \phi_k$ and the others to 0.
	\item Return $\hat{\g x} = \phi^{-1}(\hat{\g \phi})$ and the error $\|\g A_{S_k} \g \phi_k + \g b - \g y\|_2^2$.
\end{enumerate}
The algorithm starts with an empty set of nonzero variables $S$ and adds a single variable to that set at each iteration. The variable retained at a given iteration is the one that, if added, leads to the minimum sum of squared errors for the equations $\g A_{S_j}\g \phi_j = \g y - \g b$. In Step 2.(a), $S_j$ corresponds to the support of $\phi(\g x)$ when $supp(\g x)= S\cup \{j\}$. 
Note that the value of the minimizer $\g \phi_k$ is not retained but re-estimated at the next iteration. The reason for this is that there is no guarantee that the components of $\g \phi_k$ correspond to monomials of base variables. 

Since one base variable is added at each iteration $t$, the number of iterations equals the sparsity of the returned solution and $t\leq \|\hat{\g x}\|_0$. In this approximate variant of the algorithm, each iteration requires solving only $(n-t+1)$ LS problems, which yields a total number of LS problems equal to $\sum_{t=1}^{\|\hat{\g x}\|_0} (n-t+1) \leq n \|\hat{\g x}\|_0$. Then, bounding the complexity of each LS problem as for the exact greedy algorithm leads to an overall complexity bounded by $n \|\hat{\g x}\|_0 \mathcal{LS}\left(N, d\begin{pmatrix} \|\hat{\g x}\|_0 + d -1 \\ d\end{pmatrix} \right)$.

Since we do not have a result on the convergence of the algorithm to the sparsest solution, we cannot bound $ \|\hat{\g x}\|_0$ except by $n$ and the worst-case complexity is exponential in $n$. However, in practice, if the algorithm finds a sparse solution $\hat{\g x}$, the complexity is exponential in its sparsity $ \|\hat{\g x}\|_0$, but only linear in $n$. This is rather promising given the empirical results to be shown in Sect.~\ref{sec:exp} which suggest that this occurs with high probability.

%%%%%%%%%%%%%%%%%%%%%%%%%%%%%%%%%%%%%%%%%%%%%%%%%%%%%%%%%%%%%%%%%
\section{Purely nonlinear polynomials} 
\label{sec:pureNL}

We now consider the case of purely nonlinear polynomials $p_i^d(\g x)$ without a linear part for some variables, i.e., with $a_{ik} = 0$, $i=1,\dots,N$, for some $k\in\{1,\dots,n\}$ (according to the ordering of the multi-indexes, the linear monomials correspond to the first coefficients with $1\leq k\leq n$).  
For this specific case where Assumption~\ref{ass:nozerocol1} does not hold, some of the first $n$ columns of $\g A$ are zero and the corresponding components of $\g \phi$ are unconstrained, thus set to arbitrary values in $\hat{\g \phi}$. As a result, not only does the analysis of Sect.~\ref{sec:analysis} not hold, but the estimate $\hat{\g x}$ cannot be directly obtained by the inverse mapping $\phi^{-1}$ as the first components of $\hat{\g \phi}$. 

However, the core of the method remains applicable to purely nonlinear polynomials. More precisely, we can solve \eqref{eq:P1poly}, \eqref{eq:P1polygroup} or~\eqref{eq:P1polygroup+} (or apply a reweighting scheme of Sect.~\ref{sec:weighting}) to obtain $\hat{\g \phi}$ and the corresponding support of $\hat{\g x}$ as
$$
	supp(\hat{\g x}) = \{j \in\{1,\dots,n\} : \|\g W_j \hat{\g \phi}\|_2 \neq 0\}, 
$$ 
while the greedy algorithms of Sect.~\ref{sec:greedy} directly estimate $supp(\hat{\g x})$.
Then, the estimate $\hat{\g x}$ can be computed from the higher-degree monomials as, e.g., $\hat{x}_j = \pm \sqrt[\leftroot{-1}\uproot{2}\scriptstyle q]{\hat{\phi}_{jq}}$, where the subscript $_{jq}$ denotes the index such that $\left(\phi(\cdot)\right)_{jq} : \g x\mapsto x_j^q$. 
The precise procedure to compute $\hat{\g x}$ actually depends on the monomials involved in the polynomials. The most straightforward manner is to compute $\hat{x}_j$ from the estimate of its smallest nonzero odd power:
\begin{equation}\label{eq:xhatsqrt}
	\forall j\in supp(\hat{\g x}),\quad \hat{x}_j = %\left(\hat{\phi}_{j(2\hat{p}+1)} \right)^{\frac{1}{2\hat{p}+1}} 
	\sqrt[\leftroot{-1}\uproot{2}\scriptstyle 2\hat{p}+1]{ \hat{\phi}_{j(2\hat{p}+1)}},\quad \mbox{with } \hat{p} = \min_{p\in\{0,\dots,(d-1)/2\}} p,\ \mbox{s.t.}\ \hat{\phi}_{j(2p+1)} \neq 0 .
\end{equation}
But for polynomial systems that involve only monomials with even degrees, the minimization computing $\hat{p}$ in \eqref{eq:xhatsqrt} has no solution and the procedure is slightly more complex. For instance, with purely quadratic equations, the absolute value of $\hat{x}_j$ is given by  $|\hat{x}_j| = \sqrt{\hat{\phi}_{j2}}$ and the sign must be determined by looking at the signs of the estimates of the bilinear terms $\widehat{x_i x_j}$, $i\neq j$.
In addition, note that for such cases, the solution of \eqref{eq:P0}--\eqref{eq:polyf} is not unique for symmetry reasons and the method cannot be analyzed as in Sect.~\ref{sec:analysis} in terms of convergence towards the sparsest solution. Then, a different notion of uniqueness is usually considered in the literature dedicated to purely quadratic equations \cite{Balan06,Bandeira13,Ohlsson13c,Ranieri13}.

%%%%%%%%%%%%%%%%%%%%%%%%%%%%%%%%%%%%%%%%%%%%%%%%%%%%%%%%%%%%%%%%%
\section{Polynomial denoising} 
\label{sec:noise}

In many applications, the equations $y_i=f_i(\g x)$ need to be relaxed to an error-tolerant form for various reasons, which can for instance be interpreted as having access to noisy measurements, $y_i=f_i(\g x) + e_i$, with unknown noise terms $e_i$. 
In this case, we reformulate the general problem~\eqref{eq:P0} as a denoising one: 
\begin{align}\label{eq:P0noise}
	\min_{\g x\in\R^n, \g e\in\R^N} \ & \|\g x\|_0 \\
	\mbox{s.t.}\ & y_i = f_i(\g x) + e_i,\quad i=1,\dots,N, \nonumber \\
				& \|\g e\|_p \leq \varepsilon , \nonumber
\end{align}
where $\varepsilon> 0$ is a fixed threshold on the noise $\ell_p$-norm, with typical choices for $p$ being $p=1$, $p=2$ or $p=\infty$.

With polynomial constraints, 
\begin{equation}\label{eq:polyfnoise}
	y_i = f_i(\g x) + e_i = p_i^d(\g x)  + e_i = b_{i} + \sum_{k=1}^M a_{ik} \g x^{\alpha_k} + e_i ,\quad i=1,\dots,N,
\end{equation}
convex relaxations similar to the ones described in Sect.~\ref{sec:polyBP} can be derived to obtain polynomial BP denoising methods. This leads to solving
\begin{align}\label{eq:P1polynoise}
	\hat{\g \phi} = \arg\min_{\g \phi \in\R^M} \ & \|\g W\g \phi\|_1 \\
	\mbox{s.t.}\ &  \|\g A\g \phi + \g b - \g y\|_p\leq \varepsilon, \nonumber
\end{align}	
for the $\ell_1$-minimization method and
\begin{align}\label{eq:P1polygroupnoise}
	\hat{\g \phi} = \arg\min_{\g \phi \in\R^M} \ & \sum_{j=1}^n \|\g W_j\g \phi\|_2 \\
	\mbox{s.t.}\ &  \|\g A\g \phi + \g b - \g y\|_p\leq \varepsilon \nonumber
\end{align}	
with $\ell_1/\ell_2$-minimization. 
For $p\in\{1,\infty\}$,  problem~\eqref{eq:P1polynoise} remains a linear program, while $p=2$ leads to a SOCP. 
Problem~\eqref{eq:P1polygroupnoise} can still be written as a SOCP for all $p\in\{1,2,\infty\}$ and be solved by the generic solvers that apply to~\eqref{eq:P1polygroup}. Thus, the enhancements proposed in Sect.~\ref{sec:polyBP} for the formulations \eqref{eq:P1poly} and \eqref{eq:P1polygroup} (such as reweighting schemes or the addition of structural constraints) can be easily transposed to the noisy case.

Regarding the greedy algorithms of Sect.~\ref{sec:greedy}, they are already applicable to the noisy case, for which it suffices to set an appropriate threshold $\varepsilon$ on the noise $\ell_2$-norm. Adaptations of these algorithms to $p=1$ or $p=\infty$ are straightforward, but require solving a convex optimization problem in sub-step (c) without a closed-form solution and thus without the same computational benefit for the approximate greedy approach. 

%%%%%%%%%%%%%%%%%%%%%%%%%%%%%%%%%%%%%%%%%
\subsection{Stability analysis}

In the noisy case, the solution to~\eqref{eq:P0noise} is in general not unique and the analysis focuses on stability rather than on exact recovery. The following theorem provides such a stability result for the $\ell_1$-minimization. 

\begin{theorem}
	Let $(\g x_0, \g e_0)$ denote a solution to \eqref{eq:P0noise}--\eqref{eq:polyfnoise} for $p=2$. If the inequality 
	\begin{equation}\label{eq:conditionP1noise}
		\|\g x_0\|_0 < \frac{n}{4M}\left(1+ \frac{1}{\mu(\g A)}\right) 
	\end{equation}
	holds, then $\hat{\g x} = \phi^{-1}(\hat{\g \phi})$ with $\hat{\g \phi}$ the solution to 
	\eqref{eq:P1polynoise} for $p=2$ must obey
	\begin{equation}\label{eq:stabilityx}
		\|\hat{\g x} - \g x_0\|_2^2 \leq \frac{4\varepsilon^2}{1 - \mu(\g A)(4M\|\g x_0\|_0/n - 1)} .
	\end{equation}
\end{theorem}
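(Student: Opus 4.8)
The plan is to reduce the statement to the classical stable recovery result for linear basis pursuit denoising \cite{Donoho06noise}, following the same lifting strategy as in the noiseless $\ell_1$ theorem above but carrying a quantitative error bound through instead of an exact-recovery conclusion. First I would set $\g\phi_0 = \phi(\g x_0)$. By \eqref{eq:polyfnoise} we have $\g A\g\phi_0 + \g b - \g y = -\g e_0$, and since $(\g x_0,\g e_0)$ solves \eqref{eq:P0noise}--\eqref{eq:polyfnoise} for $p=2$ with $\|\g e_0\|_2 \le \varepsilon$, the vector $\g\phi_0$ is feasible for \eqref{eq:P1polynoise}. Proposition~\ref{prop:phisparsity} then bounds its sparsity by $\|\g\phi_0\|_0 \le \frac{M}{n}\|\g x_0\|_0$.

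Next I would invoke the stability theorem of \cite{Donoho06noise}. The weights $\g W$ in \eqref{eq:P1polynoise} are precisely the precompensation that makes the effective dictionary $\g A\g W^{-1}$ column-normalized, so the coherence governing recovery is $\mu(\g A)$ as in Definition~\ref{def:mu}. Combining \eqref{eq:conditionP1noise} with the bound from Step~1 gives $\|\g\phi_0\|_0 \le \frac{M}{n}\|\g x_0\|_0 < \frac{1}{4}\bigl(1+\frac{1}{\mu(\g A)}\bigr)$, which is exactly the hypothesis under which \cite{Donoho06noise} guarantees that the minimizer $\hat{\g\phi}$ of \eqref{eq:P1polynoise} obeys
$$
	\|\hat{\g\phi}-\g\phi_0\|_2^2 \le \frac{4\varepsilon^2}{1-\mu(\g A)\bigl(4\|\g\phi_0\|_0-1\bigr)} ,
$$
the factor $4$ arising because $\hat{\g\phi}$ and $\g\phi_0$ are both within $\varepsilon$ of the data, so $\|\g A(\hat{\g\phi}-\g\phi_0)\|_2 \le 2\varepsilon$.

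To finish, I would substitute and project. Since $t\mapsto 4\varepsilon^2/(1-\mu(\g A)(4t-1))$ is nondecreasing on the region where its denominator is positive and $\|\g\phi_0\|_0\le\frac{M}{n}\|\g x_0\|_0$, the right-hand side above is at most $4\varepsilon^2/(1-\mu(\g A)(4M\|\g x_0\|_0/n-1))$; positivity of this denominator is just \eqref{eq:conditionP1noise} rearranged. Finally $\hat{\g x}=\phi^{-1}(\hat{\g\phi})$ and $\g x_0=\phi^{-1}(\g\phi_0)$ are the first $n$ components of $\hat{\g\phi}$ and $\g\phi_0$, so $\hat{\g x}-\g x_0=\begin{bmatrix}\g I_n & \g 0\end{bmatrix}(\hat{\g\phi}-\g\phi_0)$ and $\|\hat{\g x}-\g x_0\|_2\le\|\hat{\g\phi}-\g\phi_0\|_2$, which yields \eqref{eq:stabilityx}.

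The main obstacle is the bookkeeping with the precompensating matrix $\g W$ when matching \eqref{eq:P1polynoise} to the (normalized) denoising program of \cite{Donoho06noise}; one must either cite that theorem in its preconditioned form or re-run its proof on \eqref{eq:P1polynoise} directly. The latter uses the same template as the proof of Theorem~\ref{thm:groupsparse}: with $\g\delta=\hat{\g\phi}-\g\phi_0$, feasibility of both points gives $\|\g A\g\delta\|_2\le 2\varepsilon$; $\ell_1$-optimality of $\hat{\g\phi}$ gives the cone inequality $\|(\g W\g\delta)_{S^c}\|_1\le\|(\g W\g\delta)_S\|_1$ with $S=supp(\g\phi_0)$, hence $\|\g W\g\delta\|_1\le 2\sqrt{|S|}\,\|\g W\g\delta\|_2$; and a Gershgorin-type coherence estimate in the spirit of Lemma~\ref{lem:bounddelta2}, namely $\|\g A\g\delta\|_2^2=\|\g A\g W^{-1}(\g W\g\delta)\|_2^2\ge(1+\mu(\g A))\|\g W\g\delta\|_2^2-\mu(\g A)\|\g W\g\delta\|_1^2\ge\bigl(1-\mu(\g A)(4|S|-1)\bigr)\|\g W\g\delta\|_2^2$, closes the chain. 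Combining the last two displays gives $\|\g W\g\delta\|_2^2\le 4\varepsilon^2/(1-\mu(\g A)(4|S|-1))$, and one concludes as above after using $|S|\le\frac{M}{n}\|\g x_0\|_0$ and restricting to the first $n$ coordinates (where $\g W$ acts through the column norms $\|\g A_j\|_2$). Apart from this, everything is a routine transcription of the noiseless argument with $2\varepsilon$ in place of $0$, which is also what produces the extra factor $2$ (hence $4$ after squaring) in both \eqref{eq:conditionP1noise} and \eqref{eq:stabilityx} relative to the exact-recovery theorem.
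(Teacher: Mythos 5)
Your proposal is correct and follows essentially the same route as the paper: lift to $\g\phi_0=\phi(\g x_0)$, bound $\|\g\phi_0\|_0\leq \frac{M}{n}\|\g x_0\|_0$ via Proposition~\ref{prop:phisparsity}, invoke the classical coherence-based stability theorem for linear BP denoising (the paper cites it as Theorem~9 of \cite{Bruckstein09}, which is the result of \cite{Donoho06noise}), then use monotonicity of the bound in the sparsity level and restrict to the first $n$ coordinates. Your extra care about the precompensating matrix $\g W$ and the positivity/monotonicity of the denominator only makes explicit details the paper leaves implicit.
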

\begin{proof}
Assume \eqref{eq:P0noise}--\eqref{eq:polyfnoise} has a solution $(\g x_0,\g e_0)$. Then, $\g A\g \phi_0 = \g y - \g b - \g e_0$ with $\g \phi_0 = \phi(\g x_0)$ and $\|\g e_0\|_2\leq \varepsilon$. By Proposition~\ref{prop:phisparsity}, we have $\|\g \phi_0\|_0 \leq \frac{M}{n}\|\g x_0\|_0$.
But, according to Theorem 9 in \cite{Bruckstein09}, if 
$
	\|\g \phi_0\|_0 < \frac{1}{4}\left(1+ \frac{1}{\mu(\g A)}\right) 
$
then $\hat{\g \phi}$ must obey 
\begin{equation}\label{eq:stabilityphi}
		\|\hat{\g \phi} - \g \phi_0\|_2^2 \leq \frac{4\varepsilon^2}{1 - \mu(\g A)(4\|\g \phi_0\|_0  - 1)}. 
\end{equation}
Thus, if \eqref{eq:conditionP1noise} holds, so does~\eqref{eq:stabilityphi} and 
$\|\hat{\g \phi} - \g \phi_0\|_2^2 \leq 4\varepsilon^2/ [1 - \mu(\g A)(4M \|\g x_0\|_0/ n  - 1)]$.
Since $\|\hat{\g x} - \g x_0\|_2^2 \leq \|\hat{\g \phi} - \g \phi_0\|_2^2$, this completes the proof.

\end{proof}

For the $\ell_1/\ell_2$-minimization, we have the following stability result.  
\begin{theorem}\label{thm:stabilitygroup}
	Let $(\g x_0, \g e_0)$ denote a solution to \eqref{eq:P0noise}--\eqref{eq:polyfnoise} for $p=2$. If the inequality 
	\begin{equation}\label{eq:conditionP1groupnoise}
		\|\g x_0\|_0 < \frac{1}{4n M}\left(1+ \frac{1}{\mu(\g A)}\right) 
	\end{equation}
	holds, then $\hat{\g x} = \phi^{-1}(\hat{\g \phi})$ with $\hat{\g \phi}$ the solution to 
	\eqref{eq:P1polygroupnoise} for $p=2$ must obey
	\begin{equation}\label{eq:stabilityxgroupW}
		\|\g W_x (\hat{\g x} - \g x_0)\|_2^2 \leq \frac{4n \varepsilon^2}{1 - \mu(\g A)(4n M\|\g x_0\|_0 - 1)} ,
	\end{equation}
	where $\g W_x= diag(\|\g A_1\|_2, \dots, \|\g A_n\|_2) $. 	
\end{theorem}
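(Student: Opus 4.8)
The plan is to adapt the proof of Theorem~\ref{thm:groupsparse} to the error‑tolerant setting, replacing the exact kernel condition $\g A\g\delta=\g 0$ by the bound $\|\g A\g\delta\|_2\le 2\varepsilon$, in the same spirit in which Theorem~9 of \cite{Bruckstein09} generalizes the noiseless coherence‑based guarantee used for the $\ell_1$ case. Write $\g\phi_0=\phi(\g x_0)$ and $\g\delta=\hat{\g\phi}-\g\phi_0$. Since $(\g x_0,\g e_0)$ solves \eqref{eq:P0noise}--\eqref{eq:polyfnoise}, $\|\g A\g\phi_0+\g b-\g y\|_2=\|\g e_0\|_2\le\varepsilon$, so $\g\phi_0$ is feasible for \eqref{eq:P1polygroupnoise}; the feasibility of $\hat{\g\phi}$ then gives $\|\g A\g\delta\|_2\le\|\g A\hat{\g\phi}+\g b-\g y\|_2+\|\g A\g\phi_0+\g b-\g y\|_2\le 2\varepsilon$. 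Optimality of $\hat{\g\phi}$ yields $\sum_{j=1}^n\|\g W_j\hat{\g\phi}\|_2\le\sum_{j=1}^n\|\g W_j\g\phi_0\|_2$, and repeating verbatim the reverse‑triangle‑inequality manipulation from the proof of Theorem~\ref{thm:groupsparse} with $I_0=\{j:\g W_j\g\phi_0=\g 0\}$ (so that $|I_0^c|=\|\{\g W_j\g\phi_0\}_{j=1}^n\|_0=\|\g x_0\|_0$) produces the cone condition $\sum_{j=1}^n\|\g W_j\g\delta\|_2\le 2\sum_{j\notin I_0}\|\g W_j\g\delta\|_2$, whence, by Cauchy--Schwarz over the $\|\g x_0\|_0$ active groups, $\sum_{j=1}^n\|\g W_j\g\delta\|_2\le 2\sqrt{\|\g x_0\|_0}\,\bigl(\sum_{j=1}^n\|\g W_j\g\delta\|_2^2\bigr)^{1/2}$.

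Next I would run the coherence step, i.e.\ a noisy analogue of Lemma~\ref{lem:bounddelta2}. Setting $\g u=\g W\g\delta$ and letting $\g G=\g W^{-1}\g A^T\g A\g W^{-1}$ be the normalized Gram matrix (unit diagonal, off‑diagonal entries bounded in modulus by $\mu(\g A)$), one has $\|\g A\g\delta\|_2^2=\g u^T\g G\g u=\|\g u\|_2^2+\g u^T(\g G-\g I)\g u\ge(1+\mu(\g A))\|\g u\|_2^2-\mu(\g A)\|\g u\|_1^2$, so that $(1+\mu(\g A))\|\g W\g\delta\|_2^2\le 4\varepsilon^2+\mu(\g A)\|\g W\g\delta\|_1^2$. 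To close the estimate I would bound $\|\g W\g\delta\|_1$ by the group norms: since $\bigcup_{j=1}^n G_j=\{1,\dots,M\}$, every coordinate lies in at least one group, hence $\|\g W\g\delta\|_1\le\sum_{j=1}^n\|\g W_j\g\delta\|_1\le\sqrt{m}\sum_{j=1}^n\|\g W_j\g\delta\|_2$, and combining with the cone condition and the crude bound $\sum_{j=1}^n\|\g W_j\g\delta\|_2^2\le n\|\g W\g\delta\|_2^2$ gives $\|\g W\g\delta\|_1^2\le 4nM\|\g x_0\|_0\,\|\g W\g\delta\|_2^2$ (using $m\le M$). Inserting this and rearranging, the condition \eqref{eq:conditionP1groupnoise} is exactly what makes $1-\mu(\g A)(4nM\|\g x_0\|_0-1)>0$, so that $\|\g W\g\delta\|_2^2$ is controlled by a multiple of $\varepsilon^2/[\,1-\mu(\g A)(4nM\|\g x_0\|_0-1)\,]$. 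Finally, since $\g W_x$ is the leading $n\times n$ block of $\g W$ and $\hat{\g x}-\g x_0$ consists of the first $n$ components of $\g\delta$, we have $\|\g W_x(\hat{\g x}-\g x_0)\|_2^2=\sum_{j=1}^n\|\g A_j\|_2^2(\hat x_j-x_{0,j})^2\le\|\g W\g\delta\|_2^2$, which transfers the bound to the base variables and, after tracking the remaining numerical factor, yields \eqref{eq:stabilityxgroupW}.

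I expect the genuinely non‑mechanical ingredient to be the noisy version of Lemma~\ref{lem:bounddelta2} together with the constant bookkeeping it entails: in the noiseless case of Theorem~\ref{thm:groupsparse} the term $\sum_{j=1}^n\|\g W_j\g\delta\|_2$ cancels exactly, whereas with $\g A\g\delta\ne\g 0$ a residual proportional to $\varepsilon$ survives every passage between $\|\g W\g\delta\|_1$, $\|\g W\g\delta\|_2$, the per‑group norms $\|\g W_j\g\delta\|_2$ and the partial sum over the active groups, and it is the accumulation of these (each costing a factor such as $\sqrt{m}$, $\sqrt{n}$ or $\sqrt{\|\g x_0\|_0}$) that fixes the constants $n$ and $nM$ appearing in the statement. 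Everything else — feasibility of $\g\phi_0$, the cone condition, and the final descent from $\hat{\g\phi}$ to $\hat{\g x}$ — is a routine transcription of arguments already used in the proofs of Theorem~\ref{thm:groupsparse} and of the $\ell_1$‑denoising result above.
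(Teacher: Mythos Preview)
Your argument is correct and in fact establishes a slightly sharper bound than the one stated: tracking your constants gives $\|\g W\g\delta\|_2^2\le 4\varepsilon^2/[1-\mu(\g A)(4nM\|\g x_0\|_0-1)]$, i.e.\ without the extra factor $n$ in the numerator, which of course implies \eqref{eq:stabilityxgroupW}.

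The route differs from the paper's. Both proofs share the feasibility step $\|\g A\g\delta\|_2\le 2\varepsilon$, the cone condition from optimality, the Gram-matrix inequality $(1+\mu)\|\g W\g\delta\|_2^2\le 4\varepsilon^2+\mu\|\g W\g\delta\|_1^2$, and the final descent $\|\g W_x(\hat{\g x}-\g x_0)\|_2^2\le\|\g W\g\delta\|_2^2$. Where they diverge is in closing the loop between $\|\g W\g\delta\|_1$ and $\|\g W\g\delta\|_2$. The paper passes to the group-wise mixed norms, lower-bounds $\|\g W\g\delta\|_2^2\ge\frac{1}{n}\|\{\g W_j\g\delta\}\|_{2,2}^2$ and upper-bounds $\|\g W\g\delta\|_1^2\le M\|\{\g W_j\g\delta\}\|_{1,2}^2$, then carries out a ratio/maximization argument in the spirit of \cite{Donoho06noise} (introducing $\rho=a/b$, $c_0$, $c_1$, and $\gamma=(1+\rho)^2/(c_0\rho^2+c_1)\le 4\|\g x_0\|_0$) to bound $V=\|\{\g W_j\g\delta\}\|_{2,2}^2$; the $1/n$ lower bound is what produces the $4n\varepsilon^2$ numerator. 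You instead stay with $\|\g W\g\delta\|_2$ throughout, bound $\|\g W\g\delta\|_1\le\sqrt{m}\sum_j\|\g W_j\g\delta\|_2$, apply Cauchy--Schwarz on the $\|\g x_0\|_0$ active groups to turn the cone condition into $(\sum_j\|\g W_j\g\delta\|_2)^2\le 4\|\g x_0\|_0\sum_j\|\g W_j\g\delta\|_2^2$, and close with $\sum_j\|\g W_j\g\delta\|_2^2\le n\|\g W\g\delta\|_2^2$. This is more elementary, avoids the auxiliary variables, and loses one fewer factor of $n$; the paper's detour through $V$ buys a bound on the larger quantity $\|\{\g W_j\g\delta\}\|_{2,2}^2$ but pays for it in the constant.
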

\begin{proof}
Assume \eqref{eq:P0noise}--\eqref{eq:polyfnoise} has a solution $(\g x_0,\g e_0)$. Let us define  $\g \phi_0 = \phi(\g x_0)$ and $\g \delta = \hat{\g \phi} - \g \phi_0$. To prove the theorem, we will first bound from above the norm $\|\g W\g \delta\|_2$, where $\g W$ is the diagonal matrix of precompensating weights. This part of the proof follows a path similar to that of Theorem~3.1 in \cite{Donoho06noise}, while adapting it to the group-sparse setting and mixed $\ell_p$/$\ell_q$ norms.

Due to the definition of $\hat{\g \phi}$ as a minimizer of \eqref{eq:P1polygroupnoise}, $\g \delta$ must satisfy either
$$
	\sum_{j=1}^n \|\g W_j (\g \phi_0 + \g \delta)\|_2 < \sum_{j=1}^n \|\g W_j \g \phi_0\|_2
$$
or $\g \delta = \g 0$, in which case the statement is obvious. 
The inequality above can be rewritten as
$$
	\sum_{j\in I_0} \|\g W_j \g \delta\|_2 +\sum_{j\notin I_0} \|\g W_j (\g \phi_0 + \g \delta)\|_2 -  \|\g W_j \g \phi_0\|_2 < 0 ,
$$
where $I_0=\{j\in\{1,\dots,n\} : \g W_j \g \phi_0 = \g 0\}$.  
By the triangle inequality, $\|\g u + \g v\|_2 - \|\g u\|_2 \geq -\|\g v\|_2$, this implies 
\begin{equation}\label{eq:proofainfb}
	\sum_{j\in I_0} \|\g W_j \g \delta\|_2 - \sum_{j\notin I_0} \|\g W_j\g \delta\|_2 < 0.
\end{equation}
In addition, $\g \delta$ must satisfy the constraints in \eqref{eq:P1polygroupnoise} as 
$$
	\|\g A(\g \phi_0 + \g \delta)  + \g b - \g y\|_2\leq \varepsilon ,
$$	
in which $\g y$ can be replaced by $\g A \g \phi_0 + \g b + \g e_0$, leading to 
$$
	\|\g A\g \delta - \g e_0\|_2\leq \varepsilon .
$$
Using $\|\g u\|_2 \leq \|\g u - \g v\|_2 + \|\g v\|_2$, this implies $\|\g A\g \delta\|_2\leq 2 \varepsilon$, which further gives
\begin{align}
	4\varepsilon^2 &\geq \|\g A\g \delta\|_2^2 = \|\g A\g W^{-1}\g W \g \delta\|_2 \nonumber
	= (\g W \g \delta)^T \g W^{-1} \g A^T \g A\g W^{-1} (\g W \g \delta) \nonumber\\
	&= \|\g W\g \delta\|_2^2  + (\g W \g \delta)^T ( \g W^{-1} \g A^T \g A\g W^{-1} - \g I) (\g W \g \delta) \nonumber\\
	&\geq \|\g W\g \delta\|_2^2  - \left| (\g W \g \delta)^T ( \g W^{-1} \g A^T \g A\g W^{-1} - \g I) (\g W \g \delta)\right|\nonumber\\
	&\geq \|\g W\g \delta\|_2^2  - | \g W \g \delta|^T | \g W^{-1} \g A^T \g A\g W^{-1} - \g I| |\g W \g \delta|\nonumber\\
	&\geq \|\g W\g \delta\|_2^2  - \mu(\g A) (\|\g W\g \delta\|_1^2 - \|\g W\g \delta\|_2^2 ) \nonumber\\
	&= ( 1 + \mu(\g A) )\|\g W\g \delta\|_2^2  - \mu(\g A) \|\g W\g \delta\|_1^2  
	\label{eq:proofstabilitygroup1}
\end{align}
where we used $\g W^{-1}\g W = \g I$ and the fact that the diagonal entries of $|\g W^{-1} \g A^T \g A\g W^{-1} - \g I|$ are zeros while off-diagonal entries are bounded from above by $\mu(\g A)$. 

Due to $\g W_j\g\delta$ being a vector with a subset of entries from $\g W\g \delta$, we have $\|\g W\g \delta\|_2^2 \geq  \|\g W_j\g \delta\|_2^2$, $j=1,\dots,n$, and thus
\begin{equation}\label{eq:bounddeltaL2}
	 \|\g W\g \delta\|_2^2 \geq  \frac{1}{n} \sum_{j=1}^n  \|\g W_j\g \delta\|_2^2 =  \frac{1}{n} \|\{\g W_j\g \delta\}_{j=1}^n\|_{2,2}^2 . 
\end{equation}
Since the groups defined by the $\g W_j$'s overlap, $\|\g W\g \delta\|_2 \leq \sum_{j=1}^n\|\g W_j\g \delta\|_2$, and the squared $\ell_1$-norm in~\eqref{eq:proofstabilitygroup1} can be bounded by 
\begin{equation}\label{eq:bounddeltaL1}
	\|\g W\g \delta\|_1^2 \leq M \|\g W\g \delta\|_2^2 \leq M\left( \sum_{j=1}^n\|\g W_j\g \delta\|_2\right)^2 
	= M \|\{\g W_j\g \delta\}_{j=1}^n\|_{1,2}^2 .
\end{equation}
Introducing the bounds \eqref{eq:bounddeltaL2}--\eqref{eq:bounddeltaL1} in \eqref{eq:proofstabilitygroup1} yields
\begin{equation}\label{eq:proofstabilitygroup2}
 \frac{1 + \mu(\g A) }{n} \|\{\g W_j\g \delta\}_{j=1}^n\|_{2,2}^2  - \mu(\g A) M \|\{\g W_j\g \delta\}_{j=1}^n\|_{1,2}^2  \leq 4\varepsilon^2 .
\end{equation}
We will now use this inequality to derive an upper bound on $\|\{\g W_j\g \delta\}_{j=1}^n\|_{2,2}^2$, which will also apply to $\|\g W\g \delta\|_2^2 \leq \|\{\g W_j\g \delta\}_{j=1}^n\|_{2,2}^2$, since the groups overlap and the squared components of $\g W\g \delta$ are summed multiple times in $\|\{\g W_j\g \delta\}_{j=1}^n\|_{2,2}^2$. To derive the upper bound, we first introduce a few notations: 
$$
	a = \|\{\g W_j\g \delta\}_{j\in I_0}\|_{1,2},\quad b = \|\{\g W_j\g \delta\}_{j\notin I_0}\|_{1,2},
$$
and 
$$
	c_0 = \left(\frac{\|\{\g W_j\g \delta\}_{j\in I_0}\|_{2,2}}{\|\{\g W_j\g \delta\}_{j\in I_0}\|_{1,2}}\right)^2 \in \left[\frac{1}{|I_0|}, 1\right],
	\quad c_1 =  \left(\frac{\|\{\g W_j\g \delta\}_{j\notin I_0}\|_{2,2}}{\|\{\g W_j\g \delta\}_{j\notin I_0}\|_{1,2}}\right)^2 \in \left[\frac{1}{n - |I_0|}, 1\right] ,
$$
where the box bounds are obtained by classical relations between the $\ell_1$ and $\ell_2$ norms ($\forall\g u\in \R^k$, $\|\g u\|_2 \leq \|\g u\|_1 \leq \sqrt{k}\|\g u\|_2$). 
With these notations, the term to bound is rewritten as
$$
	\|\{\g W_j\g \delta\}_{j=1}^n\|_{2,2}^2 = \|\{\g W_j\g \delta\}_{j\in I_0}\|_{2,2}^2 +  \|\{\g W_j\g \delta\}_{j\notin I_0}\|_{2,2}^2 = c_0 a^2 + c_1 b^2 ,
$$
while the inequality~\eqref{eq:proofstabilitygroup2} becomes
$$
	\frac{1 + \mu(\g A) }{n} (c_0 a^2 + c_1 b^2)  - \mu(\g A) M (a + b)^2  \leq 4\varepsilon^2 .
$$
We further reformulate this constraint by letting $a=\rho b$:
\begin{equation}\label{eq:proofstabilitygroup3}
	\frac{1 + \mu(\g A) }{n} (c_0 \rho^2 + c_1) b^2  - \mu(\g A) M (1 + \rho)^2 b^2  \leq 4\varepsilon^2 .
\end{equation}
Let $\gamma =  (1 + \rho)^2 / (c_0 \rho^2 + c_1)$. Due to~\eqref{eq:proofainfb}, we have $a<b$ and thus $\rho\in[0,1)$, which, together with the bounds on $c_0$ and $c_1$, gives the constraints $1\leq \gamma\leq 4(n-|I_0|)$. By setting $V = (c_0 \rho^2 + c_1) b^2$, \eqref{eq:proofstabilitygroup2} is rewritten as 
$$
	\frac{1 + \mu(\g A) }{n} V  - \mu(\g A) M \gamma V \leq 4\varepsilon^2 , 
$$
where 
$$
	\frac{1 + \mu(\g A) }{n}  - \mu(\g A) M \gamma \geq \frac{1 + \mu(\g A) }{n}  - 4 (n-|I_0|) \mu(\g A) M > 0, 
$$
since $\gamma\leq 4(n-|I_0|)$ and the positivity is ensured by the condition \eqref{eq:conditionP1groupnoise} and the fact that $\|\g x_0\|_0 = n - |I_0|$. Thus, 
$$
	\|\g W\g \delta\|_2^2 \leq \|\{\g W_j\g \delta\}_{j=1}^n\|_{2,2}^2 = V \leq \frac{4n\varepsilon^2}{1+\mu(\g A) - 4 \mu(\g A) n M \|\g x_0\|_0 }
$$
and, since $\|\g W_x(\hat{\g x} - \g x_0)\|_2^2 \leq \|\g W (\hat{\g \phi} - \g \phi_0) \|_2^2=\|\g W\g \delta\|_2^2$, \eqref{eq:stabilityxgroupW} is proved. 

\end{proof}

%%%%%%%%%%%%%%%%%%%%%%%%%%%%%%%%%%%%%%%%%%
\section{Experiments} 
\label{sec:exp}
%%%%%%%%%%%%%%%%%%%%%%%%%%%%%%%%%%%%%%%%%%

This section evaluates the efficiency of the BP and greedy methods in terms of accuracy and computing time for the noiseless case in Sect.~\ref{sec:expnoiseless} and the noisy case in Sect.~\ref{sec:expnoise}. Here, the definition of accuracy depends on the presence of noise in the equations, while the computing time refers to Matlab implementations\footnote{The code for the proposed methods is available at \url{http://www.loria.fr/~lauer/software/} .} using MOSEK and CVX for the convex programs and running on a standard laptop (except for times reported in Fig.~\ref{fig:time}).

The following methods are compared: the iterative hard thresholding (IHT) algorithm \cite{Blumensath13} as implemented by \cite{Beck13}, the quadratic (QBP) and nonlinear (NLBP) BP methods of \cite{Ohlsson13quadratic,Ohlsson13nlbp}, the simple $\ell_1$-minimization ($\ell_1$M) solving \eqref{eq:P1poly}\footnote{The results reported here are obtained with 10 iterations of the reweighting procedure of \cite{Candes08} applied to~\eqref{eq:P1poly}.}, the $\ell_1/\ell_2$-minimization ($\ell_1\ell_2$M) solving \eqref{eq:P1polygroup+} with its iteratively reweighted counterpart (IR$\ell_1\ell_2$M) using 10 iterations as defined in Sect.~\ref{sec:reweighted} for $\epsilon=0.001$, the selective $\ell_1/\ell_2$-minimization (S$\ell_1\ell_2$M) of Sect.~\ref{sec:Sl1l2M}, and the exact (EGA) and approximate (AGA) greedy algorithms of Sect.~\ref{sec:greedy}. 
For the noisy cases, error tolerant formulations of these methods as described in Sect.~\ref{sec:noise} are used. 
%The iterative hard thresholding algorithm of \cite{Beck13} is not included as it was demonstrated to be inferior to the BP method of \cite{Ohlsson13quadratic} in that paper.

\begin{table}
\centering
\caption{Results on the example from \cite{Ohlsson13quadratic} with quadratic equations. \label{tab:QBP}}
\begin{tabular}{lllllllll}\hline\noalign{\smallskip}
	\bf Method & IHT & QBP & $\ell_1$M & $\ell_1\ell_2$M & IR$\ell_1\ell_2$M  & S$\ell_1\ell_2$M & AGA & EGA \\
\noalign{\smallskip}\hline\noalign{\smallskip}
	\bf Success rate & $54 \%$ & $79 \%$  & $0 \%$ & $0 \%$ & $97 \%$ & $97 \%$ & $91 \%$ & $100 \%$\\
	\bf Mean time (s) &  1.92  & 1.76 & 0.26  & 0.07  & 0.46  & 0.15  & 0.03 & 0.14
	\\
	\hline\noalign{\smallskip}
\end{tabular}
% QBP time with transformQ = 7.89
% QBP total time without skipping constraints : 14.6
\end{table}
\begin{table}
\centering
\caption{Results on the example from \cite{Ohlsson13nlbp} with polynomials of degree $d=4$. \label{tab:PBP}}
\begin{tabular}{lllllllll}\hline\noalign{\smallskip}
	\bf Method & IHT & NLBP  & $\ell_1$M & $\ell_1\ell_2$M & IR$\ell_1\ell_2$M  & S$\ell_1\ell_2$M & AGA & EGA
	\\
\noalign{\smallskip}\hline\noalign{\smallskip}
	\bf Success rate & $66 \%$ & $100 \%$  & $85 \%$ & $16 \%$ & $100 \%$ & $100 \%$ & $100 \%$ & $100 \%$\\
	\bf Mean time (s) & 0.43 & 13.1 & 0.22 & 0.06 &  0.42 & 0.08  & 0.006  & 0.006
\\	\hline\noalign{\smallskip}
\end{tabular}
% NLBP time with complete transformQ : 17.1
\end{table}
%====================================================
\subsection{Exact recovery with noiseless equations} 
\label{sec:expnoiseless}

In the noiseless setting considered in the following experiments, accuracy is defined as the ability of a method to recover the sparsest solution $\g x_0$ of a polynomial system and is measured as a success rate, i.e., the percentage of systems for which it recovers $\g x_0$ (meaning $\|\hat{\g x} -\g x_0\|_2 \leq 10^{-6}$) in a Monte Carlo experiment involving $100$ trials with random polynomial coefficients but same $\g x_0$. More precisely, for each experiment, given a sparsity level $\|\g x_0\|_0$, we generate an $\g x_0$ with the first $\|\g x_0\|_0$ components set to 1 and the rest at 0, while the values of $y_i$, $i=1,\dots,N$, are given by~\eqref{eq:polyf} with $b_i$ and $a_{ik}$, $k=1,\dots,M$, randomly drawn for each trial according to a zero-mean Gaussian distribution with unit variance. 

\paragraph{Sparse solutions of quadratic equations.} 
Consider example A in \cite{Ohlsson13quadratic} with $N=25$ quadratic equations ($d=2$) in $n=20$ variables. The true $\g x_0$ that we aim at recovering has three components at 1 and the rest at 0: $\|\g x_0\|_0=3$. The results reported in Table~\ref{tab:QBP} show that all reweighted BP and greedy methods recover the solution in almost all trials in less than 1 second, whereas the IHT and QBP methods lead to more failures despite longer computing times. However, the $\ell_1$M method, which does not enforce the polynomial structure on $\hat{\g \phi}$, cannot recover the correct solution. The straightforward optimization of the $\ell_1/\ell_2$-norm ($\ell_1\ell_2$M) also fails, which shows the importance of reweighting for obtaining truly sparse solutions. 

\paragraph{Higher-degree polynomial equations.} 
Consider now example A in \cite{Ohlsson13nlbp} with  $N=50$ polynomial equations of degree $d=4$ in $n=5$ variables with two nonzeros ($\|\g x_0\|_0=2$). %The true $\g x_0$ has two components at 1 and the rest at 0. 
Though using higher degree polynomials, according to the results shown in Table~\ref{tab:PBP}, this problem setting seems easier than the quadratic example above. Indeed, the simple $\ell_1$M method already obtains a success rate of $85 \%$ while all reweighted BP and greedy methods achieve a $100 \%$ success rate. This is due to the increase of the number of equations, $N$, and the decrease of the number of variables, $n$, as will be emphasized by additional experiments below. 
Here, S$\ell_1\ell_2$M can be much faster than IR$\ell_1\ell_2$M based on the iterative reweighting of \cite{Candes08} (Sect.~\ref{sec:reweighted}) because the number of nonzero groups is very small: $2 < n<$ typical number of iterations for the reweighting of \cite{Candes08}. However, the AGA obtains the same success rate, while being one order of magnitude faster. 

The NLBP method also recovered the correct solution in all trials, but with a much longer computing time. This time is roughly divided in halves for the construction of quadratic constraints that can be handled by QBP on the one hand and the semi-definite optimization of QBP on the other hand. 
Here again, the IHT method offers a low success rate, though it uses additional information on $\|\g x_0\|_0$ to compute $\hat{\g x}$ with the optimal sparsity level (which is unknown to the other methods). This is due in part to the difficulty of tuning the gradient step-size to obtain convergence with an objective function whose gradient is not Lipschitz continuous. For this reason, we exclude the IHT method from the remaining experiments.

%\begin{table}[h]
%\centering
%\caption{Results with purely quadratic equations $n=5$. \label{tab:pureQBP}}
%\begin{tabular}{llllllll}\hline\noalign{\smallskip}
%	\bf Method & \cite{Ohlsson13nlbp}  & $\ell_1$M & $\ell_1\ell_2$M & IR$\ell_1\ell_2$M  & S$\ell_1\ell_2$M & AGA & EGA
%	\\
%\noalign{\smallskip}\hline\noalign{\smallskip}
%	\bf Success rate &  & $0 \%$ &  $100 \%$ & $100 \%$ & $100 \%$ & $100 \%$ &$100 \%$  \\
%	\bf Mean time (s) & & 0.008 & 0.002 &  0.012 & 0.009 & 0.007 & 0.009
%\\	\hline\noalign{\smallskip}
%\end{tabular}
%\end{table}
\begin{table}
\centering
\caption{Results with purely quadratic equations. \label{tab:pureQBP}}
\begin{tabular}{llllllll}\hline\noalign{\smallskip}
	\bf Method & QBP  & $\ell_1$M & $\ell_1\ell_2$M & IR$\ell_1\ell_2$M  & S$\ell_1\ell_2$M & AGA & EGA
	\\
\noalign{\smallskip}\hline\noalign{\smallskip}
	\bf Success rate & $0 \%$ & $0 \%$ &  $3 \%$ & $100 \%$ & $99 \%$ & $91 \%$ &$100 \%$  \\
	\bf Mean time (s) & 1.42 & 0.27 & 0.03 &  0.77 & 0.19 & 0.05 & 0.36
\\	\hline\noalign{\smallskip}
\end{tabular}
% Gaussian entries in X : QBP=0%, L1=0%, L1L2=3%,IRL1L1M=99%, SL1L2M=99%,AGA=96%, EGA=100%
\end{table}
\begin{table}
\centering
\caption{Results with purely nonlinear equations of degree $d=4$. \label{tab:pureNLBP}}
\begin{tabular}{llllllll}\hline\noalign{\smallskip}
	\bf Method & NLBP  & $\ell_1$M & $\ell_1\ell_2$M & IR$\ell_1\ell_2$M  & S$\ell_1\ell_2$M & AGA & EGA
	\\
\noalign{\smallskip}\hline\noalign{\smallskip}
	\bf Success rate & $100 \%$ & $0 \%$ &  $15 \%$ & $100 \%$ & $100 \%$ & $100 \%$ &$100 \%$  \\
	\bf Mean time (s) & 13.1 & 0.21 & 0.07 & 0.70 &  0.12 & 0.01 & 0.012
\\	\hline\noalign{\smallskip}
\end{tabular}
% NLBP total time 17.2
%
% with Gaussian entries in X, NLBP=100%, L1=0%, L1L2=43%, IRL1L2=100%, SL1L2M=100%,AGA=100%,EGA=100%
\end{table}
\begin{table}[t!]
\centering
\caption{Results with purely quadratic equations generated as in phase retrieval. \label{tab:CPRL}}
\begin{tabular}{lllllll}\hline\noalign{\smallskip}
	\bf Method & GSS & QBP & IR$\ell_1\ell_2$M  & S$\ell_1\ell_2$M & AGA & EGA
	\\
\noalign{\smallskip}\hline\noalign{\smallskip}
	\bf Success rate & $32 \%$  & $0 \%$ & $79 \%$ & $72 \%$ & $71 \%$ &$100 \%$  \\
	\bf Mean time (s) & 0.10 & 1.45 &  0.71 & 0.25 & 0.05 & 0.38
\\	\hline\noalign{\smallskip}
\end{tabular}
% IHT : 0%, 2.2second
%
% with Gaussian entries in X, L1=4%, IRL1L2=80%, SL1L2M=76%,AGA=81%,EGA=100%
\end{table}

\paragraph{Purely nonlinear equations.} 
Consider now the particular case of purely nonlinear equations discussed in Sect.~\ref{sec:pureNL}. Given $N=25$ purely quadratic polynomials $p_i^2(\g x) = \g x^T\g Q_i\g x$ in $n=20$ variables, we compute the values $y_i=p_i^2(\g x_0)$. % with an $\g x_0$ with three components at 1 and the rest at 0.
Results are shown in Table~\ref{tab:pureQBP} for $\|\g x_0\|_0=3$ and where the definition of the success rate is slightly modified: since the sparsest solution is not unique in this case, a successful trial is defined as one for which the estimate $\hat{\g x}$ belongs to the set of solutions (meaning that $\min\{\|\hat{\g x} -\g x_0\|_2, \|\hat{\g x} + \g x_0\|_2\} \leq 10^{-6}$). Table~\ref{tab:pureNLBP} reports results of similar experiments with a higher degree $d=4$ and $N=50$, $n=5$, $\|\g x_0\|_0 = 2$. In this case, the estimates of base variables $\hat{x}_j$ are directly given as cube roots of the estimates of $x_j^3$. In both cases ($d=2$ or $4$), the success rates are similar to the ones reported in Tables~\ref{tab:QBP} and~\ref{tab:PBP}, which provides evidence that the alternative proposed in Sect.~\ref{sec:pureNL} yields satisfactory results when Assumption~\ref{ass:nozerocol1} does not hold. 

In order to compare with the greedy sparse-simplex (GSS) method of \cite{Beck13} using the implementation provided with that paper, we consider a typical application for purely quadratic equations, namely, phase retrieval. In such applications, the measurements obey $c_i =  |\g c_i^T\g x|$, which can be reformulated as $y_i = c_i^2 =  \g x^T\g c_i\g c_i^T\g x$ to be handled by the proposed methods. Here, the components of the vectors $\g c_i$, $i=1,\dots,N$, are randomly drawn in each trial according to a zero-mean Gaussian distribution of unit variance. Results in Table~\ref{tab:CPRL} obtained for $N=25$, $n=20$ and $\|\g x_0\|_0=3$ show that the proposed methods perform much better than the GSS method with data generated in this manner, despite the fact that GSS works with the additional knowledge of $\|\g x_0\|_0$. 
Note that many other methods have been proposed for sparse phase retrieval. The comparison is here limited to a sample of generic sparse recovery methods that apply to phase retrieval as a special case. 

\paragraph{Remark:} {\em All examples above consider a moderate dimensional setting ($n\leq 20$) with low sparsity levels $\|\g x_0\|_0\leq 3$. In such cases, the exact greedy algorithm (EGA) can be applied without complexity issues and yields a perfect recovery in all trials, even faster than the BP methods based on convex relaxations. However, despite its name and the observed success rates, the EGA is not an exact algorithm for solving polynomial systems but only for solving group-sparsity optimization problems in the form of \eqref{eq:P0polygroup}. While Theorem~\ref{thm:grouppoly} states that this is sufficient to solve \eqref{eq:P0}--\eqref{eq:polyf} if the polynomial constraints are satisfied, there is no guarantee that this is the case in general. 
}

\paragraph{Estimating the probability of successful recovery.} Figures~\ref{fig:diagramn10} (for $n=10$) and~\ref{fig:diagramn20} (for $n=20$) show the probability of successful recovery of an $\g x_0$ estimated via the success rate versus the sparsity level $\|\g x_0\|_0$ for various values of $\delta = N/n$. Contrary to the linear case where the system is overdetermined for values of $\delta > 1$, here $\delta$ can grow as large as $M/n$ before the system becomes overdetermined. Thus, classical studies on the linear case focus on small values of $\delta < 1$, whereas we only analyze the probability of successful recovery for $\delta \geq 1$ ($\delta < 1$ is a very difficult setting with polynomial equations). 

These results show that all the proposed methods can recover the sparsest solution for sufficiently sparse cases with high probability and that the sparsity level at which this occurs depends on the particular method on the one hand and on the ratio $\delta$ on the other hand. Larger values of $\delta$ correspond to larger systems with more equations and thus with more information on the sought solution. Another expected observation is that the probability of success decreases when the degree $d$ increases. For $n=20$, this leads to constant failure for the $\ell_1$M and IR$\ell_1\ell_2$M methods at $d=5$, though the S$\ell_1\ell_2$M and AGA methods can still recover sufficiently sparse solutions. 
%As for the linear case, rather sharp drops of the probability of success can be observed around the 
Except for this particular setting ($n=20$, $d=5$), the results for IR$\ell_1\ell_2$M  and S$\ell_1\ell_2$M are comparable, while the AGA yields slightly lower probabilities. The simple $\ell_1$M method is much less effective than the others, but nonetheless obtains a high probability of recovery for not too difficult cases with $d=2$ and a sufficiently large $N$.

\begin{figure}
	\centering
	$\g{d=2}$\\
	\includegraphics[width = .24\linewidth]{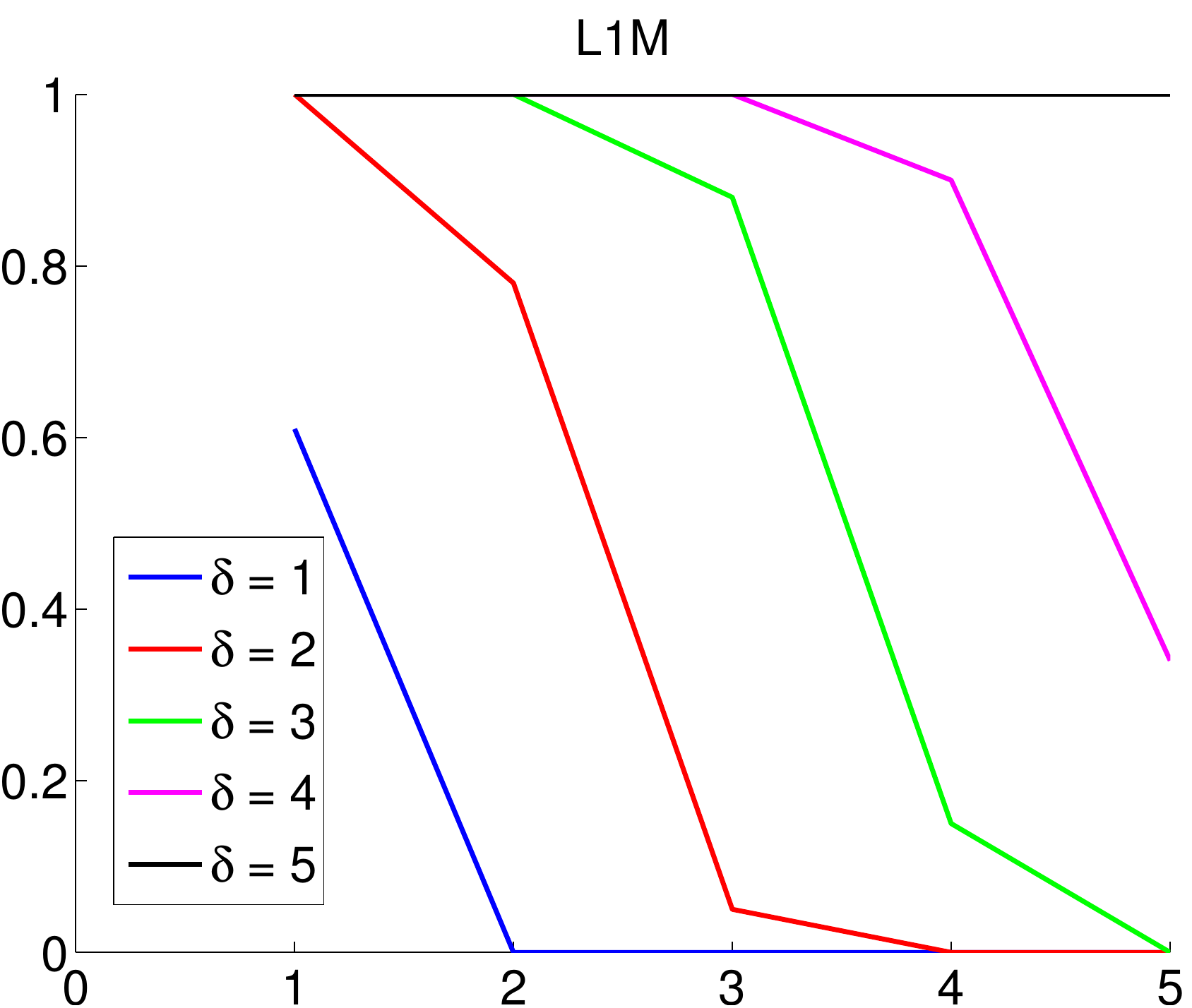}	
	\includegraphics[width = .24\linewidth]{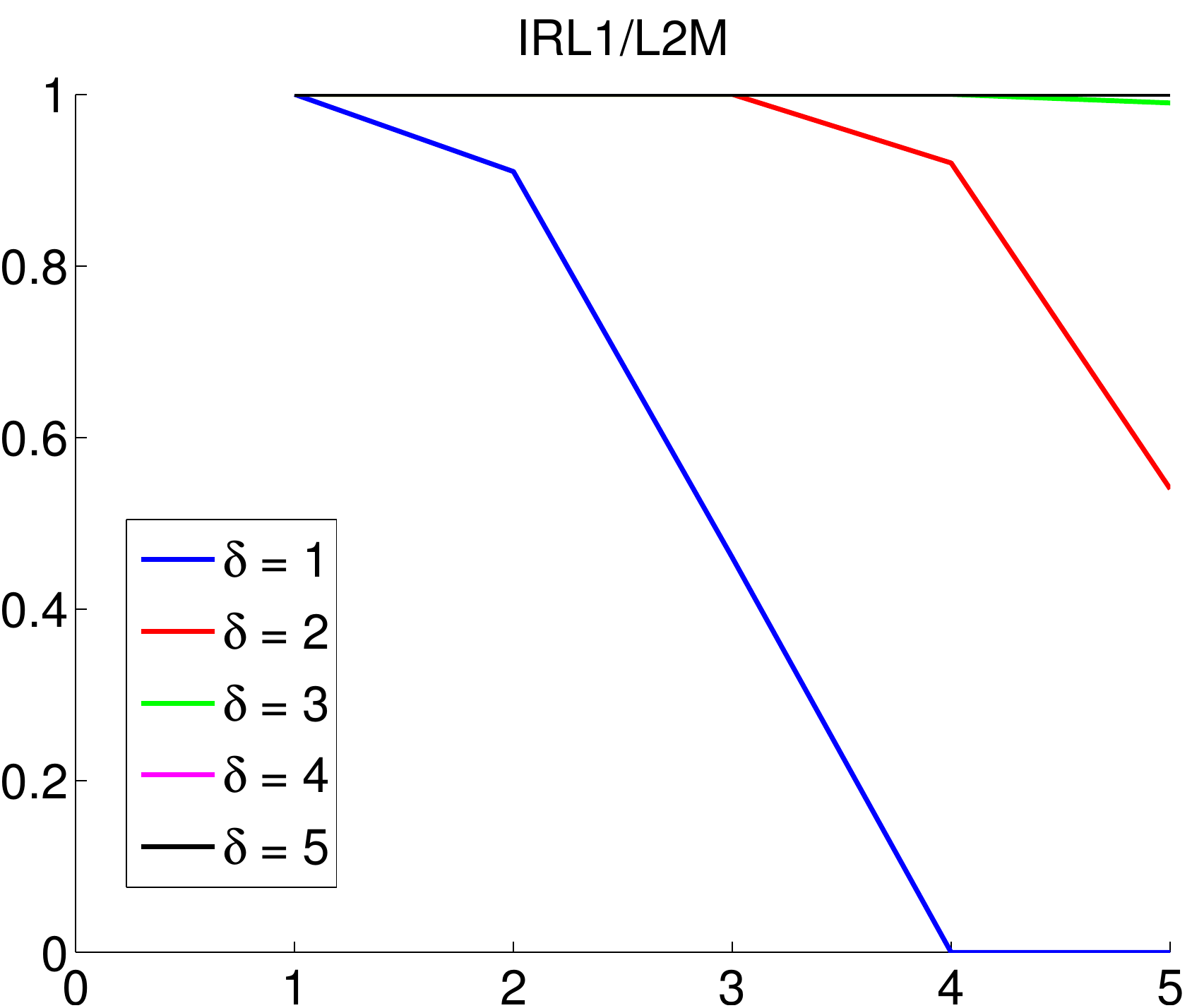}
	\includegraphics[width = .24\linewidth]{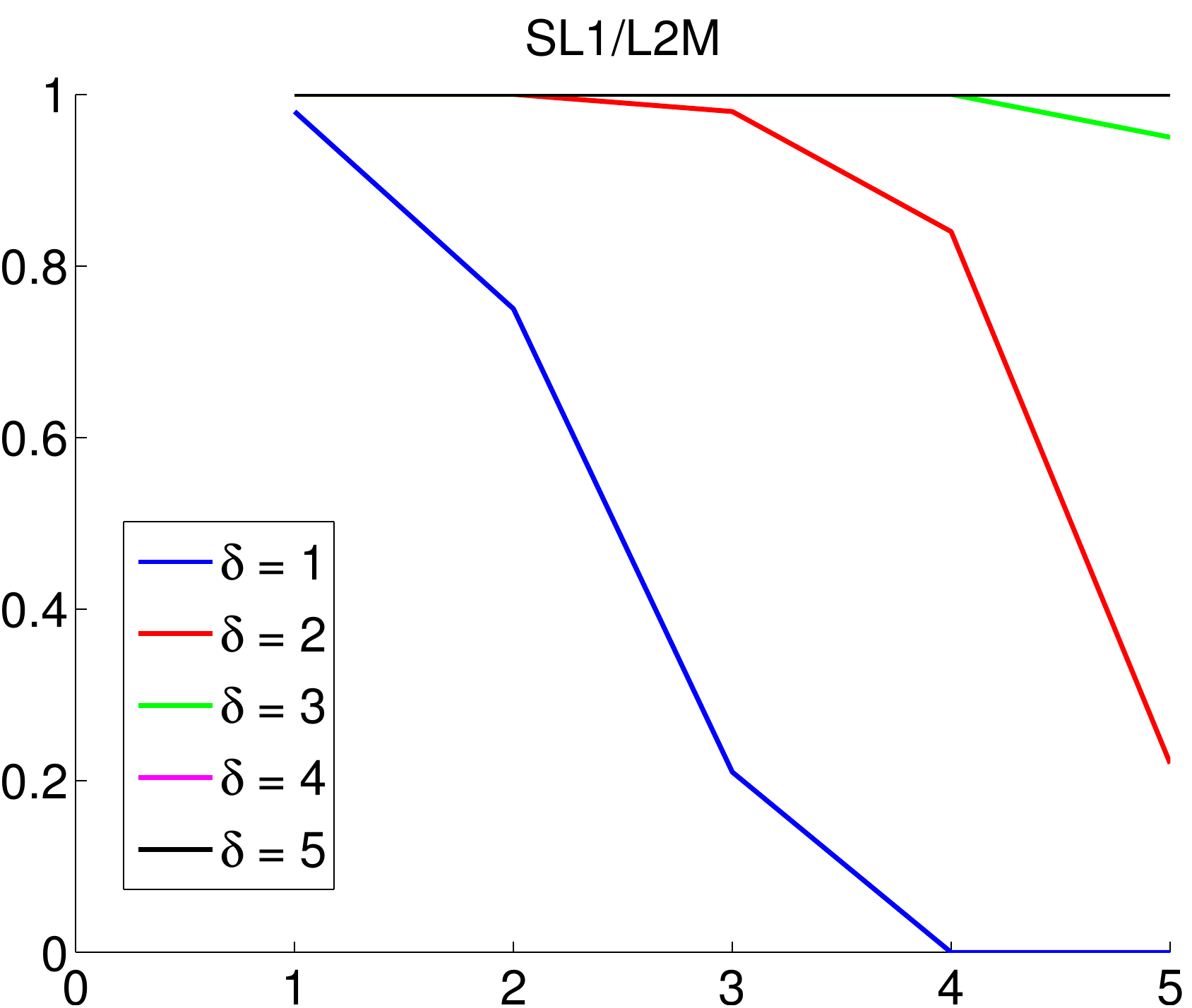}			
	\includegraphics[width = .24\linewidth]{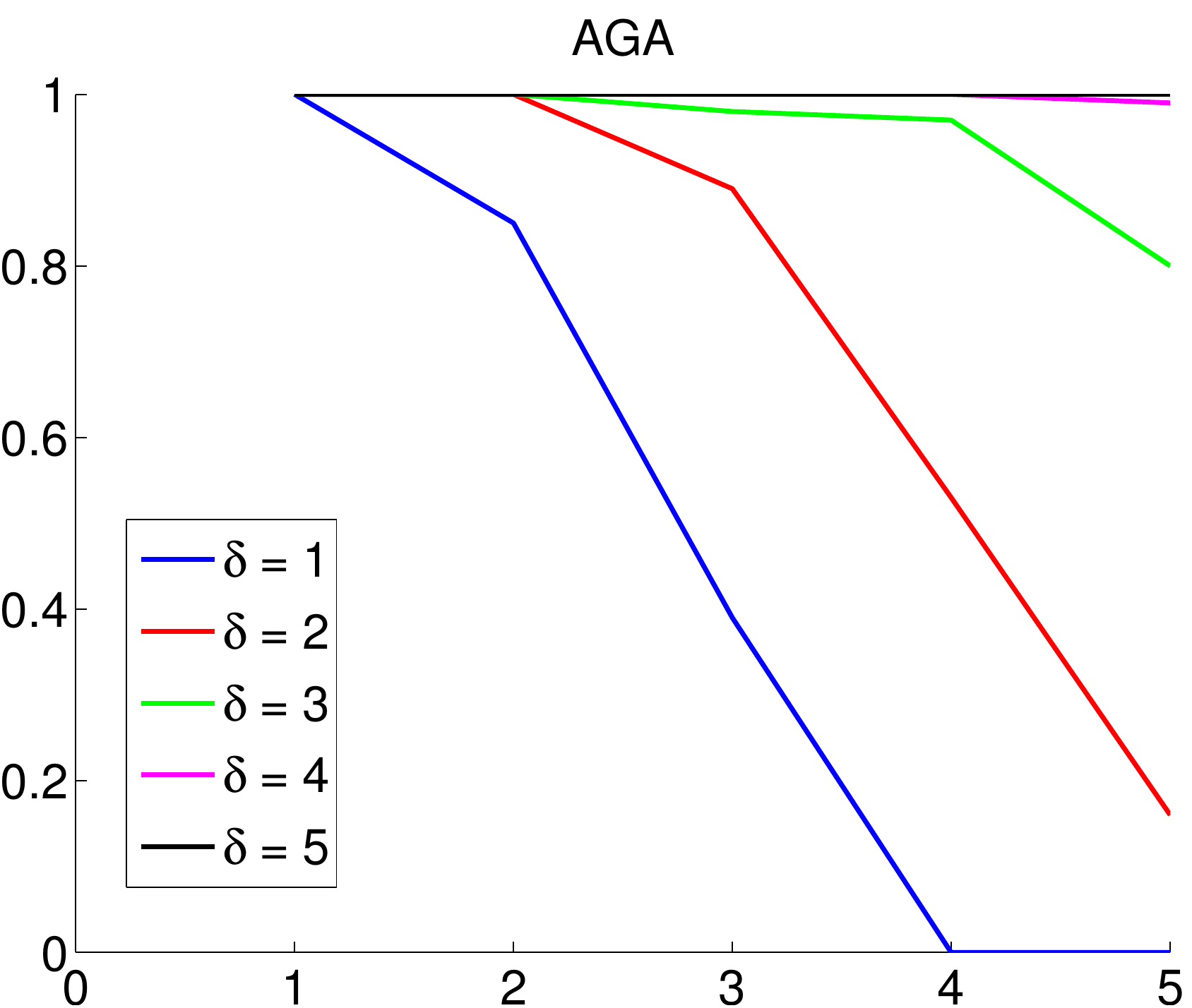}	
	\medskip\\
	$\g{d=3}$\\
	\includegraphics[width = .24\linewidth]{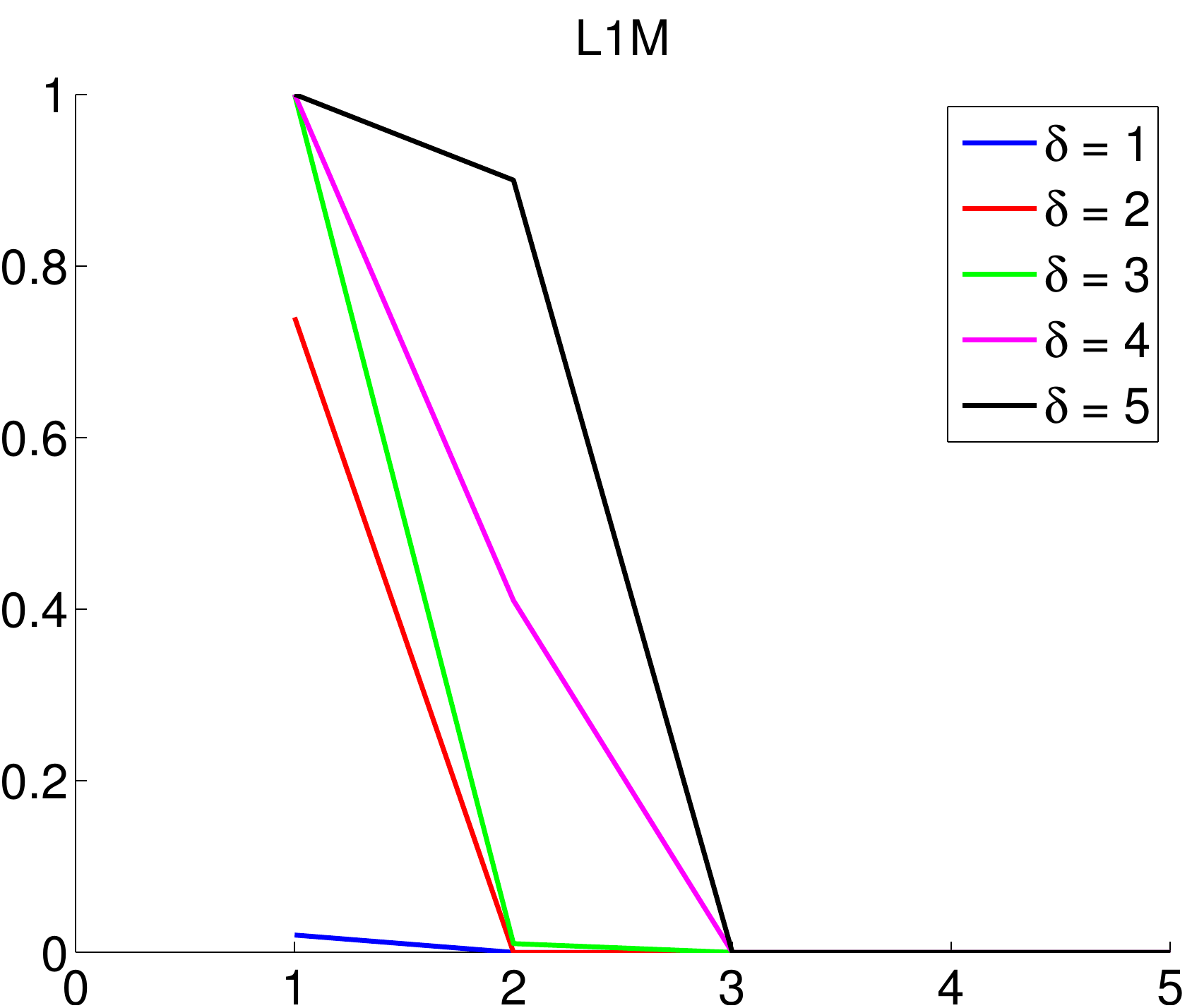}	
	\includegraphics[width = .24\linewidth]{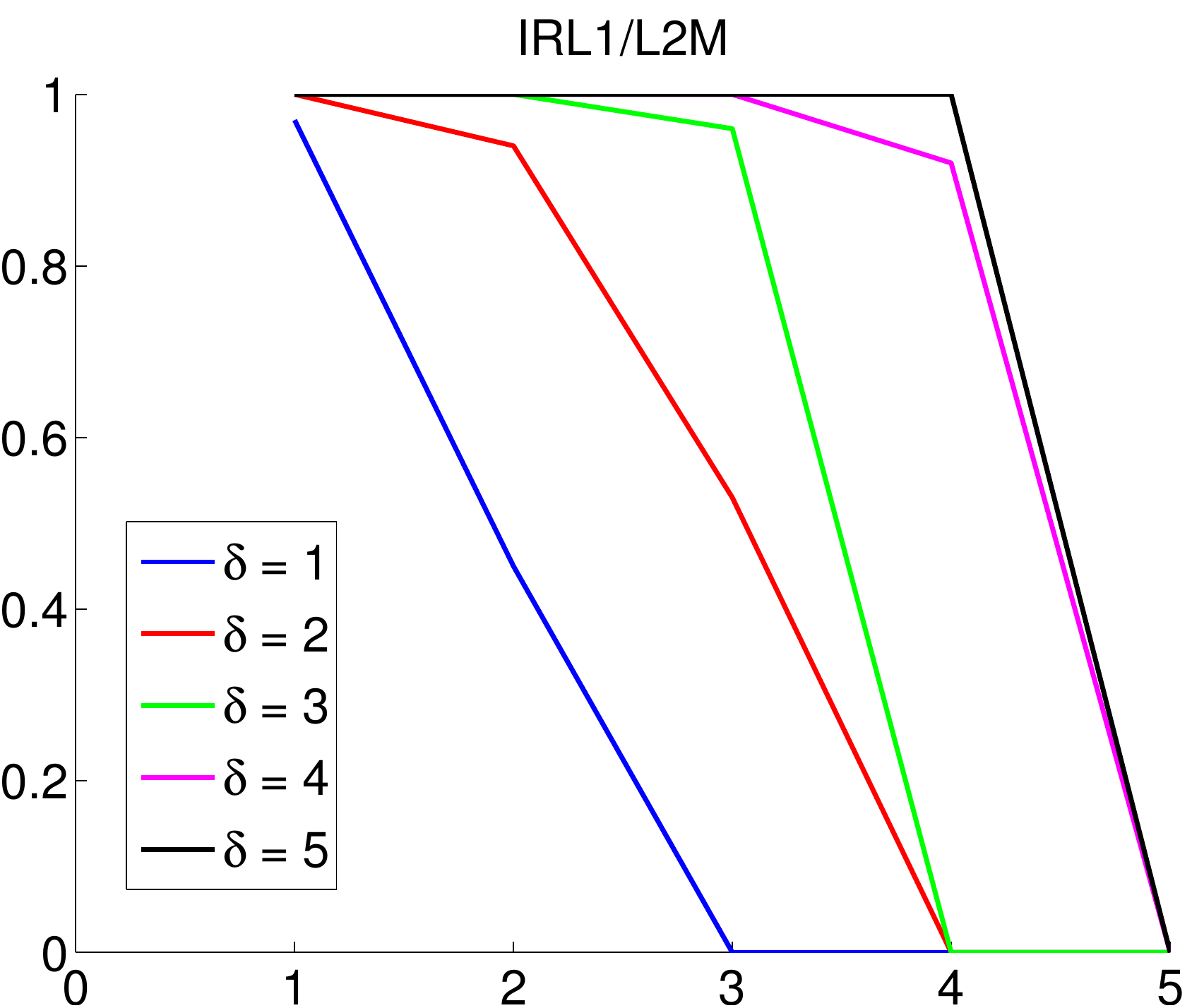}
	\includegraphics[width = .24\linewidth]{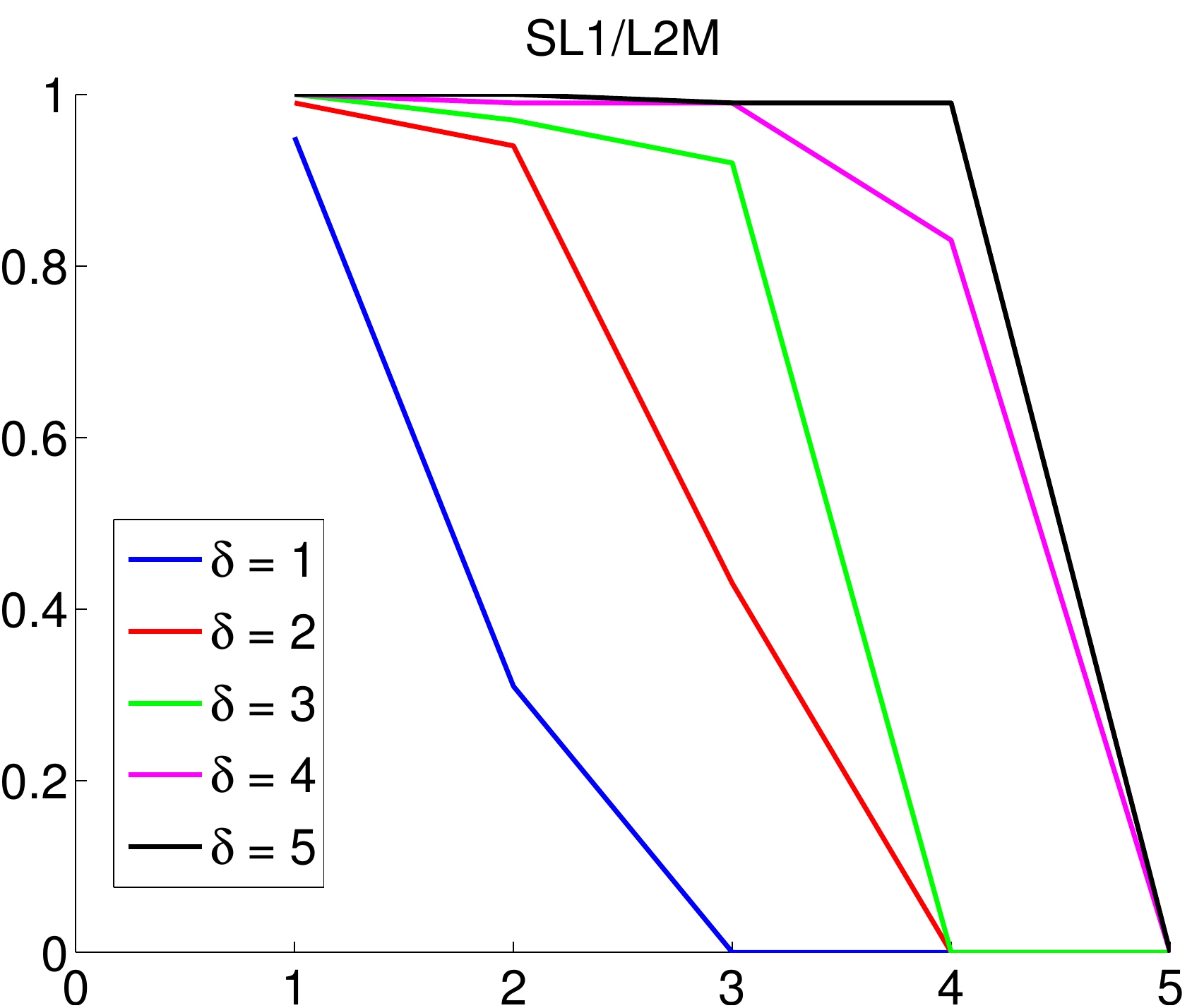}		
	\includegraphics[width = .24\linewidth]{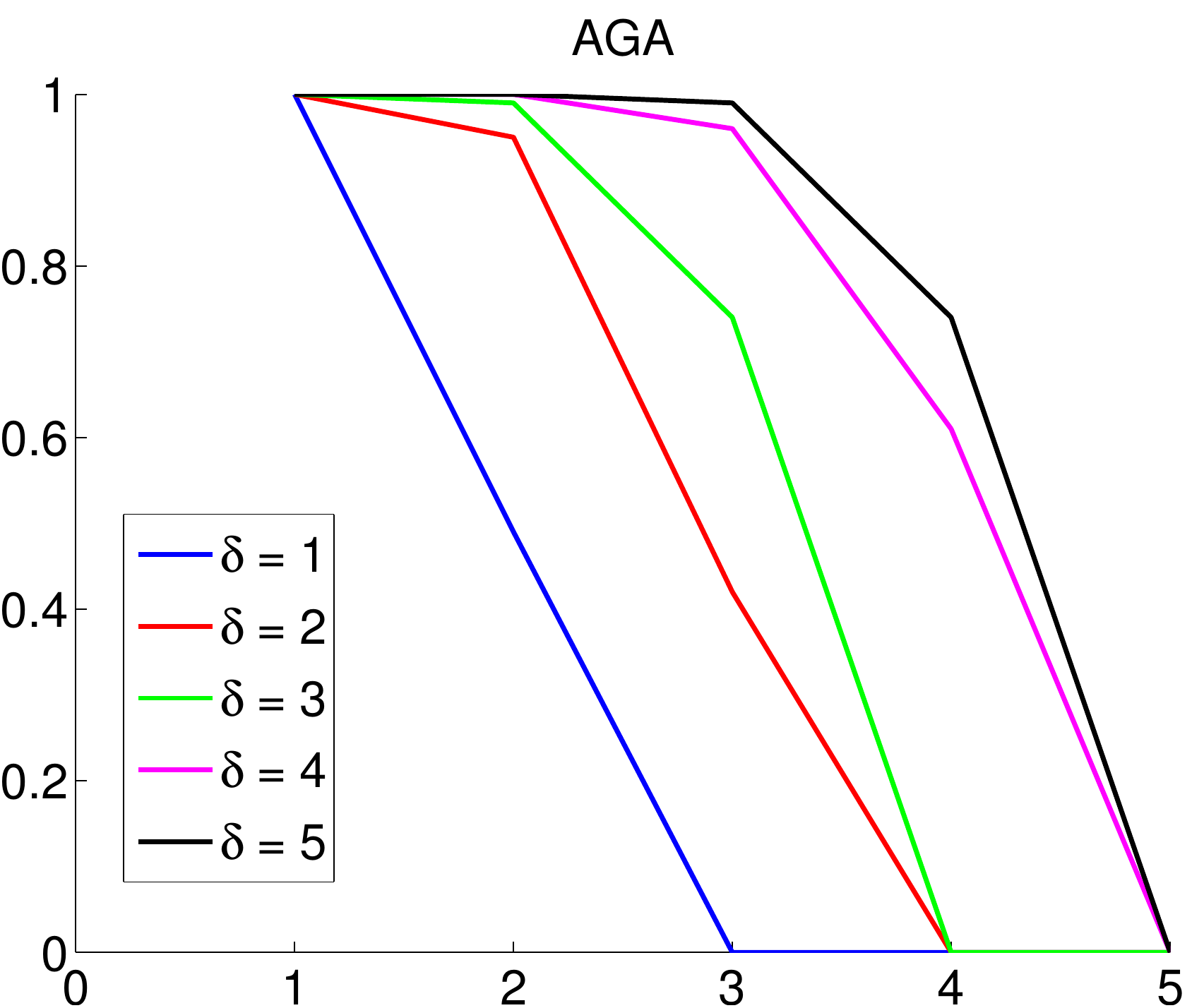}	
	\medskip\\
	$\g{d=4}$\\
	\includegraphics[width = .24\linewidth]{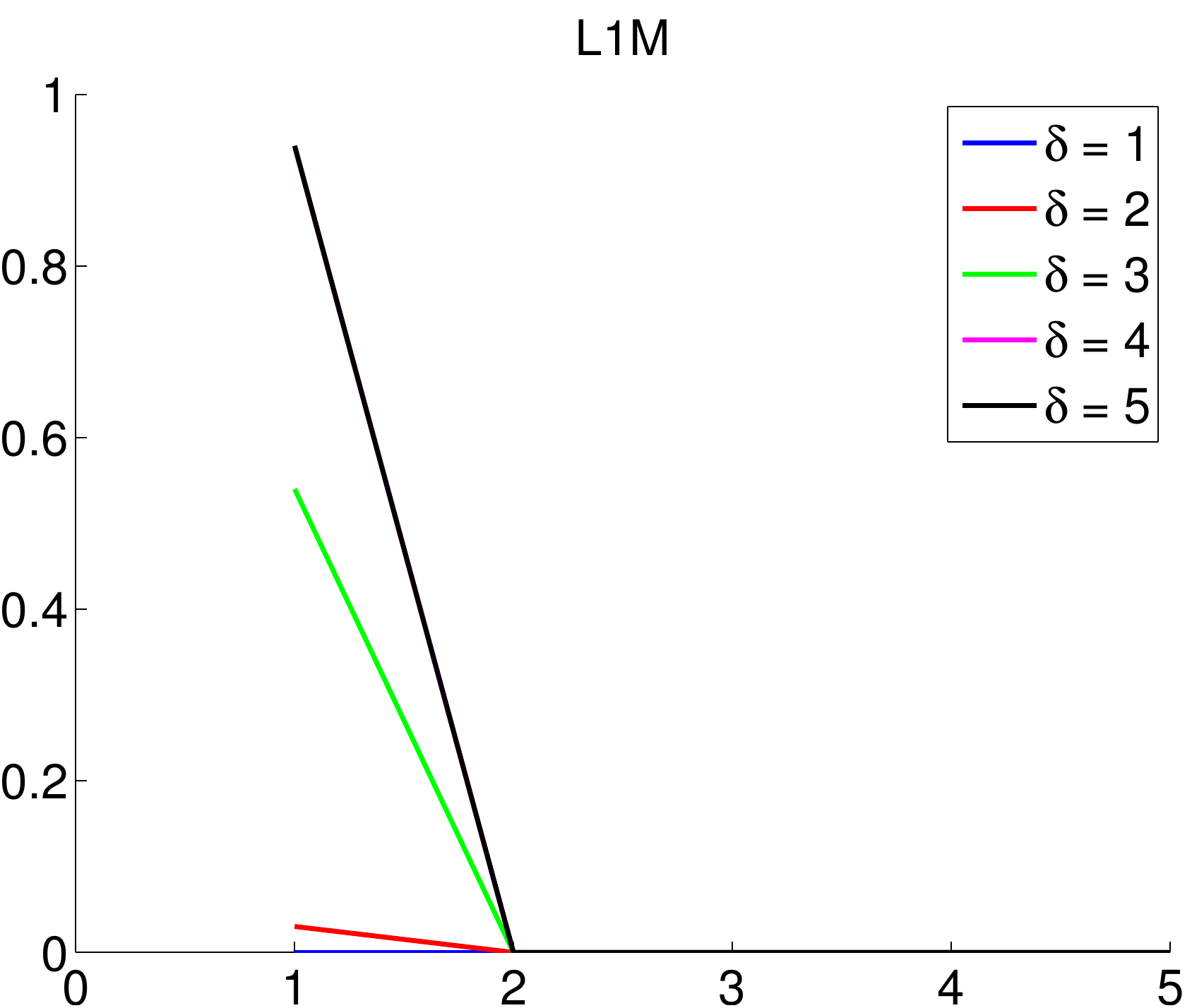}	
	\includegraphics[width = .24\linewidth]{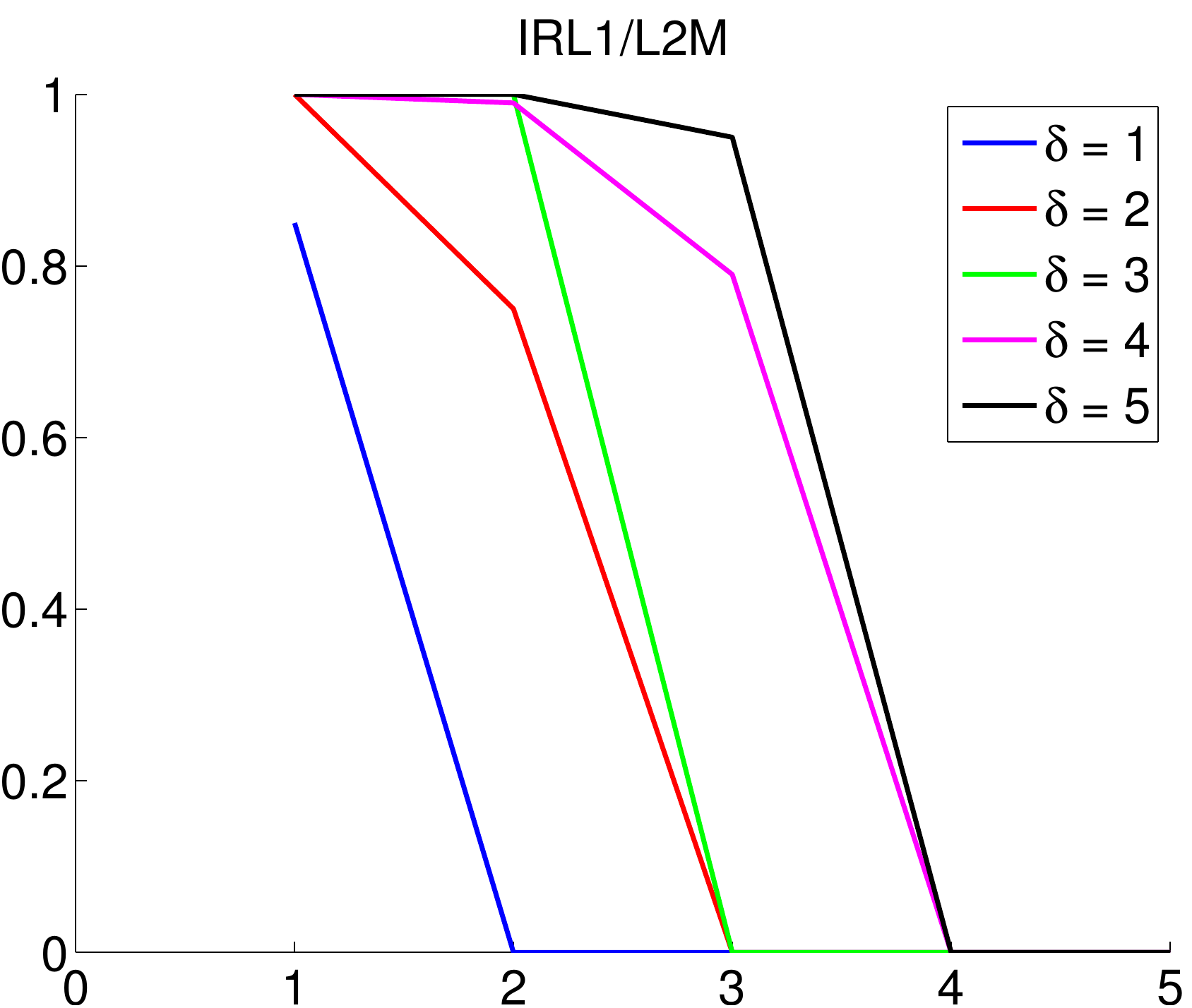}
	\includegraphics[width = .24\linewidth]{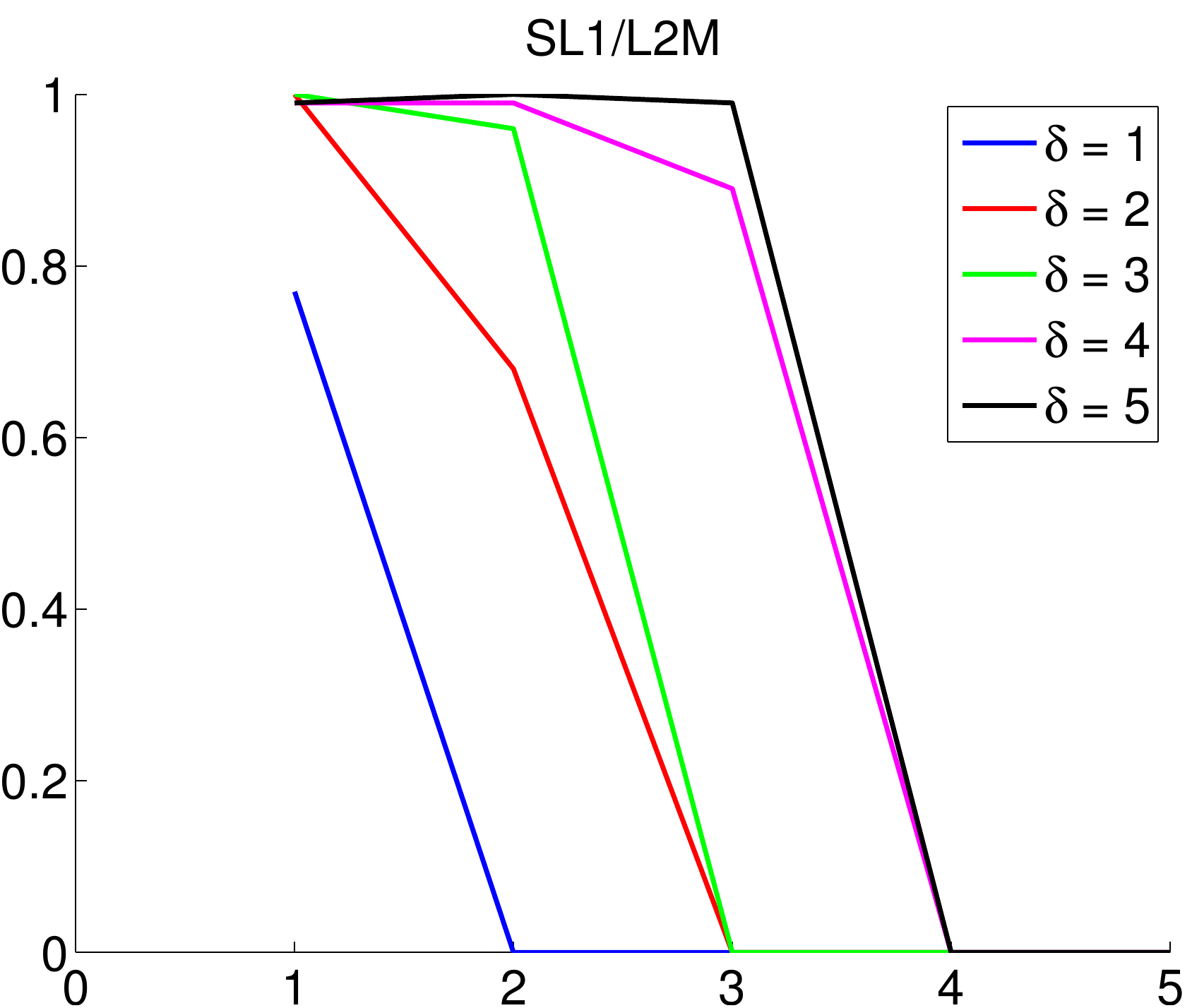}	
	\includegraphics[width = .24\linewidth]{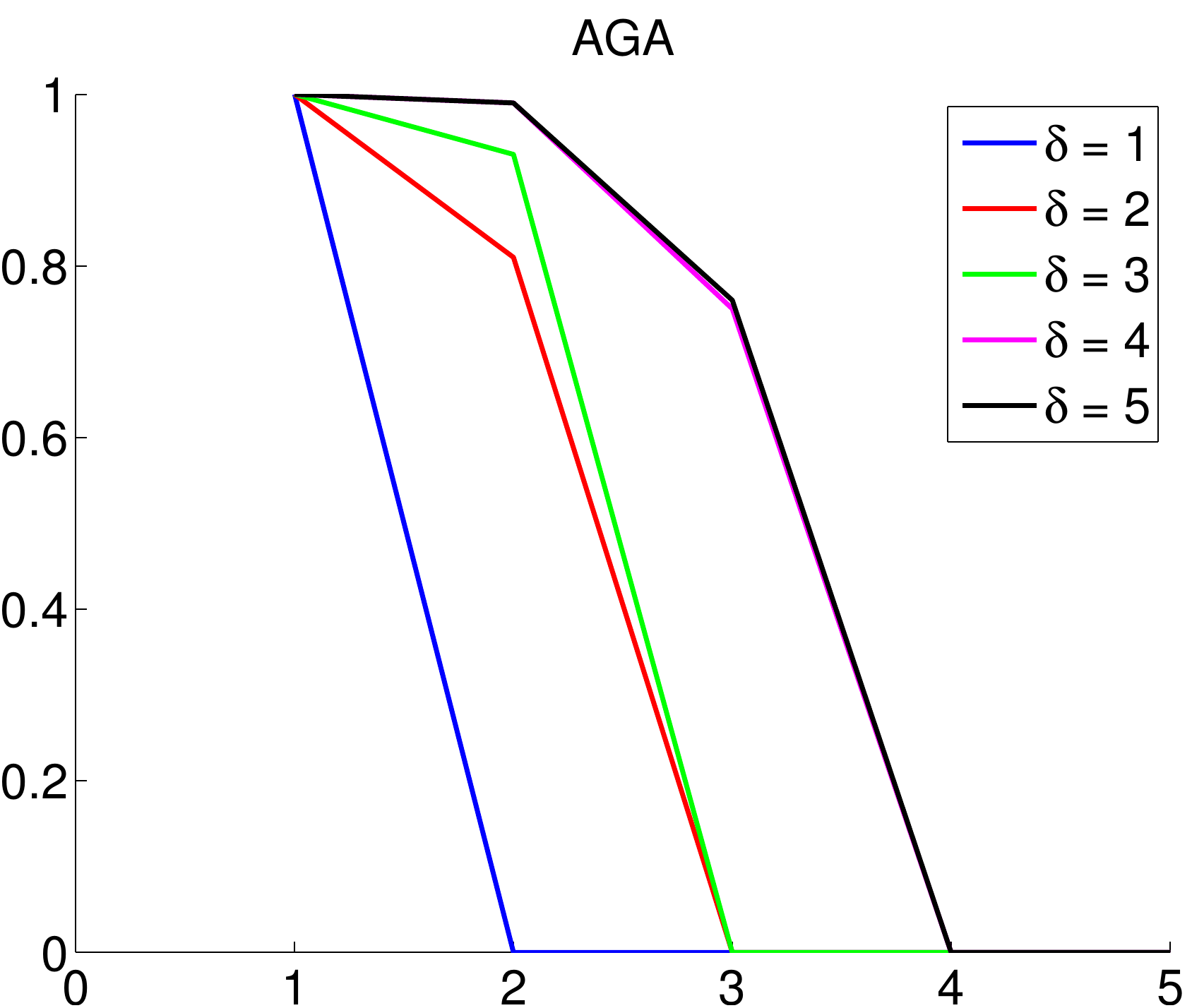}	
	\medskip\\
	$\g{d=5}$\\
	\includegraphics[width = .24\linewidth]{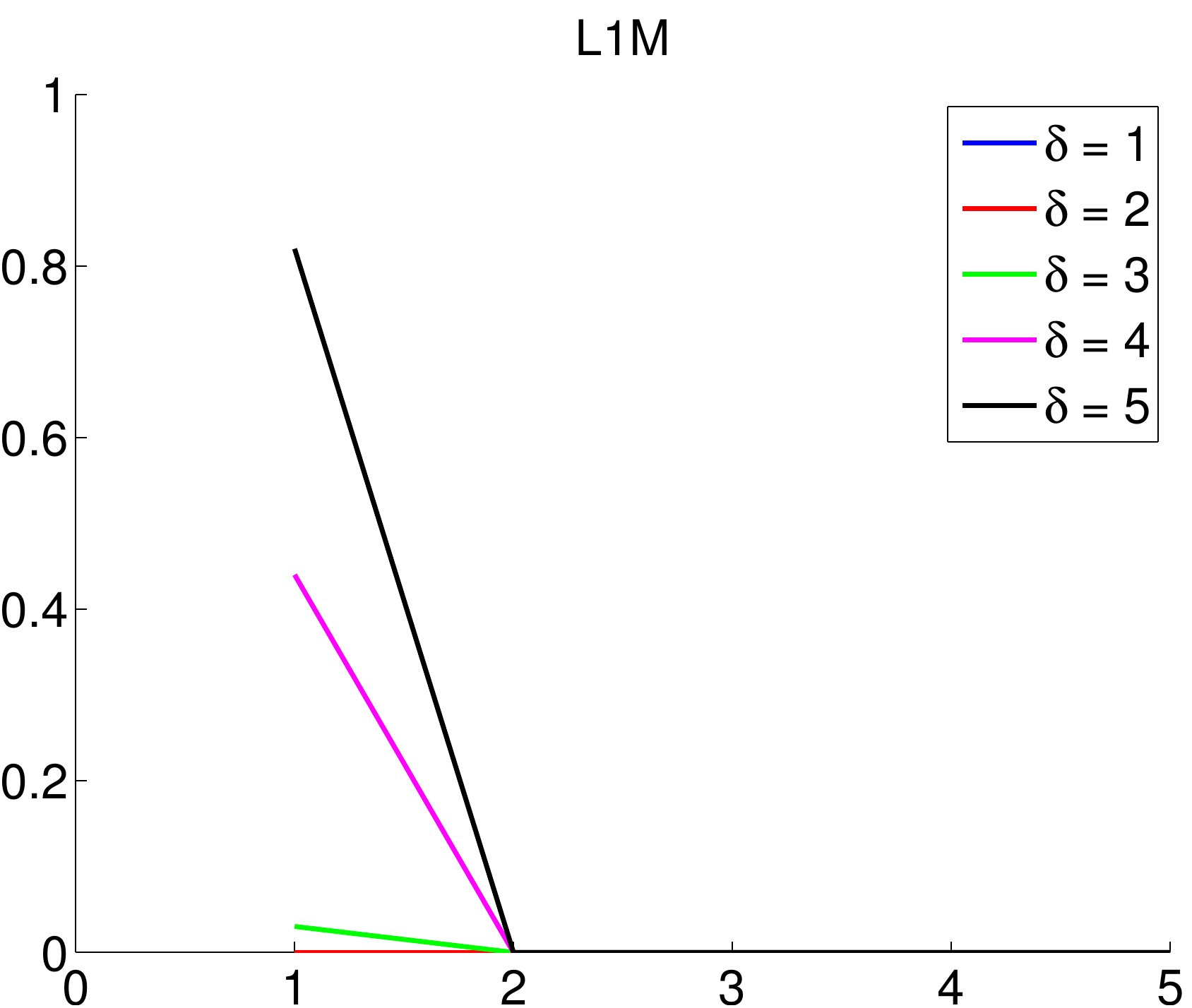}	
	\includegraphics[width = .24\linewidth]{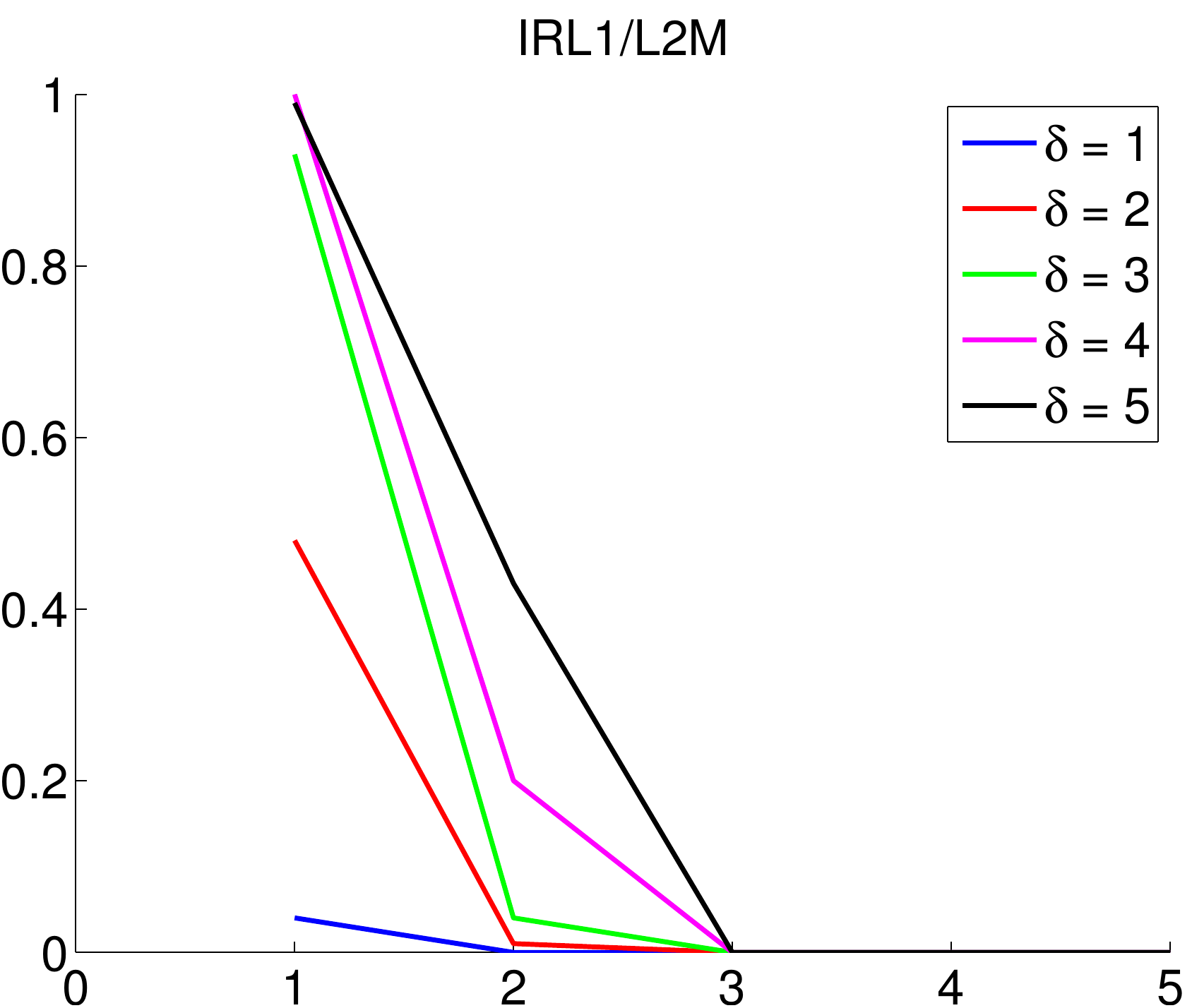}
	\includegraphics[width = .24\linewidth]{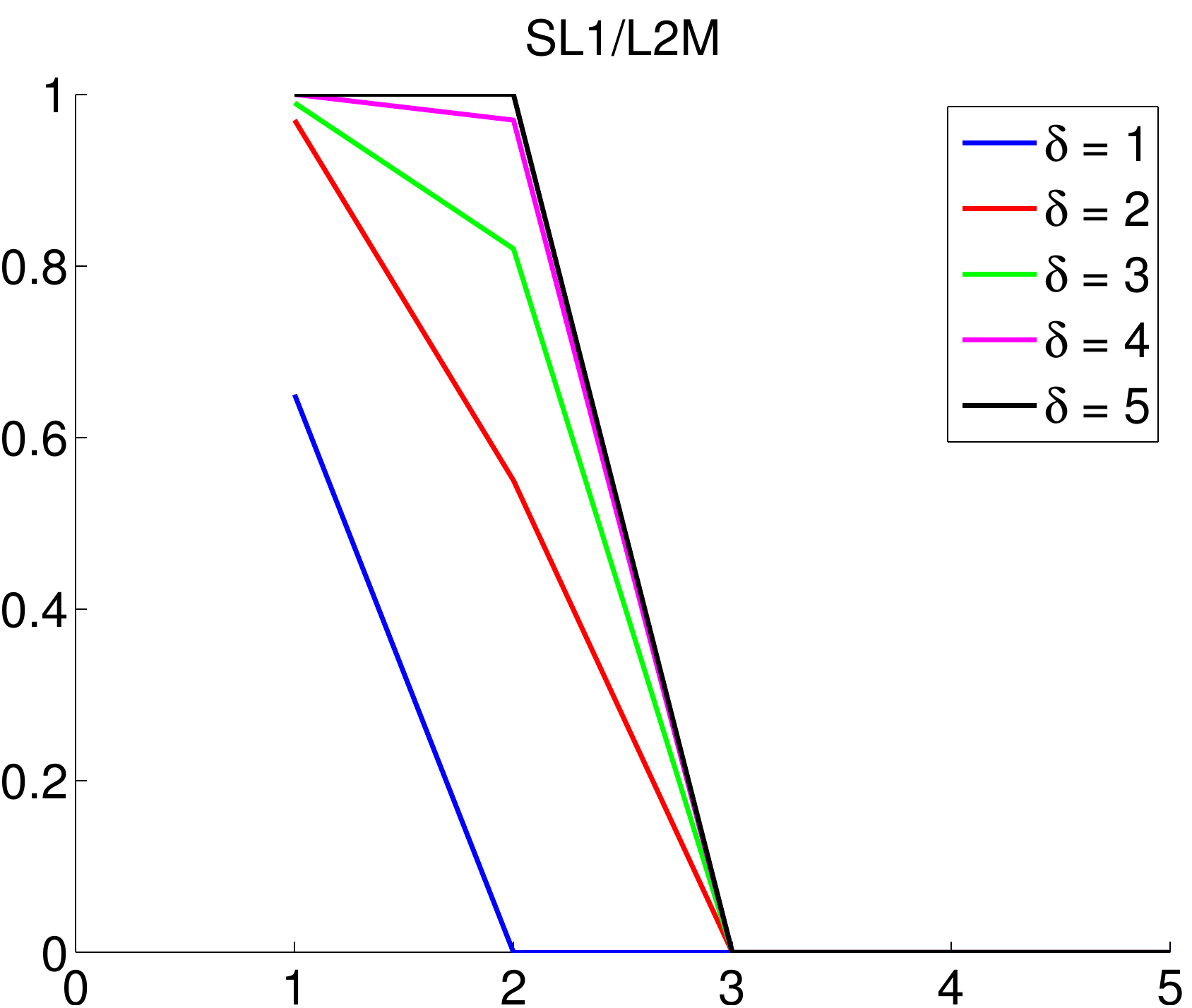}		
	\includegraphics[width = .24\linewidth]{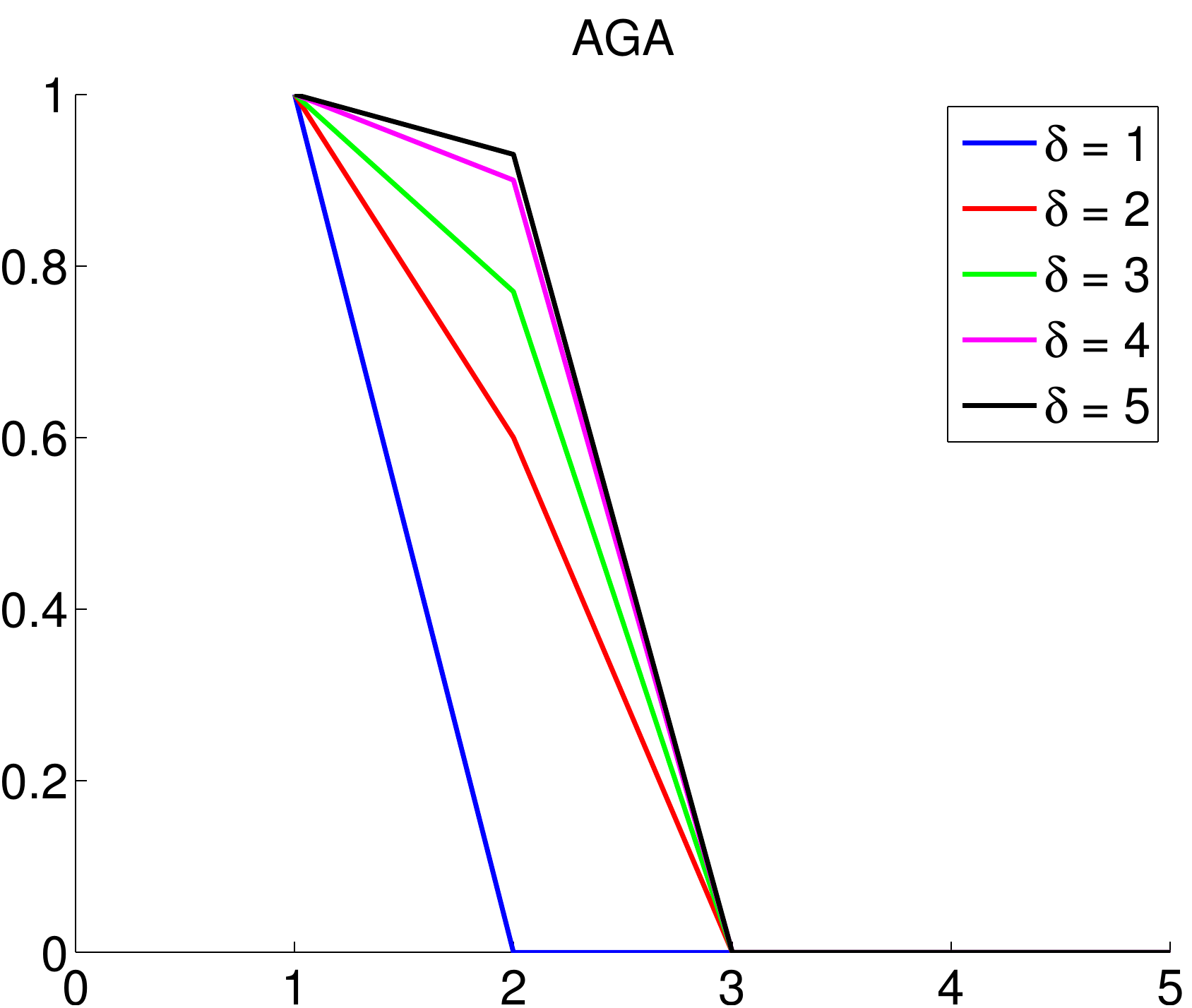}	
	\caption{Estimated probability of successful recovery versus $\|\g x_0\|_0$ for various $\delta = N/n$ in the case $n=10$. Each row considers a different degree $d$, while each column contains the results of a different algorithm (from left to right:  $\ell_1$M, IR$\ell_1\ell_2$M, S$\ell_1\ell_2$M and AGA). \label{fig:diagramn10}} 
\end{figure}
\begin{figure}
	\centering
	$\g{d=2}$\\
	\includegraphics[width = .24\linewidth]{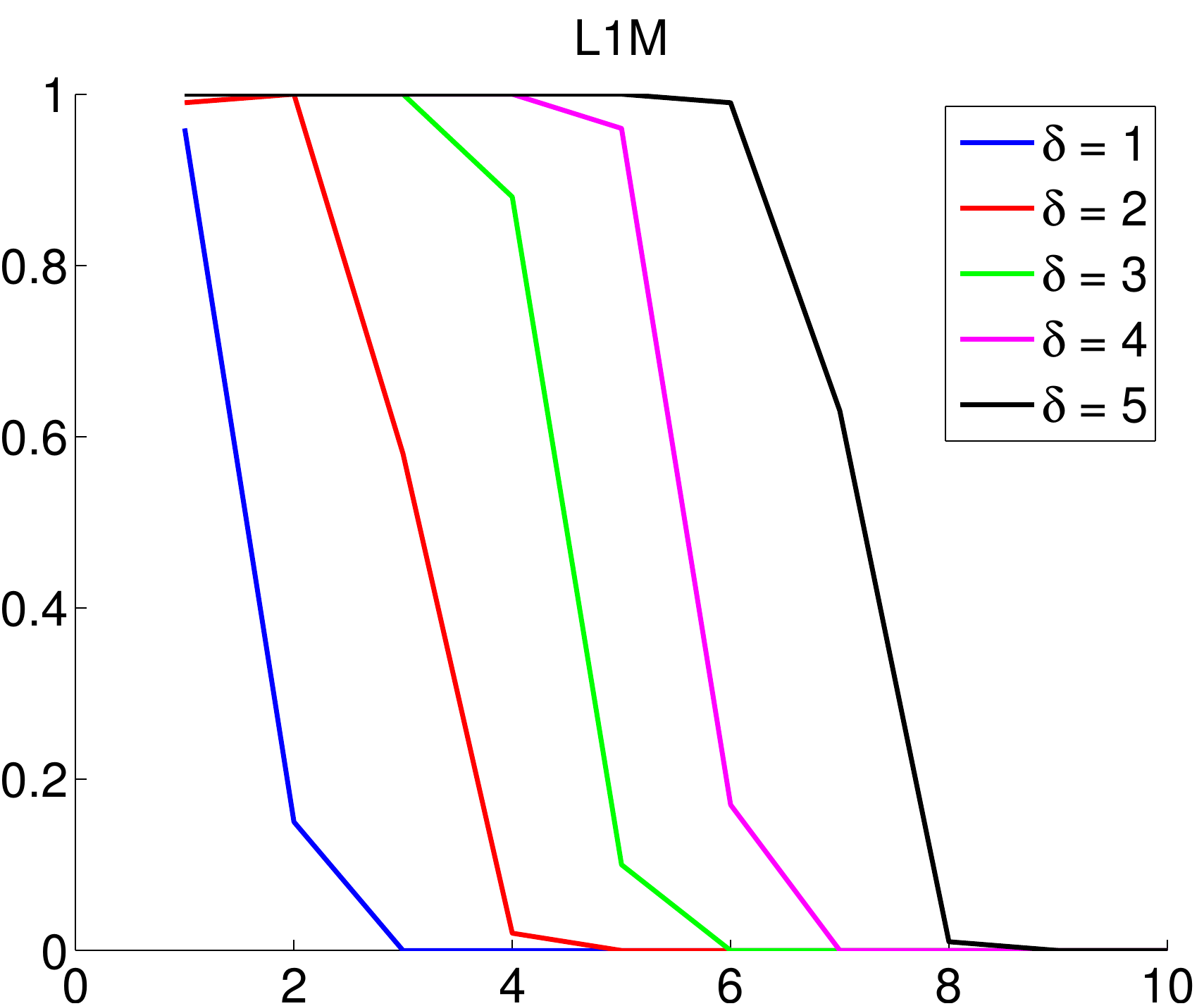}
	\includegraphics[width = .24\linewidth]{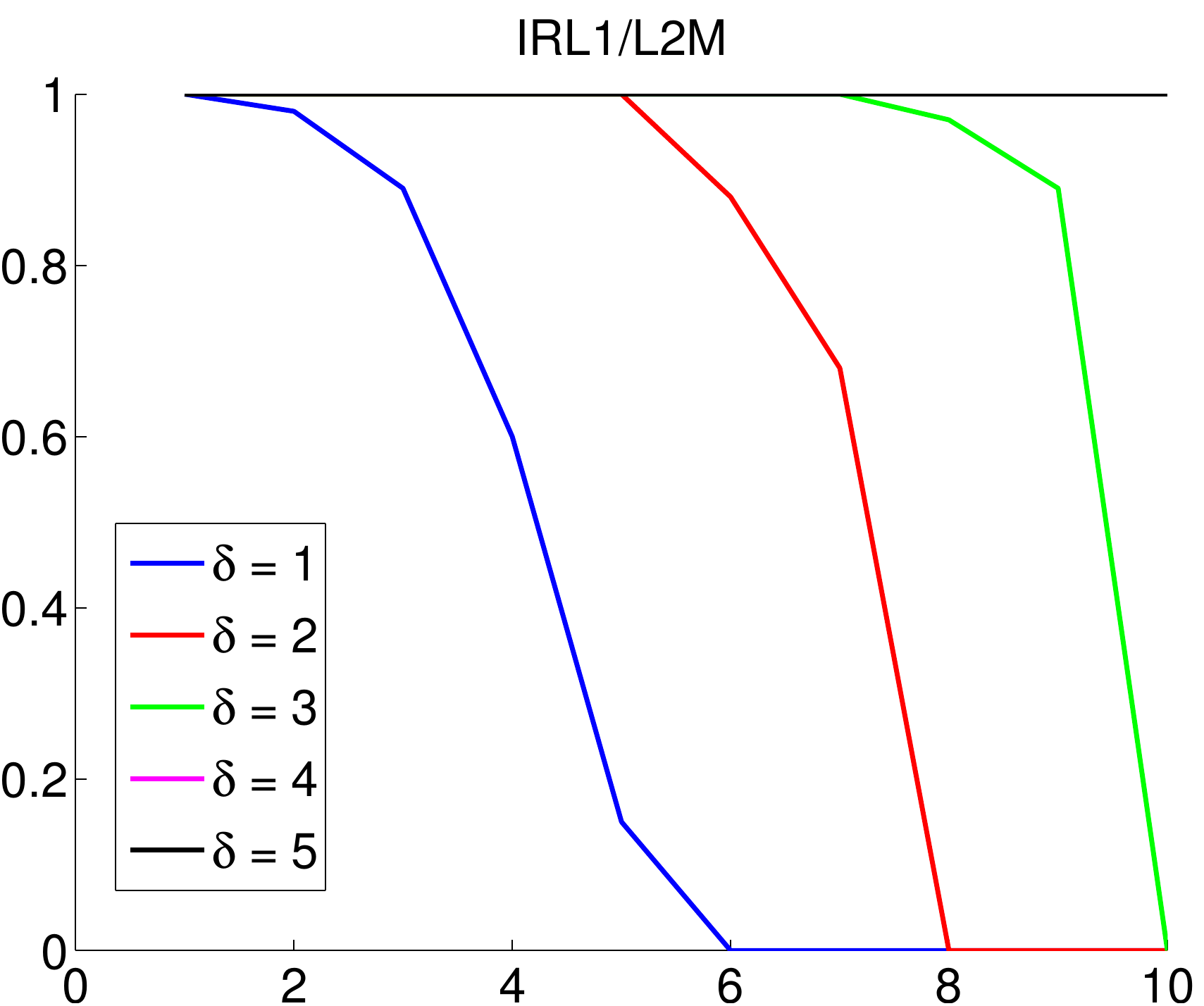}
	\includegraphics[width = .24\linewidth]{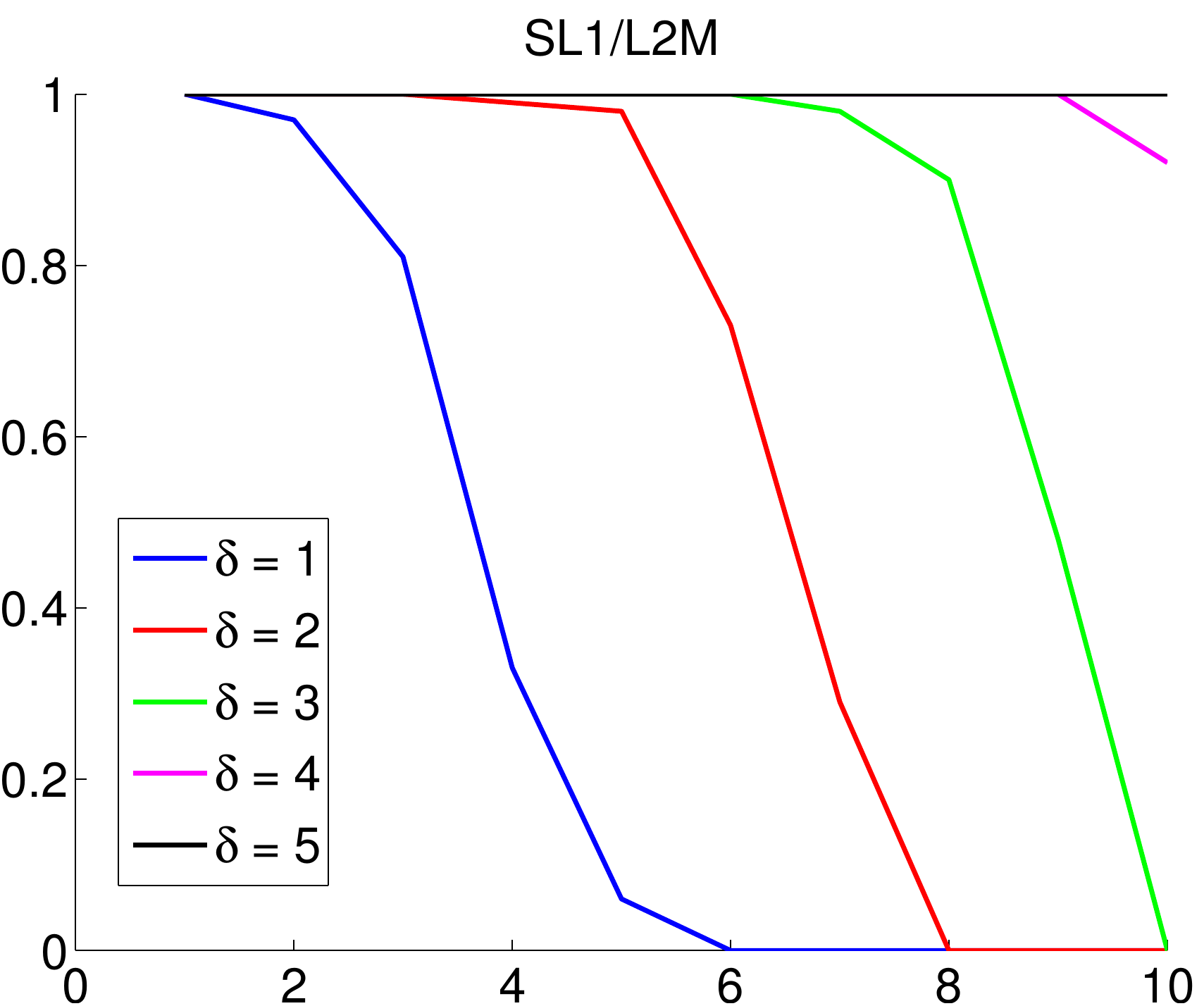}		
	\includegraphics[width = .24\linewidth]{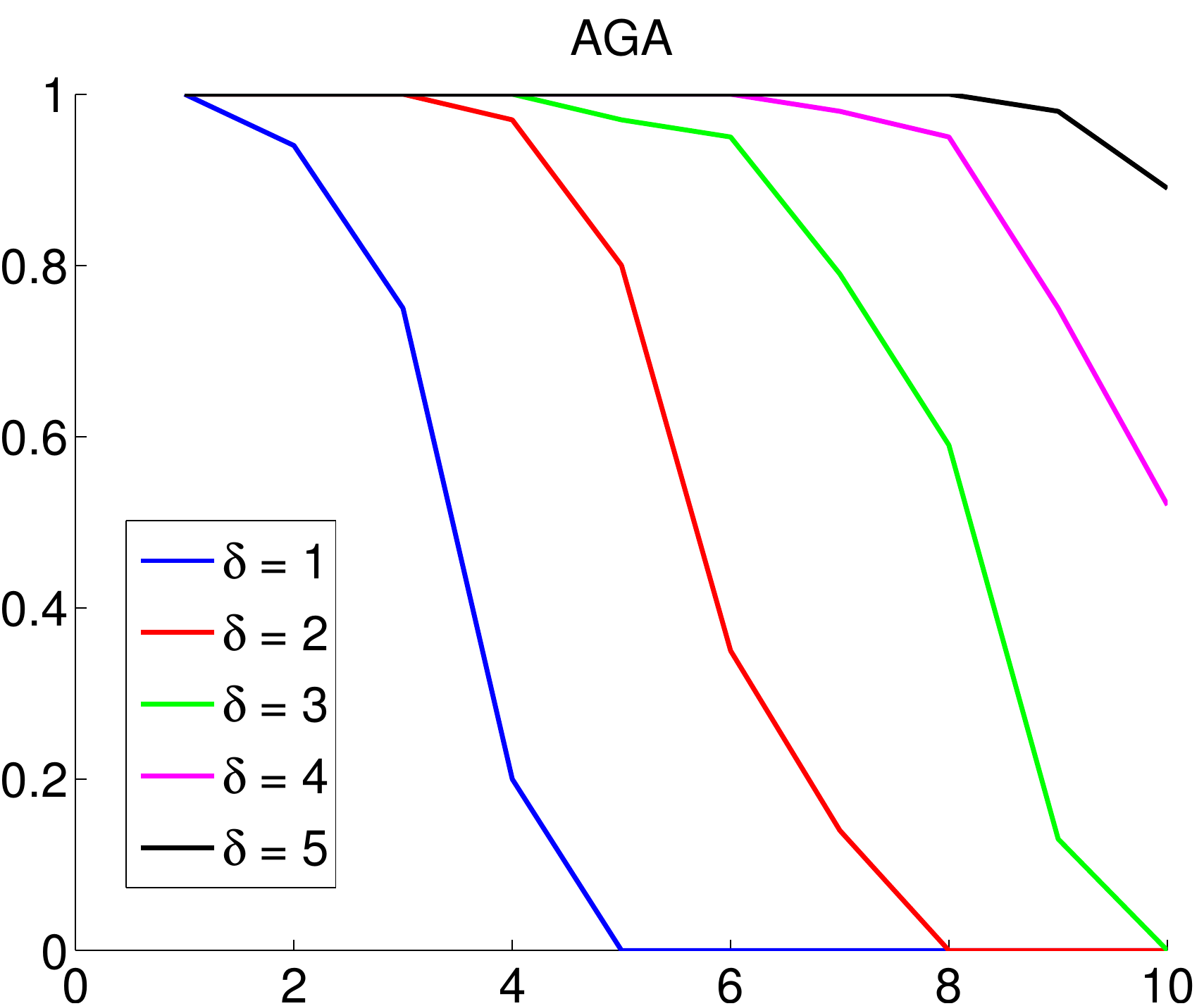}		
	\medskip\\
	$\g{d=3}$\\
	\includegraphics[width = .24\linewidth]{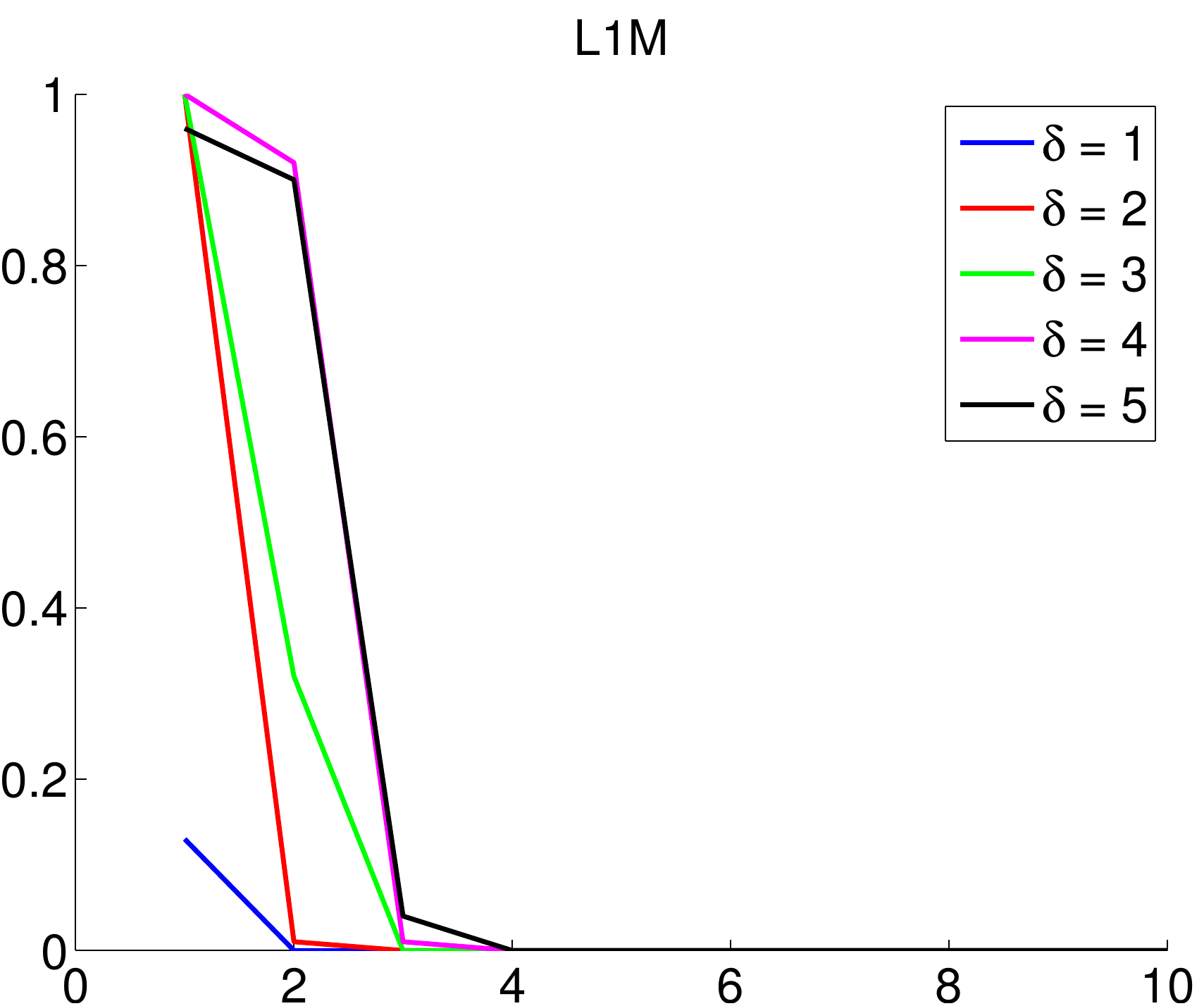}
	\includegraphics[width = .24\linewidth]{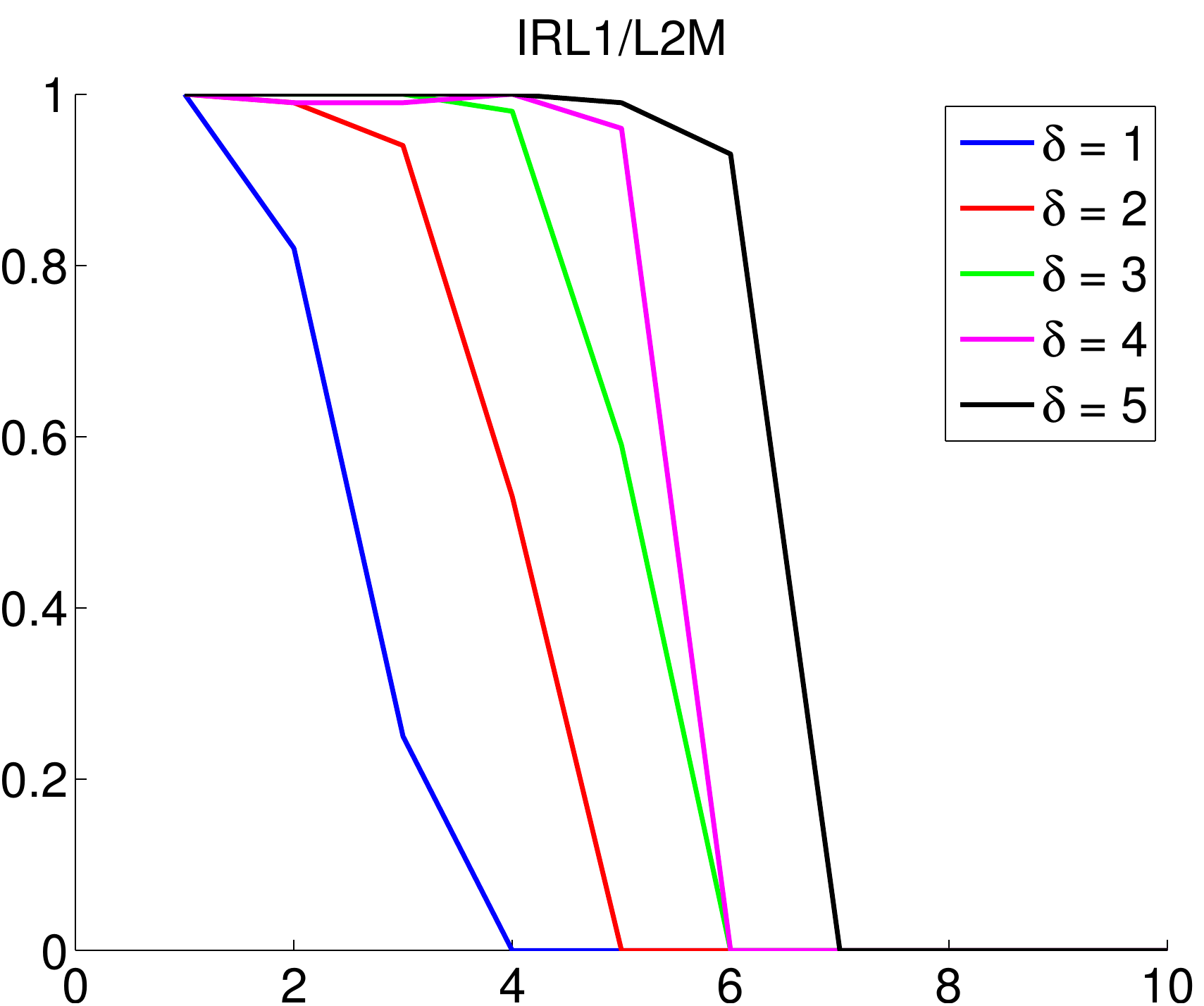}
	\includegraphics[width = .24\linewidth]{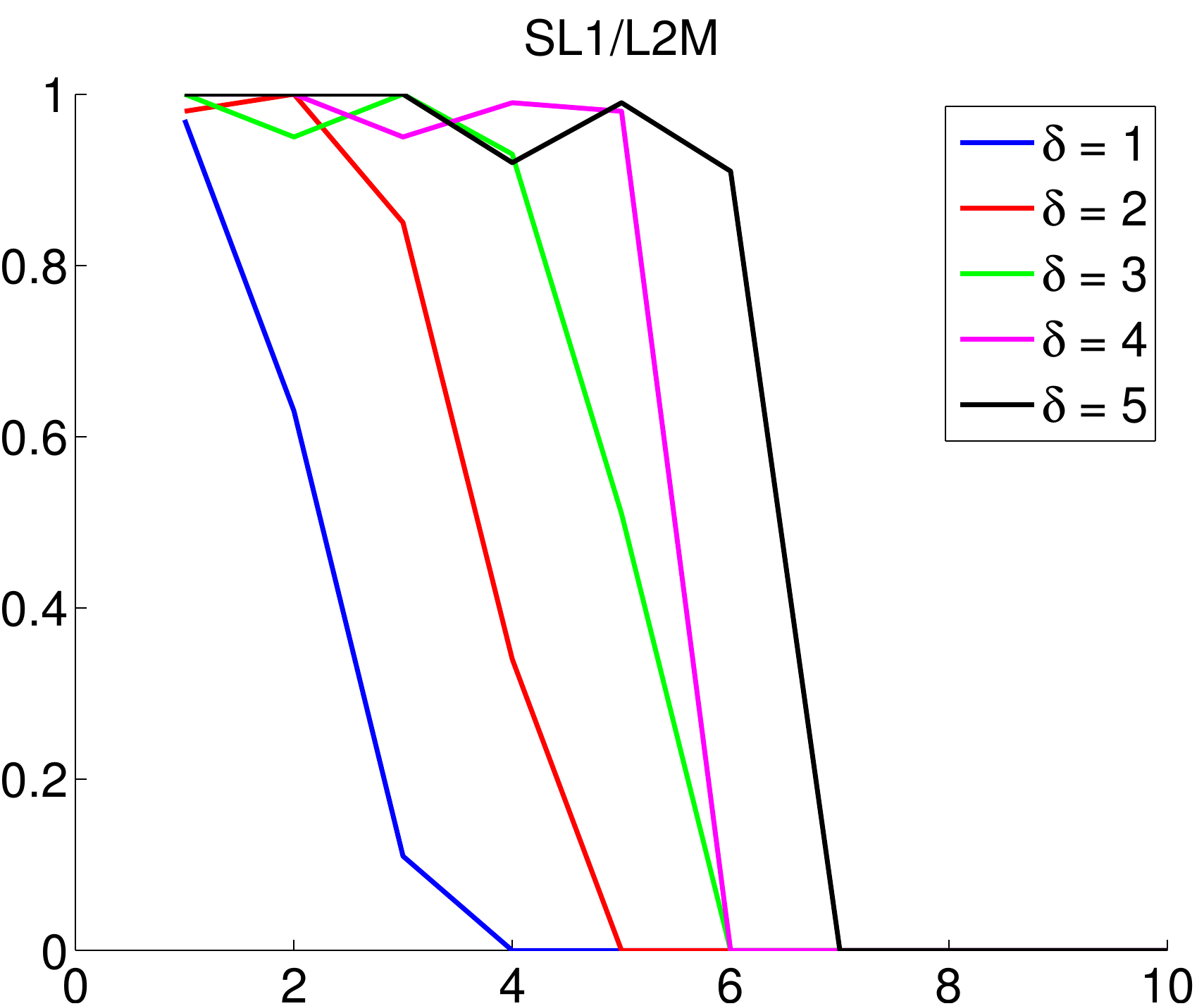}		
	\includegraphics[width = .24\linewidth]{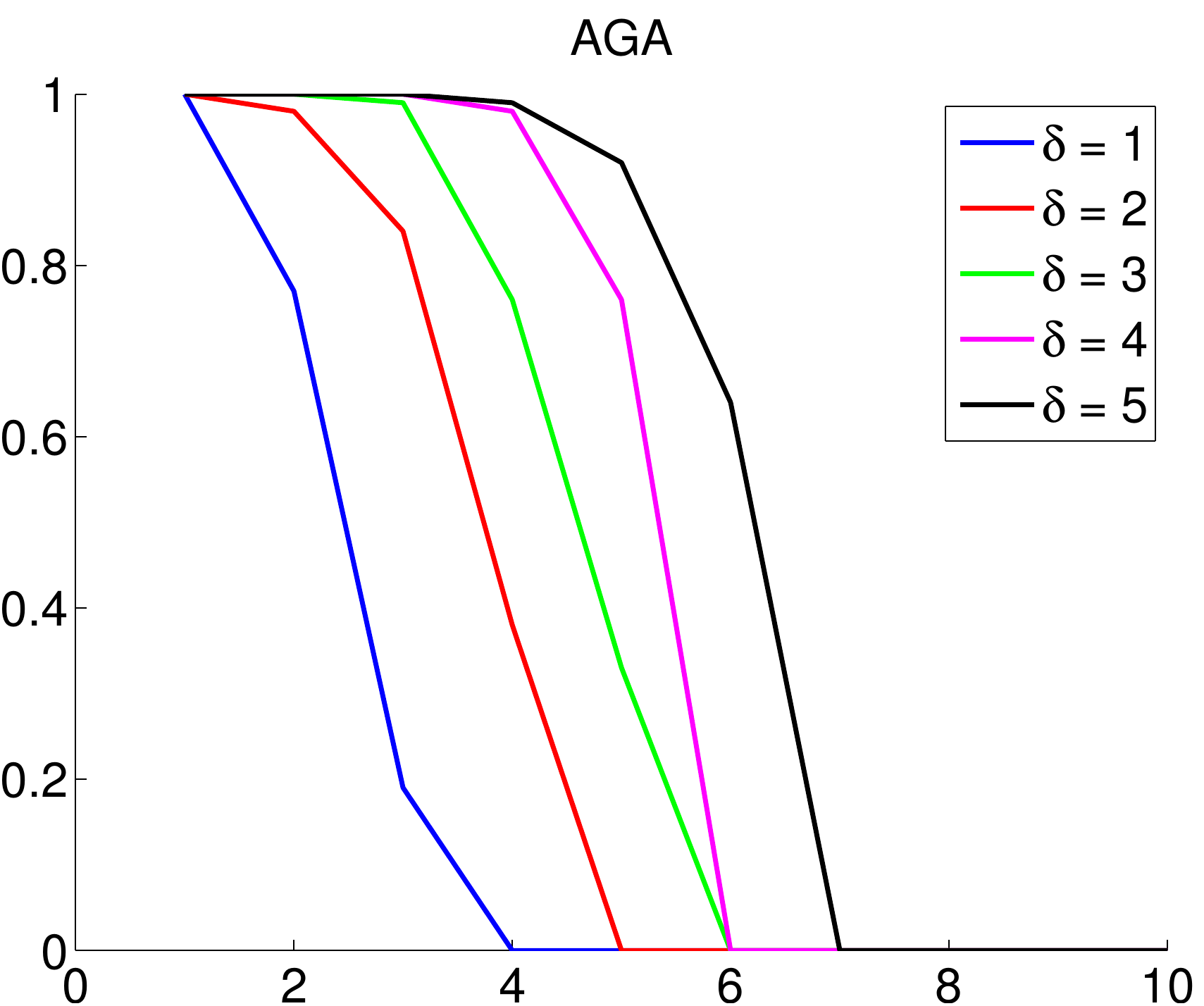}		
	\medskip\\
	$\g{d=4}$\\
	\includegraphics[width = .24\linewidth]{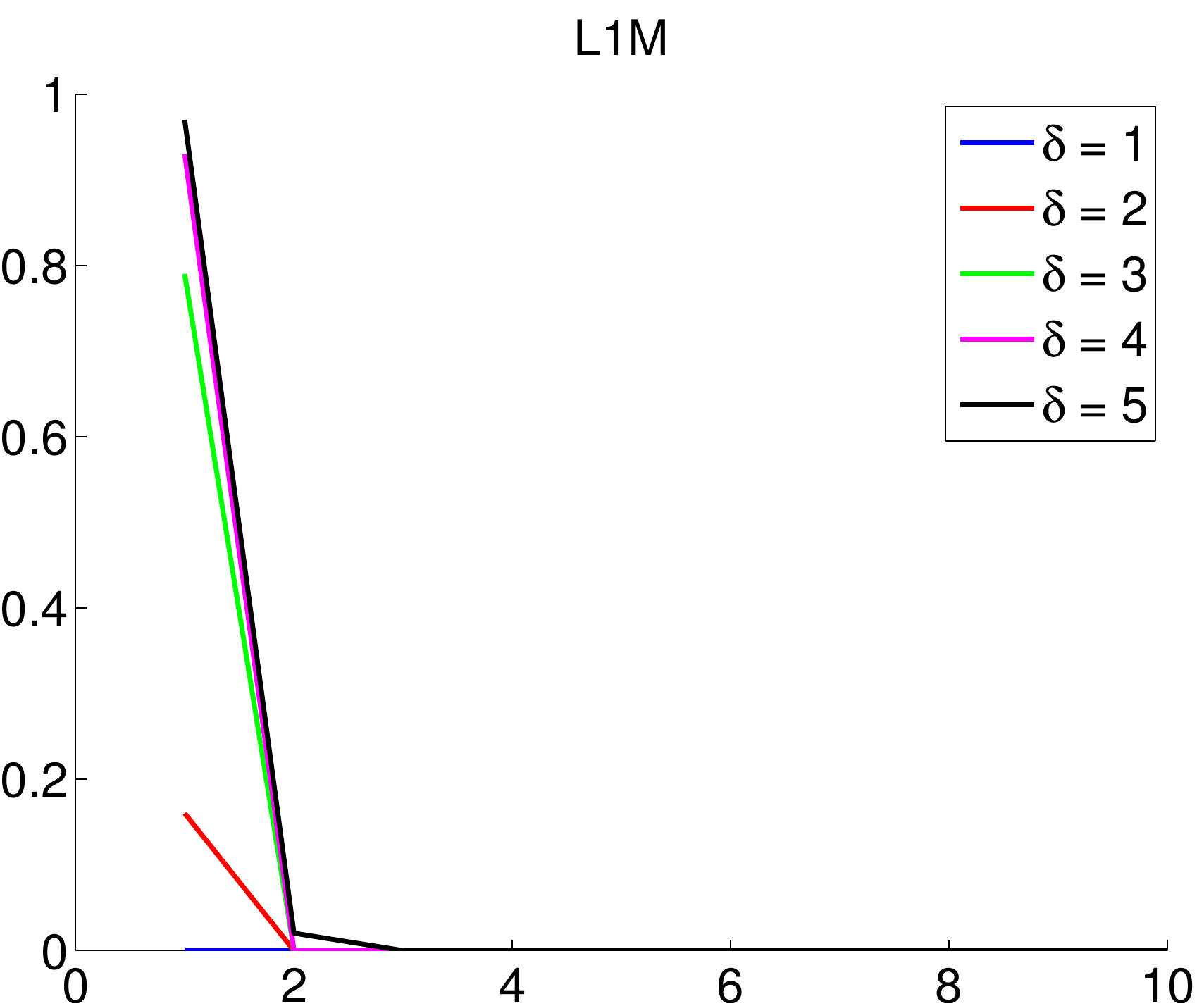}	
	\includegraphics[width = .24\linewidth]{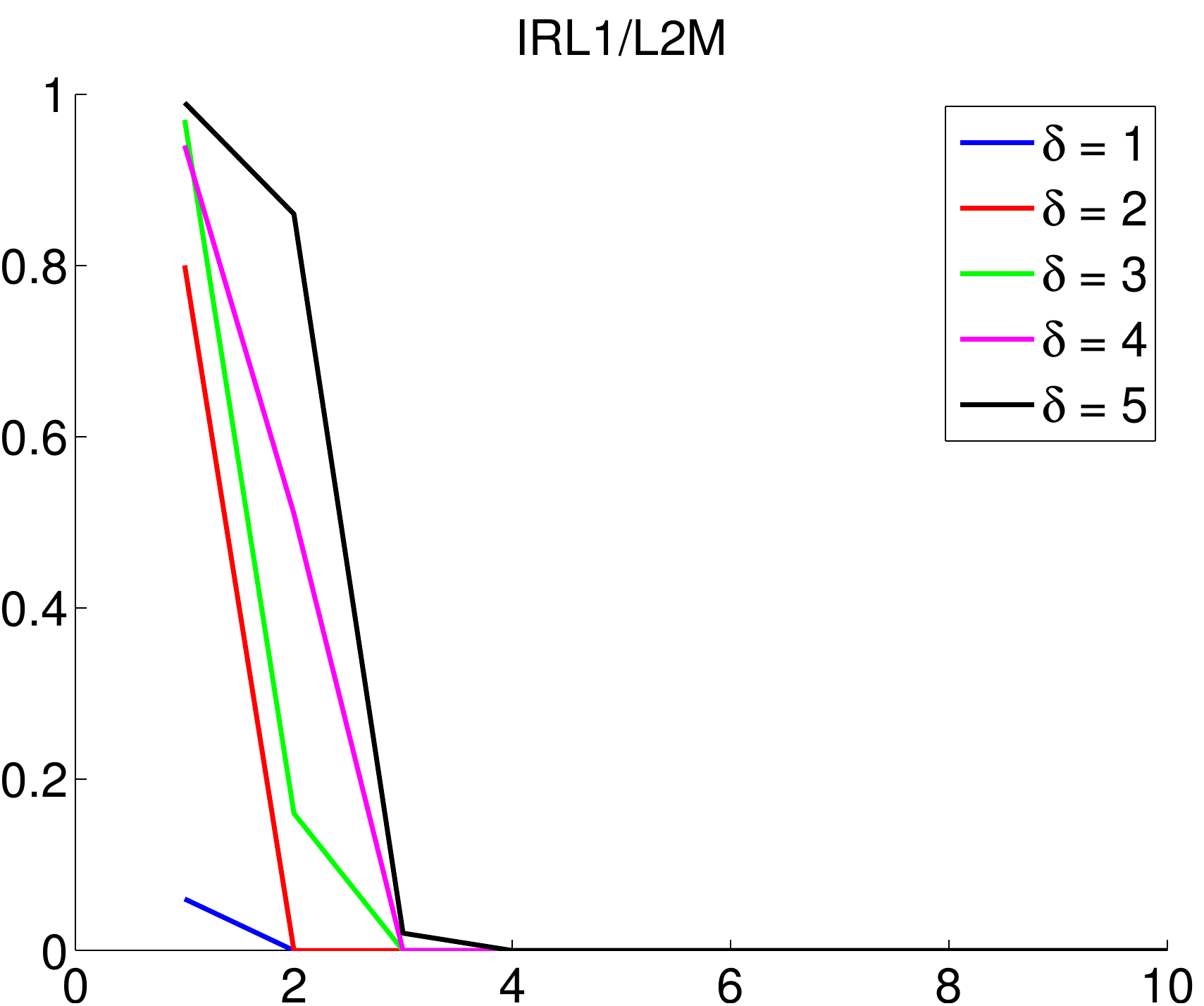}
	\includegraphics[width = .24\linewidth]{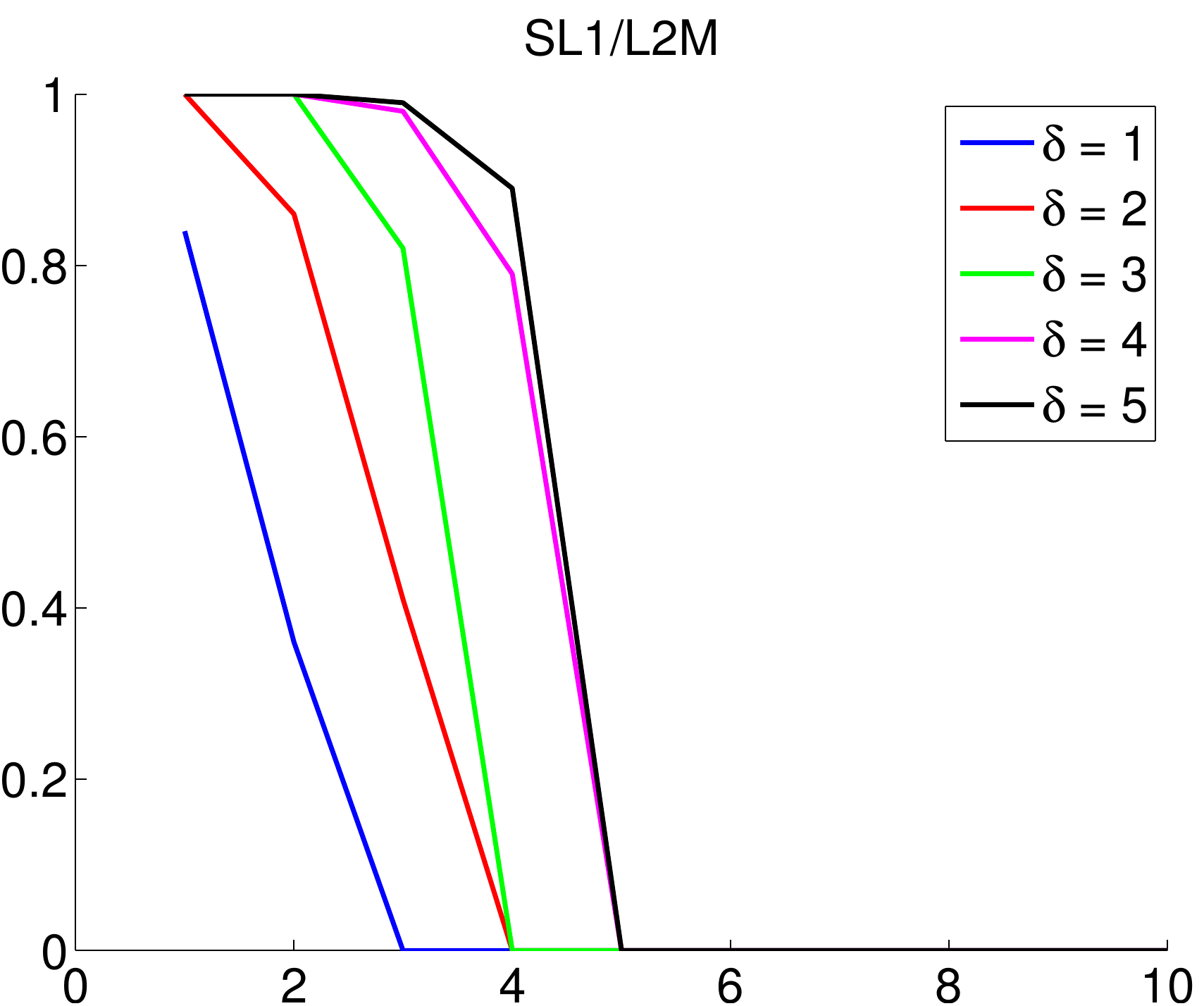}		
	\includegraphics[width = .24\linewidth]{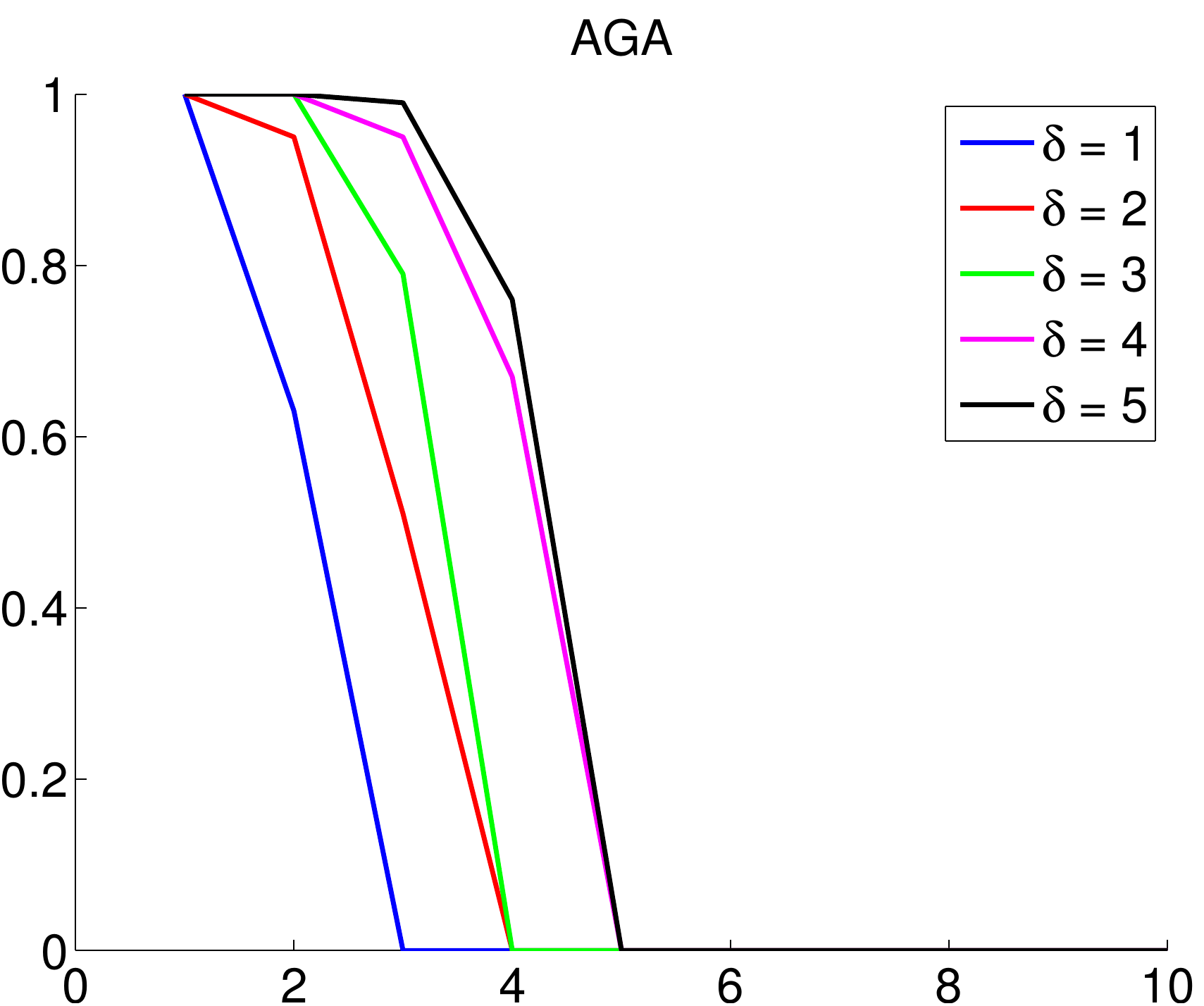}	
	$\g{d=5}$\\
	\includegraphics[width = .24\linewidth]{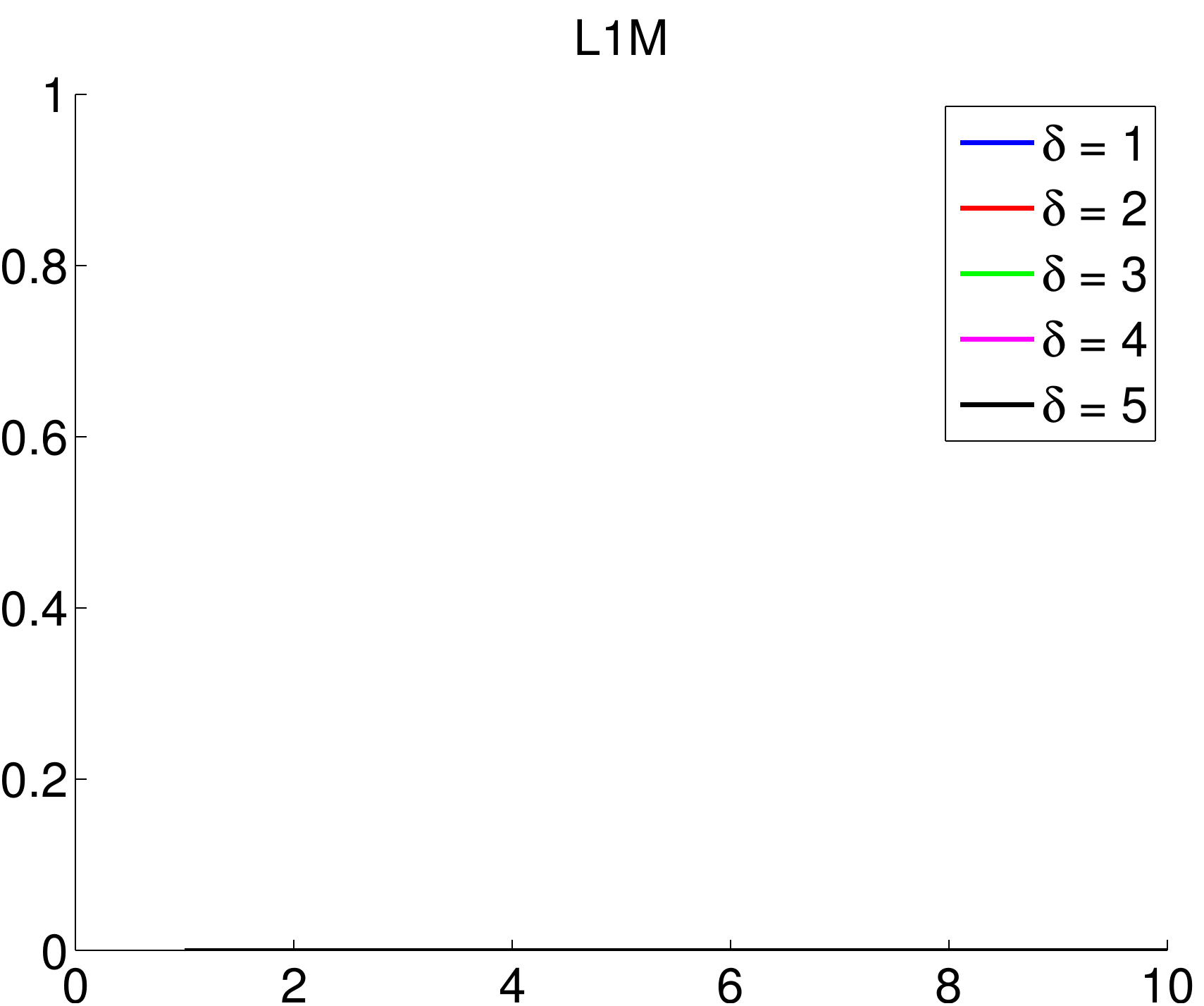}	
	\includegraphics[width = .24\linewidth]{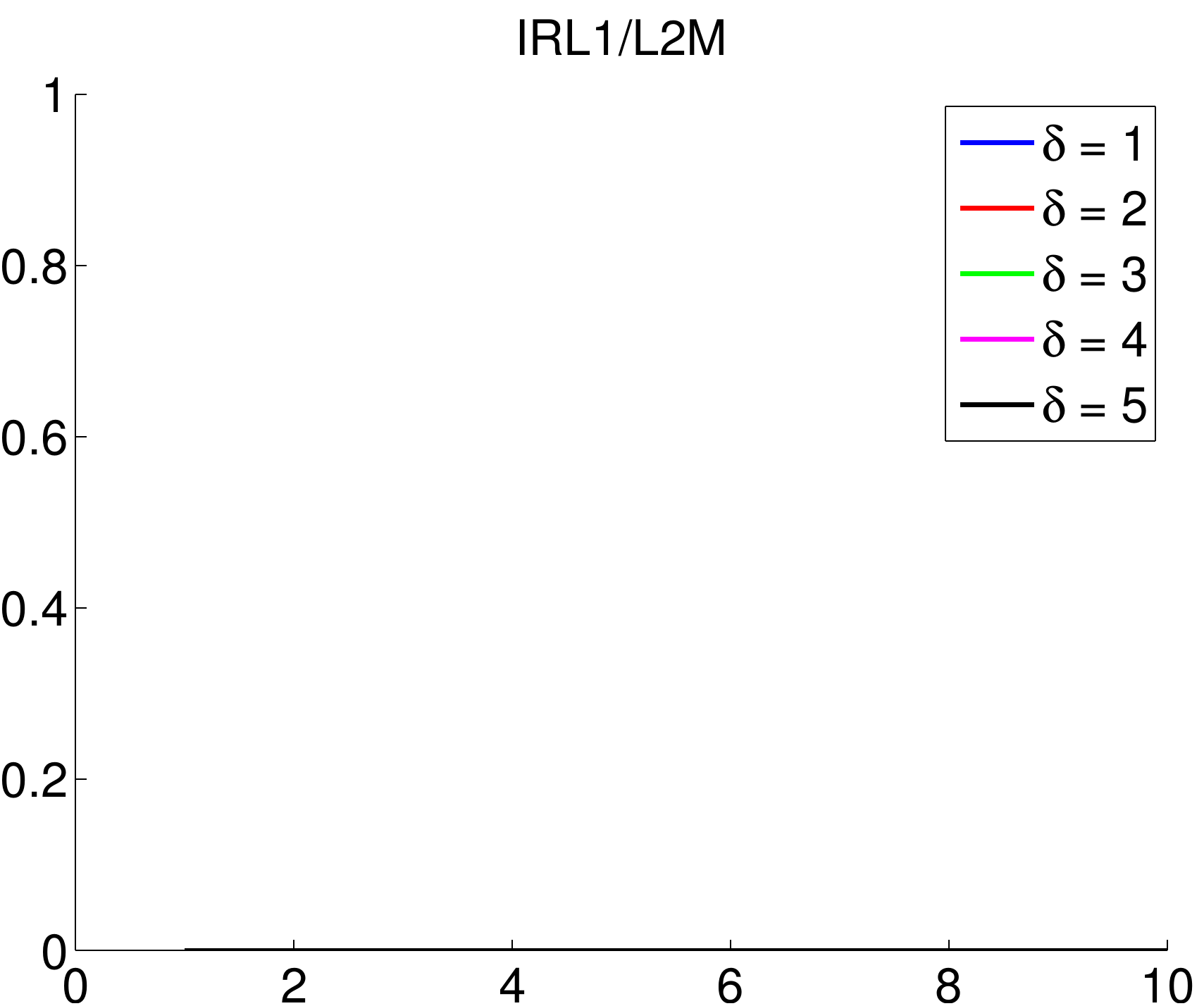}
	\includegraphics[width = .24\linewidth]{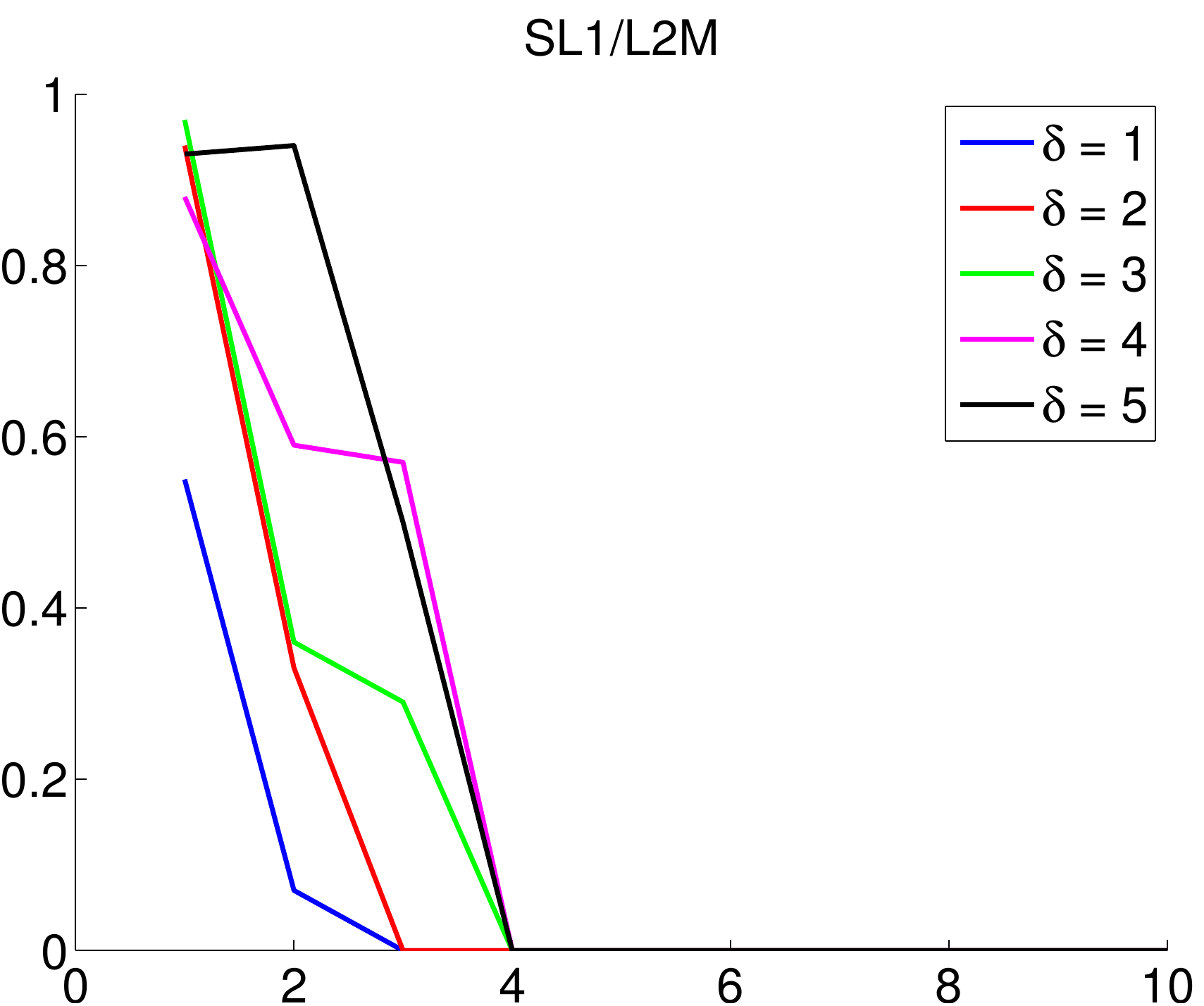}		
	\includegraphics[width = .24\linewidth]{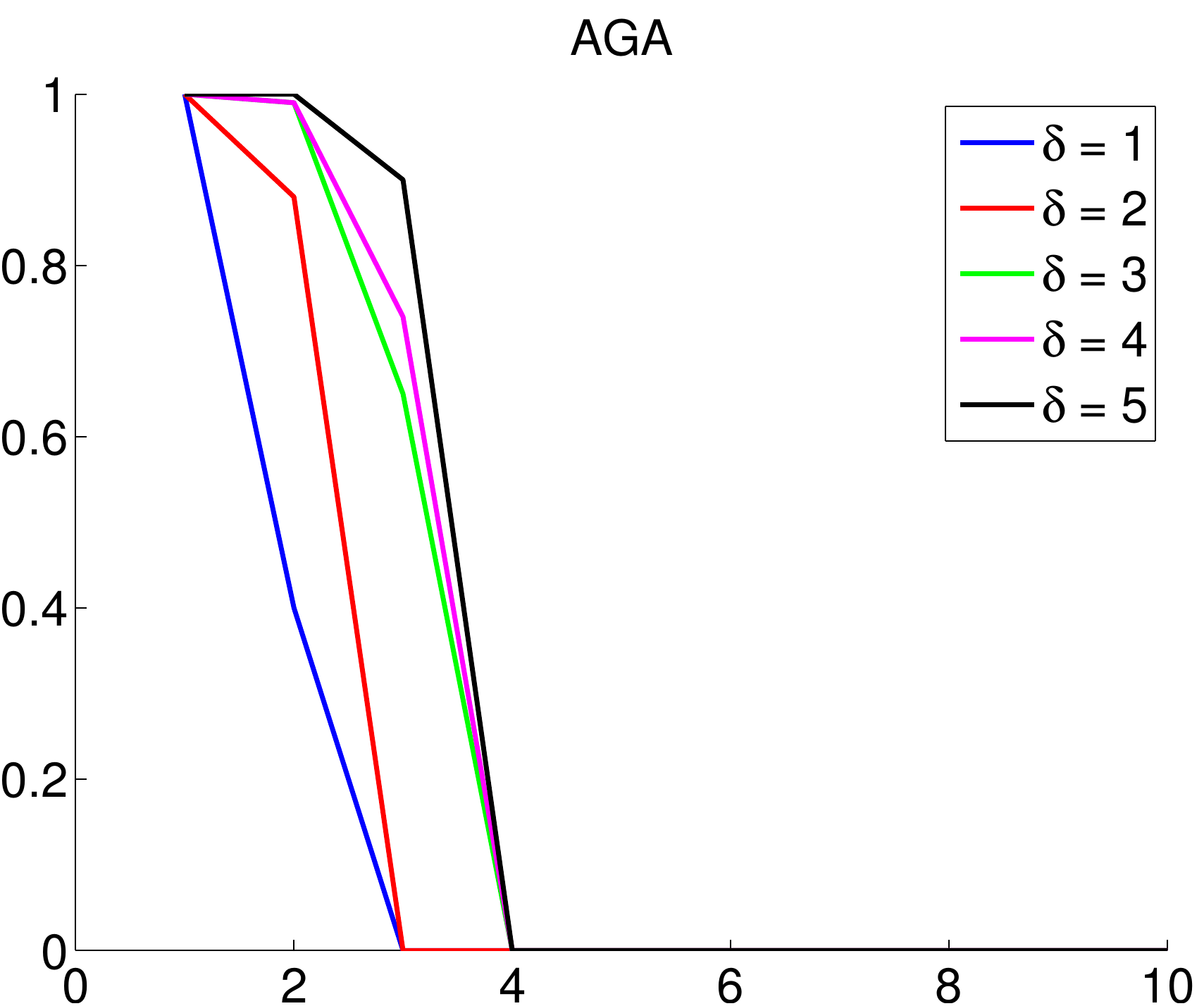}	
	\caption{Estimated probability of successful recovery versus $\|\g x_0\|_0$ for various $\delta = N/n$ in the case $n=20$. Each row considers a different degree $d$, while each column contains the results of a different algorithm (from left to right: $\ell_1$M, IR$\ell_1\ell_2$M, S$\ell_1\ell_2$M and AGA).\label{fig:diagramn20}} 
\end{figure}

\paragraph{Computational complexity.} 
The next experiments evaluate the computational complexity of the methods with respect to the number of variables $n$, the degree $d$ and the sparsity level $\|\g x_0\|_0$. Focusing only on the computing time, we consider favorable settings with these parameters and the number of equations $N$ set such that $\g x_0$ should be recovered in most cases. The left plot of Fig.~\ref{fig:time} shows the mean time of the methods for a range of $n\in[10,40]$ with $\|\g x_0\|_0=3$, $d=2$ and $N=50$. They all have a complexity exponential in $n$ due to the dimension $M$ of $\g \phi$. In addition, they have similar rates of increase, except the EGA for which the computing time grows faster with $n$ due to the combinatorial nature of the algorithm. Similar results, shown in the middle plot, are obtained with respect to the degree $d$ (for $n=10$, $\|\g x_0\|_0=2$ and $N=50$), except that the greedy algorithms benefit from a much lower rate of increase. 
In these experiments, the reweighting scheme of the S$\ell_1\ell_2$M is faster than the one used by the IR$\ell_1\ell_2$M, but its computing time highly depends on $\|\g x_0\|_0$ whereas the number of iterations is fixed for IR$\ell_1\ell_2$M. This is clearly seen from the right plot of Fig.~\ref{fig:time} (obtained with $n=20$, $d=2$ and $N=100$), where the computing time of the S$\ell_1\ell_2$M reaches the one of the IR$\ell_1\ell_2$M when $\|\g x_0\|_0$ equals the number of iterations of IR$\ell_1\ell_2$M. The exact greedy algorithm (EGA) suffers from a similar complexity issue with respect to $\|\g x_0\|_0$, but much more pronounced. On the contrary, its approximate variant (AGA) remained very efficient in all of our tests. 
\begin{figure}
\centering
	\includegraphics[width=.32\linewidth]{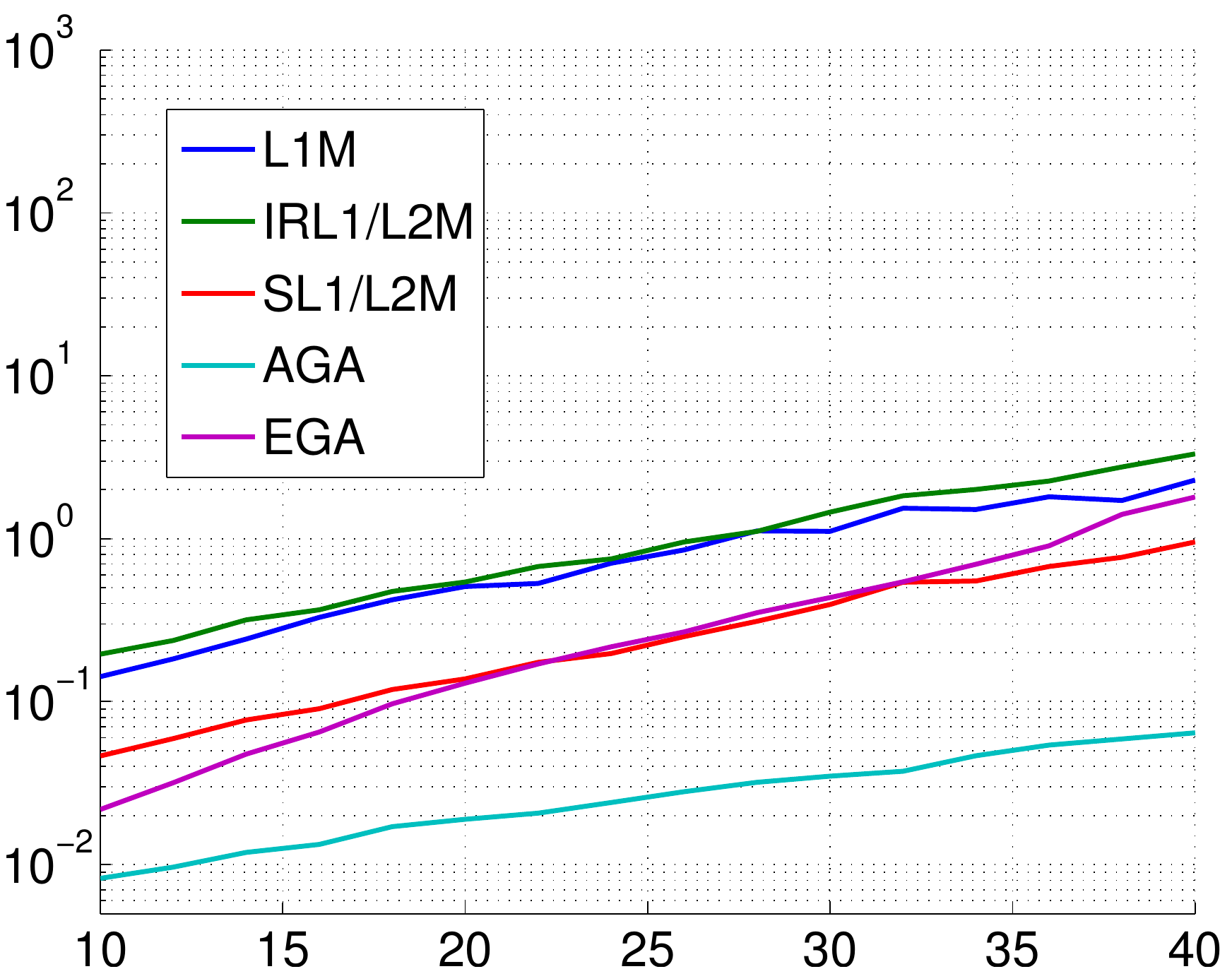} %{figs/NLBPtimeVSn_d4_N50_s3.pdf}
	\includegraphics[width=.32\linewidth]{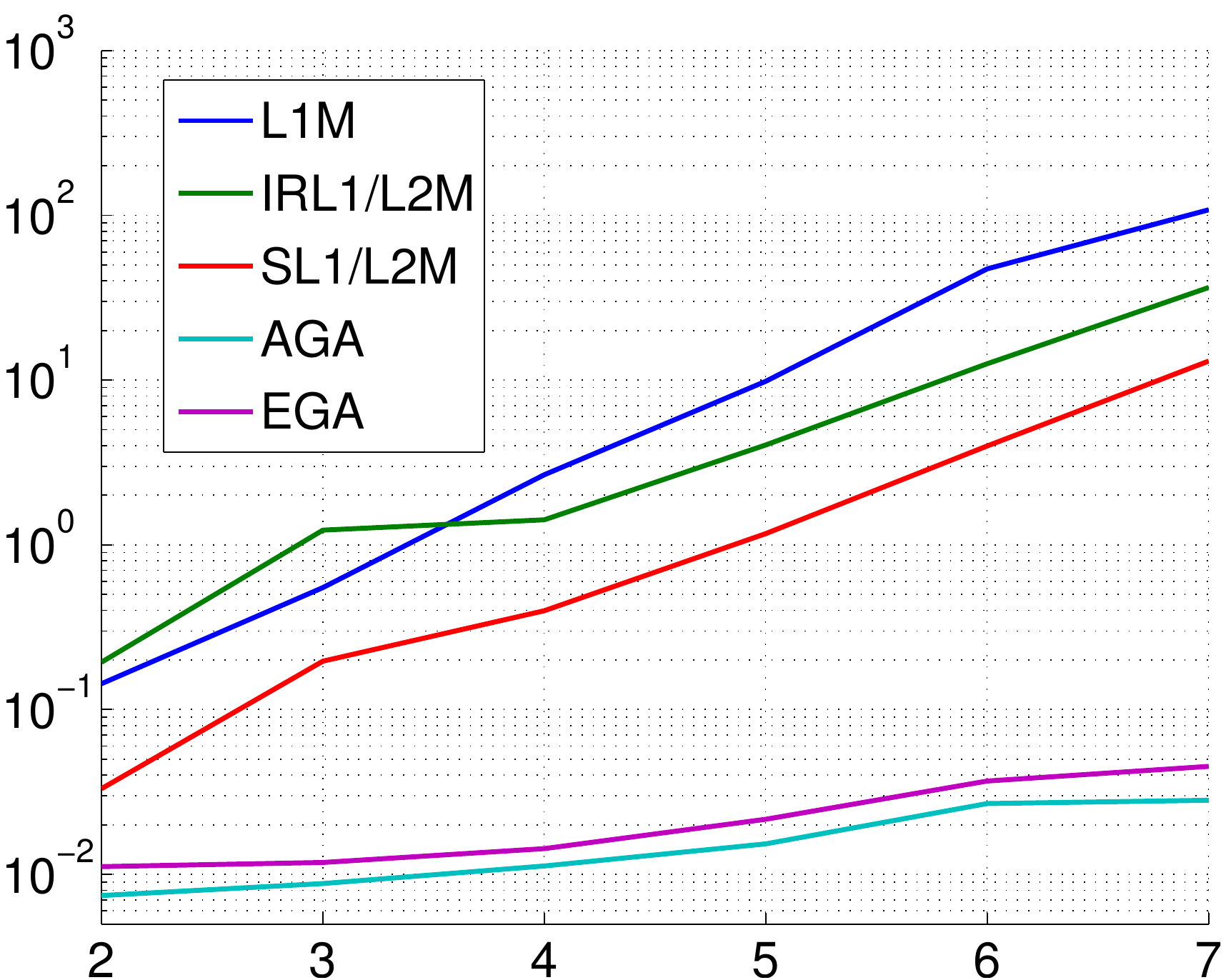}		
	\includegraphics[width=.32\linewidth]{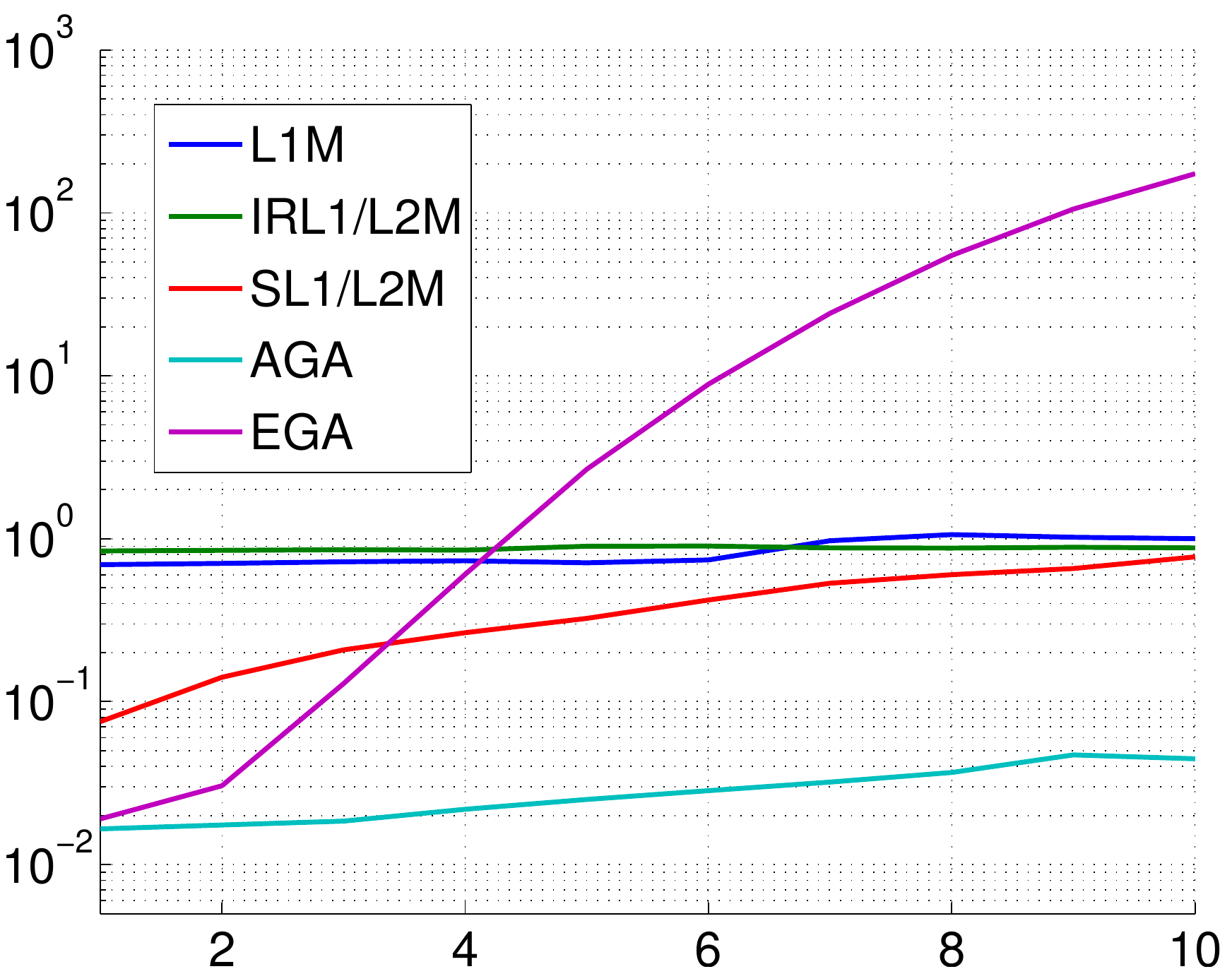}	
	\caption{Mean computing time of the methods in seconds versus the number of variables $n$ (left), the degree $d$ (middle) and the sparsity level $\|\g x_0\|_0$ (right). These times are obtained on a Linux server equipped with 2 Xeon X5690 at 3.47 GHz. \label{fig:time}}
\end{figure}

%%%%%%%%%%%%%%%%%%%%%%%%%%%%%%%%%%%%%%%%%%%%%%%%%%%%%%%%%%%%%%
\subsection{Stable recovery in the presence of noise}
\label{sec:expnoise}
%%%%%%%%%%%%%%%%%%%%%%%%%%%%%%%%%%%%%%%%%%%%%%%%%%%%%%%%%%%%%%

We now turn to the noisy case as discussed in Sect.~\ref{sec:noise}. 
In this setting, accuracy is measured by two performance indexes: %the mean squared error, $\|\hat{\g x} -\g x_0\|_2^2$,
the mean relative error, $\|\hat{\g x} -\g x_0\|_2 / \|\g x_0\|_2 $, and the success rate with respect to the estimation of the support of $\g x_0$ (where values $|\hat{x}_j|<10^{-6}$ are considered as zeros). Indeed, in many applications, the recovery of the correct support is more important that the precise values of the estimates. 

\paragraph{First results.}
The data are generated as in Sect.~\ref{sec:expnoiseless} except that $y_i = p_i^d(\g x_0) + e_i$, with noise terms $e_i$ forming a zero-mean Gaussian random vector $\g e$ satisfying $\|\g e\|_2 = \varepsilon = 3$. This corresponds to a signal-to-noise ratio, $\|\g y\|_2^2/\|\g e\|_2^2$, of about $18$ dB. 
Results with $N=50$ polynomials of degree $d=2$ in $n=20$ variables are shown in Table~\ref{tab:QBPnoise} and in Table~\ref{tab:NLBPnoise} for $d=4$ and $n=5$. The reweighted BP methods based on group-sparsity and the greedy methods lead to a rather small approximation error and they all recover the correct support in almost all trials despite the presence of noise. 

\begin{table}  %%%%%%%%%%%% with N = 50 %%%%%%%%%
\centering
\caption{Results with noisy quadratic equations. \label{tab:QBPnoise}}
\begin{tabular}{llllllll}\hline\noalign{\smallskip}
	\bf Method & QBP & $\ell_1$M & $\ell_1\ell_2$M & IR$\ell_1\ell_2$M  & S$\ell_1\ell_2$M & AGA & EGA
	\\
\noalign{\smallskip}\hline\noalign{\smallskip}
%	\bf Mean squared error & 0.0846  &  0.0702  &  0.1717  &  0.0215   & 0.0159 &   0.0136 &   0.0135\\ 
	\bf Mean relative error & $16.1 \%$ &  $11.7 \%$ & $22.2 \%$ & $7.72 \%$ & $6.52 \%$ & $6.20 \%$ & $6.19 \%$ \\
	\bf Success rate    	& $14 \%$ &  $96 \%$ &  $0 \%$ & $100 \%$ & $100 \%$ & $99 \%$ &$100 \%$  \\
	\bf Mean time (s)   	& 2.02 &  0.57 & 0.13 & 0.89 & 0.32 & 0.02 & 0.15
\\	\hline\noalign{\smallskip}
\end{tabular}
% QBP with lambda=10 ; similar results with lambda = 50
%
% Gaussian entries in X : RelError : QBP=23.6%, L1=19.5%, L1L2=28.6%,IRL1L1M=13.1%, SL1L2M=10.2%,AGA=8.82%, EGA=9.27%
%						  Success  : QBP=16%, L1=54%, L1L2=0%,IRL1L1M=98%, SL1L2M=98%,AGA=99%, EGA=95%
\end{table}
\begin{table}
\centering
\caption{Results with noisy equations of degree $d=4$. \label{tab:NLBPnoise}}
\begin{tabular}{llllllll}\hline\noalign{\smallskip}
	\bf Method & NLBP  &  $\ell_1$M & $\ell_1\ell_2$M & IR$\ell_1\ell_2$M  & S$\ell_1\ell_2$M & AGA & EGA
	\\
\noalign{\smallskip}\hline\noalign{\smallskip}
%	\bf Mean squared error & 0.0609 & 0.2235  &  0.2140  &  0.0155  &  0.0091 &   0.0122  &  0.0091\\ 	
	\bf Mean relative error & $16.5 \%$  & $22.3 \%$ & $29.8 \%$ & $7.65 \%$ & $5.83 \%$ & $6.74 \%$ &$5.84 \%$ \\
	\bf Success rate    	&  $55 \%$  & $87 \%$ &  $0 \%$ & $100 \%$ & $100 \%$ & $99 \%$ &$100 \%$  \\
	\bf Mean time (s)   	&  13.6 & 0.31 & 0.12 & 0.75 &  0.19 & 0.006 & 0.007
\\	\hline\noalign{\smallskip}
\end{tabular}
% NLBP lambda=10
% IHT == mse 0.4060 ; rel 26.1% ; time = 0.81 ; success 67%
%
% Gaussian entries in X : RelError : NLBP=6.04%, L1=8.19%, L1L2=15.35%,IRL1L1M=5.09%, SL1L2M=4.45%,AGA=4.28%, EGA=4.28%
%						  Success  : NLBP=70%, L1=86%, L1L2=0%,IRL1L1M=100%, SL1L2M=100%,AGA=100%, EGA=100%
\end{table}

\paragraph{Influence of the noise level $\varepsilon$.}
The influence of the noise level $\varepsilon = \|\g e\|_2$ on the performance of the proposed methods is evaluated by letting the value of $\varepsilon$ vary between 1 and 10, which  corresponds to a signal-to-noise ratio decreased from 28 dB to 9 dB. For all methods except the $\ell_1$M, the results plotted in Fig.~\ref{fig:noiseVSsnr} indicate that, as expected, e.g., from the bound~\eqref{eq:stabilityxgroupW}, the approximation error directly depends on the noise level. In fact, the mean relative error is almost linear with respect to $\varepsilon$%(XXX or the mean squared error is almost linear with respect to $\varepsilon^2$ and the signal-to-noise ratio)
, which shows that the dependence on $\varepsilon$ in the bound~\eqref{eq:stabilityxgroupW} of Theorem~\ref{thm:stabilitygroup} is of the correct nature. 
But even more interestingly, the noise level does not influence the success rate corresponding to the estimation of the support of $\g x_0$, thus providing evidence that the methods are robust to noise. 
Here again, the $\ell_1$M method, which does not include structural knowledge in its formulation, does not benefit from such satisfactory features.

\begin{figure}
\centering
	\includegraphics[width=.4\linewidth]{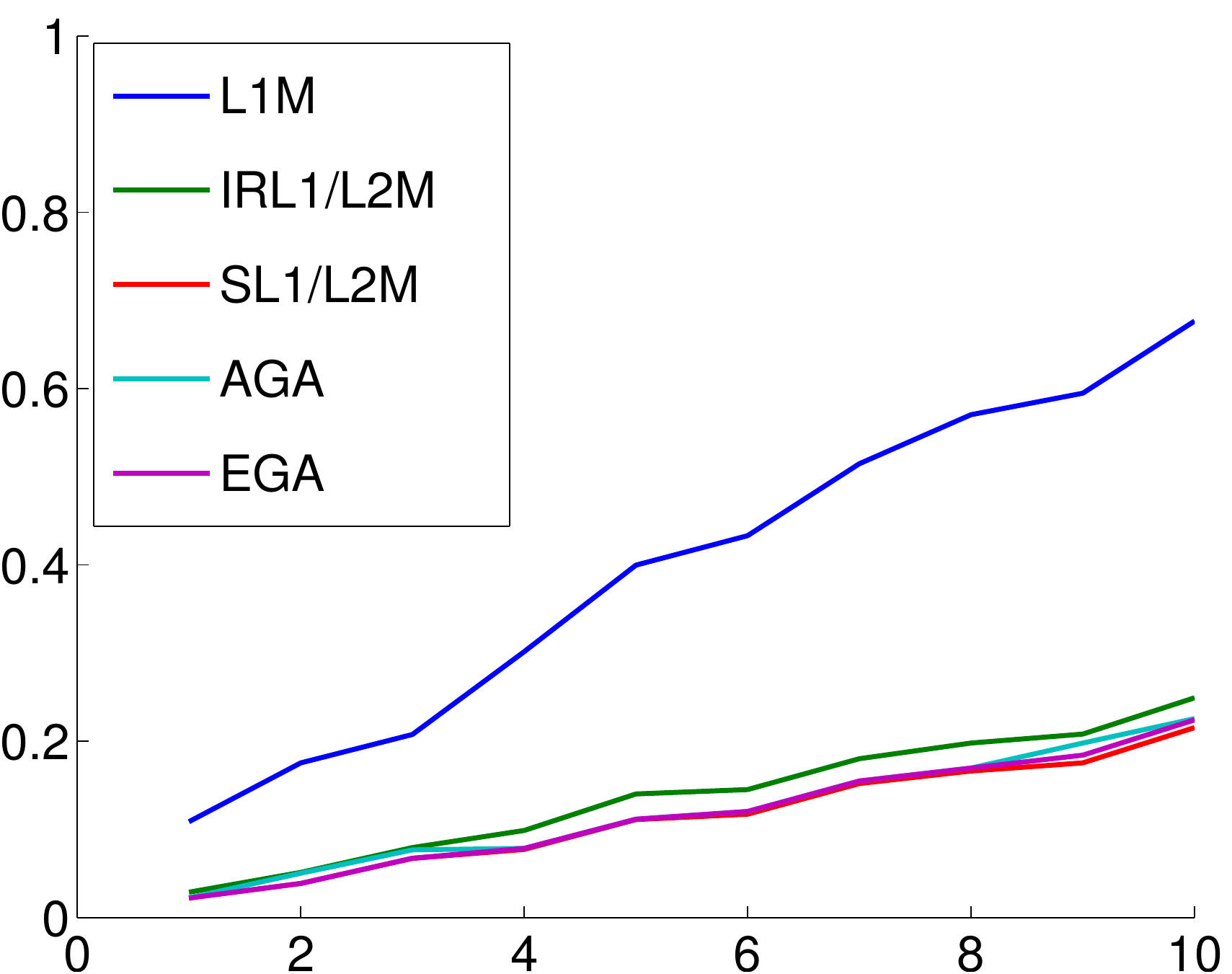}
	\includegraphics[width=.4\linewidth]{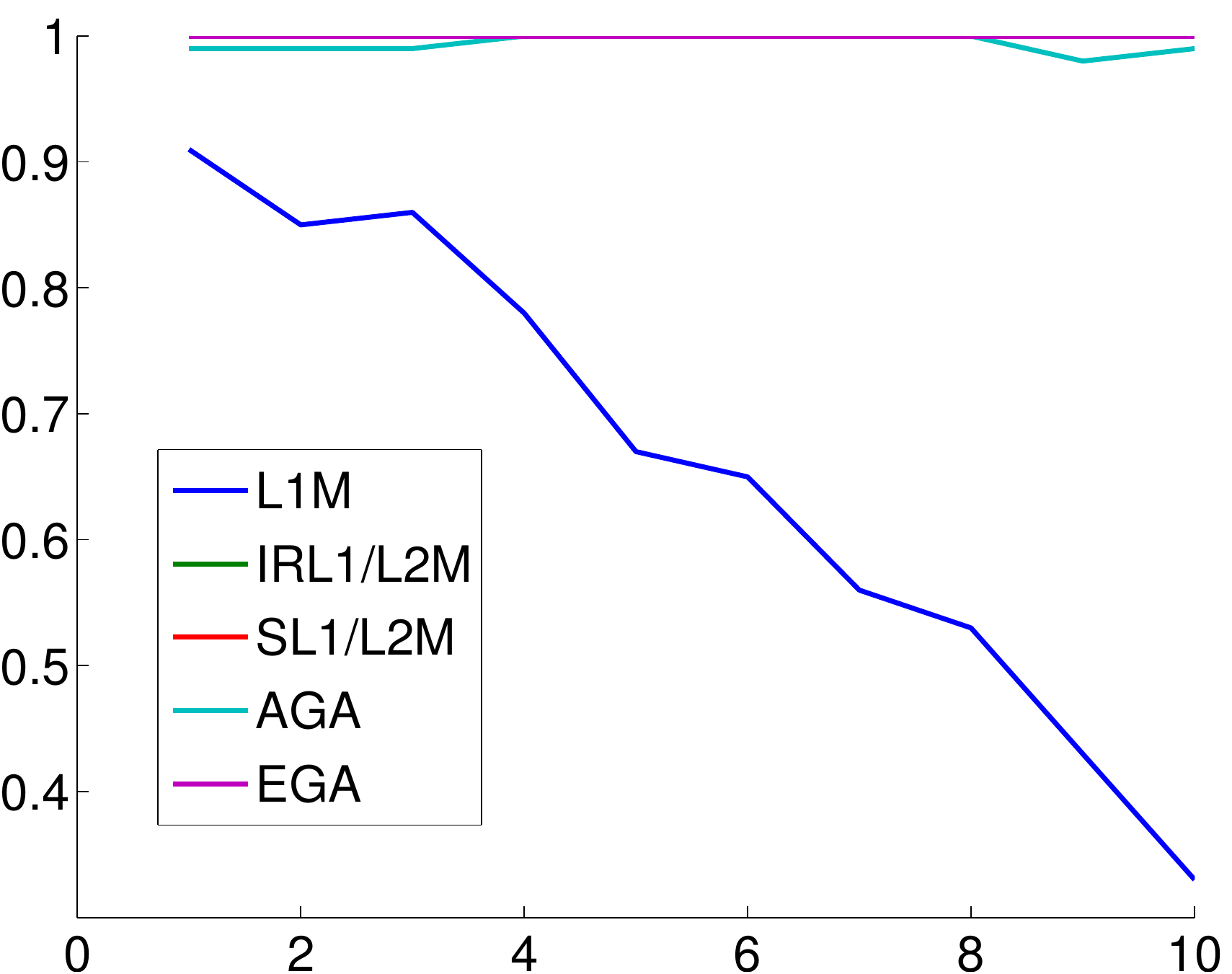}	
	\caption{Mean relative error (left) and success rate (right) versus $\varepsilon=\|\g e\|_2$. Except for the $\ell_1$M, the curves are hardly distinguishable and close to $100 \%$ for the success rate.\label{fig:noiseVSsnr}}
\end{figure}

\paragraph{Influence of $\varepsilon$ as a tuning parameter.}
In practical applications, the noise level might be unknown and $\varepsilon$ becomes a tuning parameter. 
Figure~\ref{fig:noiseVSepsilon} shows the influence of this parameter on the performance of the methods for a fixed noise level $\|\g e\|_2=3$. All methods except the $\ell_1$M perform very well with a slightly overestimated $\varepsilon \in [ 3, 4]$ and, for $\varepsilon$ within a reasonable range around $\|\g e\|_2$ ($\varepsilon\in[2,6]$), the methods still yield rather accurate estimates of $\g x_0$. 
Significantly larger values of $\varepsilon$ lead to an increase of the error for all methods except the greedy algorithms which maintain a mean relative error below $7\%$ for all values of $\varepsilon\geq \|\g e\|_2$. On the other hand, the IR$\ell_1\ell_2$M method is the only one that is not badly affected by the underestimation of $\varepsilon$ in terms of approximation error. 

Regarding the estimation of the support of $\g x_0$, all methods fail with $\varepsilon < \|\g e\|_2$, while overestimating $\varepsilon \geq \|\g e\|_2$ leads to perfect recovery even for much larger values of $\varepsilon$. This is in line with results on linear BP denoising, such as Theorem~4.1 in \cite{Donoho06noise} which guarantees that the estimated support is a subset of the sought one for sufficiently sparse cases when using an overestimated $\varepsilon$.

\begin{figure}
\centering
%	\includegraphics[width=.4\linewidth]{figs/NLBPnoiseVSepsilon_RelError.pdf}
%	\includegraphics[width=.4\linewidth]{figs/NLBPnoiseVSepsilon_successRate.pdf}	
%	\\
	\includegraphics[width=.4\linewidth]{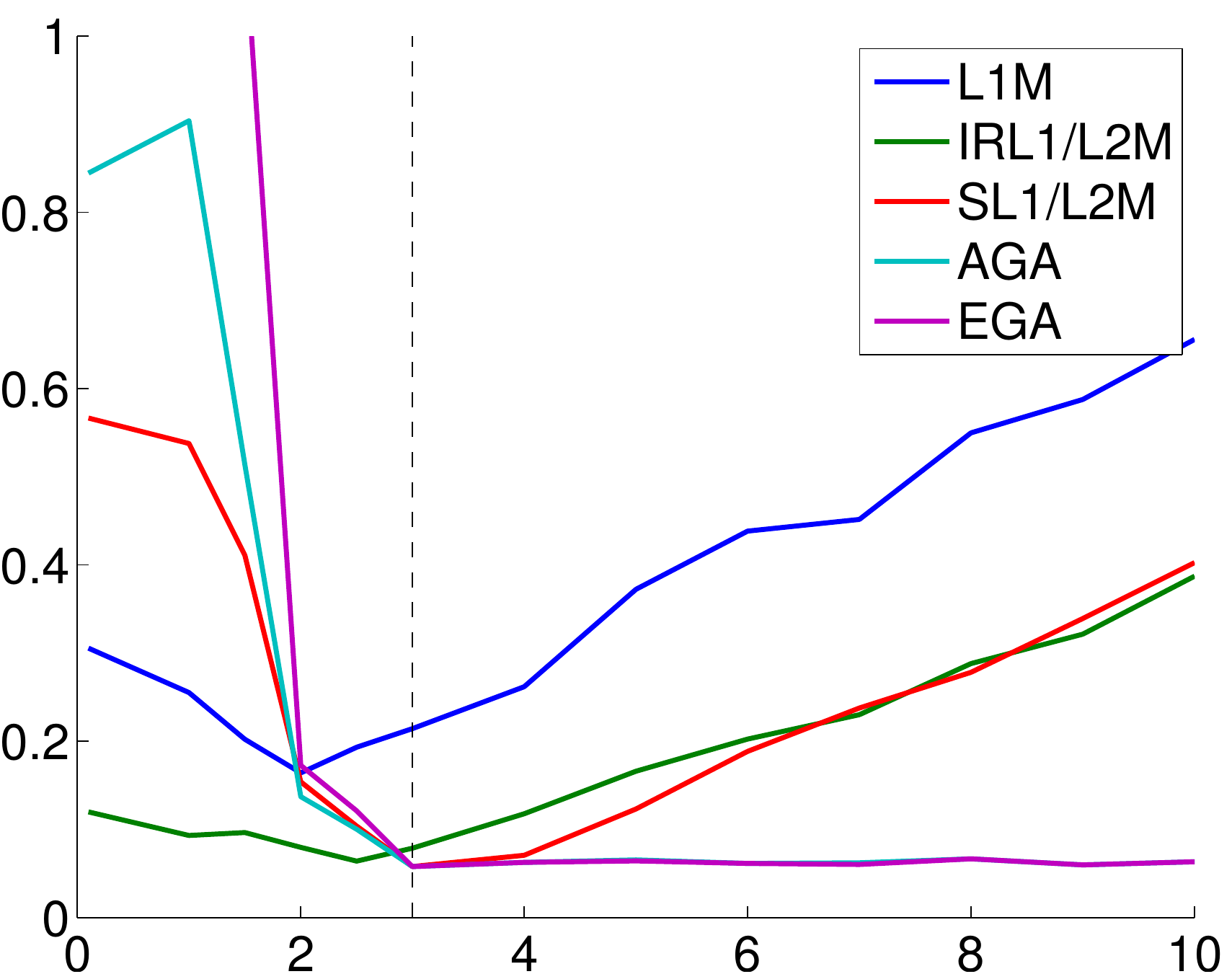}
	\includegraphics[width=.4\linewidth]{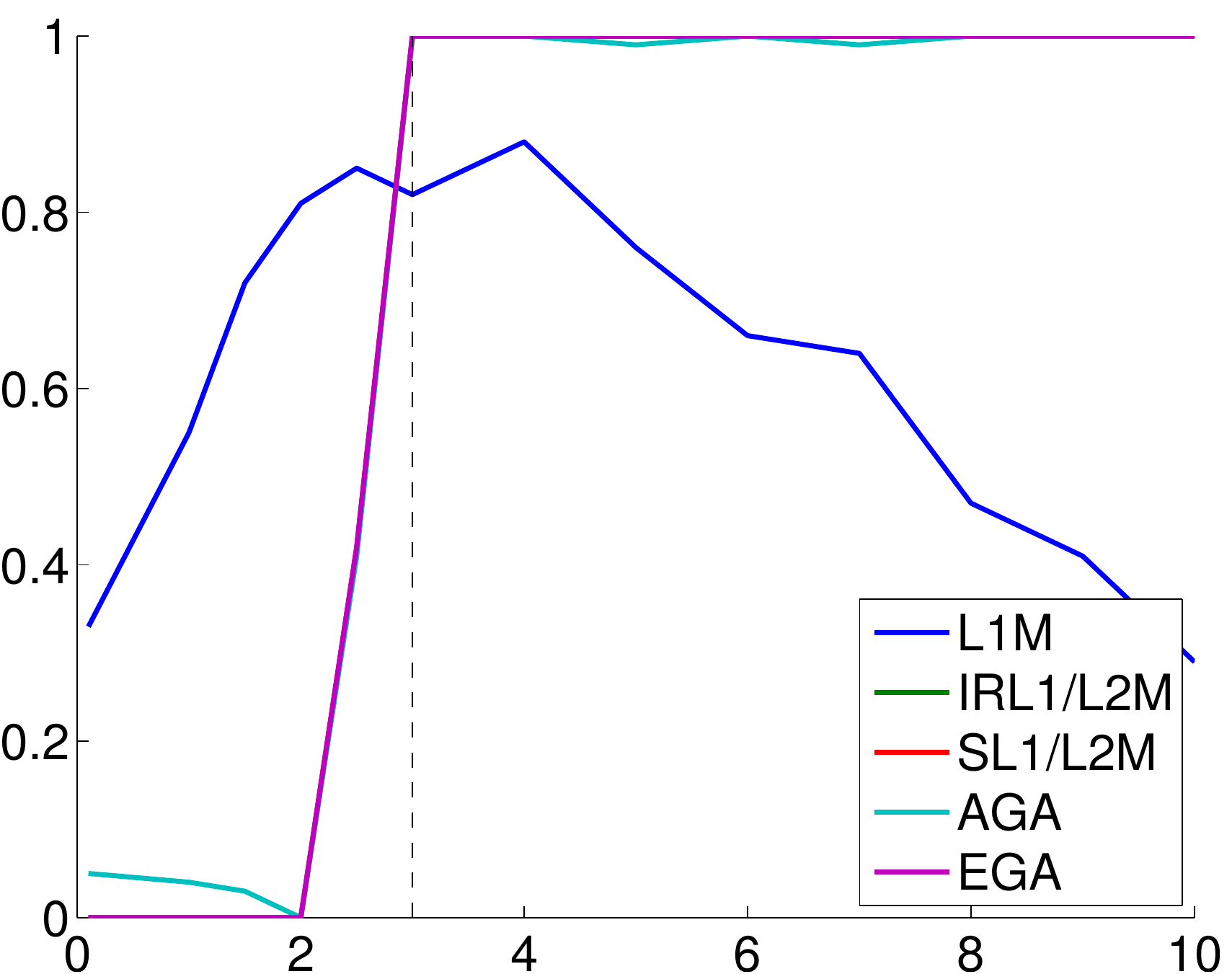}	
	\caption{Mean relative error (left) and success rate (right) versus $\varepsilon$ with data perturbed by a noise of $\ell_2$-norm equal to 3. Except for $\ell_1$M, the curves of the success rate are hardly distinguishable and close to $100 \%$ for $\varepsilon\geq 3$. \label{fig:noiseVSepsilon}}
\end{figure}

%%%%%%%%%%%%%%%%%%%%%%%%%%%%%%%%%%%%%%%%%%%%%%%%%%%%%%%%%%%%%%
\section{Conclusions}
%%%%%%%%%%%%%%%%%%%%%%%%%%%%%%%%%%%%%%%%%%%%%%%%%%%%%%%%%%%%%%

The paper proposed several methods for finding the sparsest solution of a system of polynomial equations. Two generic approaches were considered, one based on convex relaxations and one on a greedy strategy. For the convex relaxations, sufficient conditions of exact recovery of the sparsest solution were derived. 
The methods were also extended to deal with noisy equations, in which case stable recovery bounds for the convex relaxations were obtained. 
Both the computational efficiency and the accuracy of the proposed methods were shown in numerical experiments, which also emphasized the relationship between the probability of success and the sparsity of the solution for each method. In addition, these results indicate that the proposed methods accurately recover the sought solution in many cases where the sufficient conditions do not hold. As in classical BP theory, these conditions suffer from an ``excessive pessimism" and are too restrictive due to their worst-case nature. 

Remaining open issues include the convergence analysis of the greedy approximation towards the sparsest solution and the derivation of sufficient conditions for the solution of the group-sparsity optimization problem \eqref{eq:P0polygroup} to satisfy the polynomial constraints. Another question of interest is whether we can obtain less restrictive conditions for the variants of the convex relaxations with nonnegativity constraints such as~\eqref{eq:P1polygroup+}. 
Future work will also consider applying the proposed methods to more general nonlinear equations, e.g., by using Taylor expansions as in \cite{Ohlsson13nlbp}.

\subsection*{Acknowledgements}
Henrik Ohlsson gratefully acknowledges support from the NSF project FORCES (Foundations Of Resilient CybEr-physical Systems), the Swedish Research Council in the Linnaeus center CADICS, the European Research Council under the advanced grant LEARN, contract 267381, a postdoctoral grant from the Sweden-America Foundation, donated by ASEA's Fellowship Fund, and a postdoctoral grant from the Swedish Research Council.

\bibliographystyle{apalike}

%%%%%%%%%%%%%%%%%%%%%%%%%%%%%%%%%%%%%%%%%%%%%%%%%%%%%%%%%%%%%%
\appendix

\section{Bound on the sparsity level of $\phi$}
\label{app:phisparsity}

\begin{lemma} \label{lem:binoms2}
For all $(a,b,d) \in (\mathbb{N}^*)^3$ such that $a\geq d(b+d)$, the inequality 
$$
	\frac{1}{a}\sum_{q=2}^d \begin{pmatrix}a + q-1\\q\end{pmatrix} \geq \frac{d}{b}  \sum_{q=2}^d\begin{pmatrix}b + q-1\\q\end{pmatrix} 
$$
holds.
\end{lemma}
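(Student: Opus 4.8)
The plan is to reduce the inequality between sums to a termwise comparison, exploiting the same product representation of $\frac{1}{a}\binom{a+q-1}{q}$ that was used in the proof of Lemma~\ref{lem:binoms}. Recall from there that $\frac{1}{a}\binom{a+q-1}{q}=\frac{1}{q!}\prod_{i=1}^{q-1}(a+i)$, and likewise with $b$ in place of $a$. Substituting these expressions into both sides, the claim becomes
$$
\sum_{q=2}^d \frac{1}{q!}\prod_{i=1}^{q-1}(a+i)\ \geq\ d\sum_{q=2}^d \frac{1}{q!}\prod_{i=1}^{q-1}(b+i),
$$
so it suffices to establish, for each fixed $q\in\{2,\dots,d\}$, the termwise bound $\prod_{i=1}^{q-1}(a+i)\geq d\prod_{i=1}^{q-1}(b+i)$. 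When $d=1$ both sums are empty and there is nothing to prove, so I would assume $d\geq 2$.

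First I would prove the pointwise factor inequality $a+i\geq d(b+i)$ for every $i\in\{1,\dots,q-1\}$. From the hypothesis $a\geq d(b+d)=db+d^2$ one gets $a+i\geq db+d^2+i$, and comparing with $d(b+i)=db+di$ it remains to check $d^2+i\geq di$, i.e. $d^2\geq i(d-1)$. Since $i\leq q-1\leq d-1$, we have $i(d-1)\leq(d-1)^2\leq d^2$, which closes this step. Then, multiplying these inequalities over $i=1,\dots,q-1$ and using $d\geq 1$,
$$
\prod_{i=1}^{q-1}(a+i)\ \geq\ \prod_{i=1}^{q-1} d(b+i)\ =\ d^{q-1}\prod_{i=1}^{q-1}(b+i)\ \geq\ d\prod_{i=1}^{q-1}(b+i),
$$
where the last inequality uses $q-1\geq 1$. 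Dividing by $q!$, summing over $q$ from $2$ to $d$, and translating back to binomial coefficients then yields the statement.

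I do not expect a genuine obstacle here; the argument is essentially a sharpening of the bookkeeping already carried out for Lemma~\ref{lem:binoms}. The one point meriting care is that the index bound $i\leq q-1\leq d-1$ is precisely what makes $d^2\geq i(d-1)$ hold, so the hypothesis $a\geq d(b+d)$ cannot be relaxed without breaking the estimate at $i=d-1$; I would also double-check the empty-product conventions and the degenerate case $d=1$ so that the reduction in the first paragraph is watertight.
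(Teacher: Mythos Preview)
Your proof is correct and follows essentially the same strategy as the paper: reduce to a termwise comparison via the product formula $\frac{1}{a}\binom{a+q-1}{q}=\frac{1}{q!}\prod_{i=1}^{q-1}(a+i)$, obtain a factor $d^{q-1}$, and then sum. The only difference is mechanical: the paper bounds $\prod_{i=1}^{q-1}(a+i)\geq a^{q-1}\geq d^{q-1}(b+d)^{q-1}\geq d^{q-1}\prod_{i=1}^{q-1}(b+i)$, whereas you establish the sharper factorwise inequality $a+i\geq d(b+i)$ directly; both routes arrive at the same intermediate bound.
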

\begin{proof}
For $q\leq d$, we can bound the terms in the sum as 
\begin{align*}
	\frac{1}{a} \begin{pmatrix}a + q-1\\q\end{pmatrix} &= \frac{(a+q-1)!}{a\, q! (a-1)!} = \frac{1}{a\,q!}\prod_{i=0}^{q-1}(a+i) 
	= \frac{1}{q!}\prod_{i=1}^{q-1}(a+i)\\
	&\geq \frac{1}{q!} a^{q-1} \\
	&\geq \frac{1}{q!} d^{q-1}(b+d)^{q-1} \\
	&\geq \frac{1}{q!} d^{q-1} \prod_{i=1}^{q-1} (b+i) \\
	&\geq \frac{d^{q-1} }{b} \begin{pmatrix}b + q-1\\q\end{pmatrix} 
\end{align*}
where we used $a\geq d(b+d)$ in the second inequality. Then
\begin{align*}
	\frac{1}{a}\sum_{q=2}^d \begin{pmatrix}a + q-1\\q\end{pmatrix} 
		&\geq \frac{1}{b}  \sum_{q=2}^d d^{q-1} \begin{pmatrix}b + q-1\\q\end{pmatrix} \\
		&\geq  \frac{d}{b} \sum_{q=2}^d d^{q-2} \begin{pmatrix}b + q-1\\q\end{pmatrix} \\
		&\geq  \frac{d}{b} \sum_{q=2}^d \begin{pmatrix}b + q-1\\q\end{pmatrix} 		
\end{align*}

\end{proof}

\begin{proposition} \label{prop:phisparsity2}
Let the mapping $\phi : \R^n \rightarrow \R^M$ be defined as above. Then, with $d\geq 3$ and $n\geq d(\|\g x\|_0 + d)$, the vector $\phi(\g x)$ is sparser than the vector $\g x$ in the sense that the inequality
$$
	\frac{\|\phi(\g x)\|_0}{M} \leq \frac{2 \|\g x\|_0 }{d n} 
$$
holds for all $\g x\in\R^n$. 
\end{proposition}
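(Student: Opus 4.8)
The plan is to follow the proof of Proposition~\ref{prop:phisparsity} almost verbatim, but to substitute the sharper Lemma~\ref{lem:binoms2} for the elementary Lemma~\ref{lem:binoms}. Write $s = \|\g x\|_0$; the case $s=0$ is immediate since then $\phi(\g x)=\g 0$, so I would assume $s\geq 1$ (which is also what makes Lemma~\ref{lem:binoms2} applicable, as its statement requires $b\in\mathbb{N}^*$). Exactly as in Proposition~\ref{prop:phisparsity}, counting monomials of each degree gives $\|\phi(\g x)\|_0 = \sum_{q=1}^d \binom{s+q-1}{q}$ and $M = \sum_{q=1}^d \binom{n+q-1}{q}$, so the target inequality $\|\phi(\g x)\|_0/M \leq 2\|\g x\|_0/(dn)$ is, after cross-multiplying by the positive quantities $dn$ and $M$, equivalent to $dn\sum_{q=1}^d\binom{s+q-1}{q} \leq 2s\sum_{q=1}^d\binom{n+q-1}{q}$.

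First I would isolate the degree-one terms $\binom{s}{1}=s$ and $\binom{n}{1}=n$ on each side and abbreviate the higher-degree sums as $N_1 = \sum_{q=2}^d\binom{s+q-1}{q}$ and $M_1 = \sum_{q=2}^d\binom{n+q-1}{q}$, turning the goal into $dn(s+N_1) \leq 2s(n+M_1)$. Next I would invoke Lemma~\ref{lem:binoms2} with $a=n$ and $b=s$: this is legitimate precisely because the hypothesis $n\geq d(\|\g x\|_0+d)$ is the condition $a\geq d(b+d)$, and it yields $\frac{1}{n} M_1 \geq \frac{d}{s} N_1$, i.e.\ $2sM_1 \geq 2dnN_1$. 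Hence it suffices to prove $dn(s+N_1) \leq 2sn + 2dnN_1$, which after dividing by $n$ collapses to the purely combinatorial inequality $(d-2)s \leq dN_1$.

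That last inequality I would settle by retaining only the $q=2$ term of $N_1$: $N_1 \geq \binom{s+1}{2} = s(s+1)/2 \geq s$ for $s\geq 1$, whence $dN_1 \geq ds \geq (d-2)s$. Chaining the estimates back up then gives the proposition.

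The one spot that is not pure bookkeeping — the closest thing to an obstacle — is that the degree-one contributions point the \emph{wrong} way: since $d>2$ one has $dns>2sn$, so the term-by-term comparison used in Proposition~\ref{prop:phisparsity} breaks down. The resolution exploits the factor $2$ in the bound $2\|\g x\|_0/(dn)$: it lets us double the estimate of Lemma~\ref{lem:binoms2} to $2sM_1\geq 2dnN_1$, and the \emph{extra} copy of $dnN_1$ so obtained absorbs the linear deficit, since $dnN_1\geq dns\geq(d-2)ns$ once we know $N_1\geq s$. Everything else reduces to the routine manipulations sketched above, which I would not carry out in detail here.
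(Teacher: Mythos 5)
Your proof is correct, and it rests on the same key ingredient as the paper's: Lemma~\ref{lem:binoms2} applied with $a=n$, $b=\|\g x\|_0$, isolating the degree-one terms from the degree-$\geq 2$ sums and exploiting the slack factor $2$ in the claimed bound. The only genuine difference is in how the degree-one deficit is absorbed. The paper splits the ratio into two fractions and bounds each by $1/d$: the second via Lemma~\ref{lem:binoms2}, and the first via a lower bound on the \emph{denominator's} higher-degree part, namely $\frac{1}{n}\sum_{q=2}^d\binom{n+q-1}{q}\geq \frac{d(d-1)}{2}$ combined with the quadratic inequality $\frac12 d^2-\frac32 d+1\geq 0$ (which is where $d\geq 3$ enters explicitly). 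You instead cross-multiply and absorb the deficit $(d-2)ns$ using a lower bound on the \emph{numerator's} higher-degree part, $N_1\geq\binom{s+1}{2}\geq s$ for $s\geq 1$. Both routes are sound; yours is slightly leaner (no quadratic in $d$ to check, and in fact it would go through already for $d\geq 2$), while the paper's two-term decomposition makes the ``each piece contributes $1/d$'' structure of the factor $2/d$ more transparent. Your explicit treatment of the case $\|\g x\|_0=0$, needed because Lemma~\ref{lem:binoms2} requires $b\in\mathbb{N}^*$, is a detail the paper glosses over.
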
 
\begin{proof}
By construction, the number of nonzeros in $\phi(\g x)$ is equal to the sum over $q$, $1\leq q\leq d$, of the number of monomials of degree $q$ in $\|\g x\|_0$ variables:
\begin{align*}
	\frac{\|\phi(\g x)\|_0}{M} &=  \frac{ \sum_{q=1}^d\begin{pmatrix}\|\g x\|_0 + q-1\\q\end{pmatrix}}{ \sum_{q=1}^d\begin{pmatrix}n + q-1\\q\end{pmatrix}} 
	=  \frac{ \|\g x\|_0 }{n} \frac{  \frac{1}{\|\g x\|_0 } \sum_{q=1}^d\begin{pmatrix}\|\g x\|_0 + q-1\\q\end{pmatrix}}{  \frac{1 }{n} \sum_{q=1}^d\begin{pmatrix}n + q-1\\q\end{pmatrix}} \\
	&= \frac{ \|\g x\|_0 }{n} \frac{ 1 +  \frac{1}{\|\g x\|_0 } \sum_{q=2}^d\begin{pmatrix}\|\g x\|_0 + q-1\\q\end{pmatrix}}{1+  \frac{1 }{n} \sum_{q=2}^d\begin{pmatrix}n + q-1\\q\end{pmatrix}} \\
	&= \frac{ \|\g x\|_0 }{n} \left[\frac{ 1}{1+  \frac{1 }{n} \sum_{q=2}^d\begin{pmatrix}n + q-1\\q\end{pmatrix}}   +    \frac{ \frac{1}{\|\g x\|_0 } \sum_{q=2}^d\begin{pmatrix}\|\g x\|_0 + q-1\\q\end{pmatrix}} {1+ \frac{1 }{n}\sum_{q=2}^d\begin{pmatrix}n + q-1\\q\end{pmatrix}}\right]
\end{align*}
The assumption $n\geq d(\|\g x\|_0 + d)$ implies that\footnote{Assume $d> n$, then, the assumption of the Proposition leads to $n> n(\|\g x\|_0 + n)$ and $\|\g x\|_0 + n< 1$ which is impossible since $n\geq 1$.}  $d\leq n$. With $d\leq n$, we have 
$$
	\frac{1 }{n} \begin{pmatrix}n + q-1\\q\end{pmatrix} \geq \frac{1}{q!} n^{q-1} \geq \frac{1}{q!} d^{q-1}
$$
which yields
$$
	\frac{1 }{n} \sum_{q=2}^d\begin{pmatrix}n + q-1\\q\end{pmatrix} \geq \frac{1 }{n}\sum_{q=2}^d \begin{pmatrix}n + 2-1\\2\end{pmatrix} \geq (d-1)\frac{d}{2}
$$
Now, on the one hand we have
\begin{align*}
	d\geq 3 
	\Rightarrow\  & \frac{1}{2}d^2 -\frac{3}{2}d + 1 \geq 0 \\
	\Rightarrow\  & 1 + \frac{1 }{n} \sum_{q=2}^d\begin{pmatrix}n + q-1\\q\end{pmatrix}  \geq d \\	
	\Rightarrow\ & \frac{ 1}{1+  \frac{1 }{n} \sum_{q=2}^d\begin{pmatrix}n + q-1\\q\end{pmatrix}}  \leq \frac{1}{d} 
\end{align*}
and on the other hand, Lemma~\ref{lem:binoms2} yields
$$
	 \frac{ \frac{1}{\|\g x\|_0 } \sum_{q=2}^d\begin{pmatrix}\|\g x\|_0 + q-1\\q\end{pmatrix}} {1+\frac{1 }{n} \sum_{q=2}^d\begin{pmatrix}n + q-1\\q\end{pmatrix}} \leq  \frac{ \frac{1}{\|\g x\|_0 } \sum_{q=2}^d\begin{pmatrix}\|\g x\|_0 + q-1\\q\end{pmatrix}} {\frac{1 }{n}\sum_{q=2}^d\begin{pmatrix}n + q-1\\q\end{pmatrix}} \leq \frac{1}{d}
$$
Thus, 
$$
	\frac{\|\phi(\g x)\|_0}{M} \leq \frac{2 \|\g x\|_0 }{d n} 
$$

\end{proof}

%%%%%%%%%%%%%%%%%%%%%%%%%%%%
\section{Other conditions for sparse recovery}
\label{app:otherconditions}

The following uses the exact value of $\|\g \phi_0\|_0$.
\begin{theorem}\label{th:2}
	Let $\g x_0$ denote the unique solution to \eqref{eq:P0}--\eqref{eq:polyf}. If the inequality 
	\begin{equation}
		\sum_{q=1}^d\begin{pmatrix}\|\g x_0\|_0 + q-1\\q\end{pmatrix} \leq \frac{1}{2}\left(1+ \frac{1}{\mu(\g A)}\right) 
	\end{equation}
	holds, then the solution $\hat{\g \phi}$ to \eqref{eq:P1poly} is unique and equal to $\phi(\g x_0)$, thus providing $\hat{\g x} = \g x_0$. 
\end{theorem}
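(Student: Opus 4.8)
The plan is to imitate, almost verbatim, the proof of the first theorem of Section~\ref{sec:analysis} (exact recovery for the $\ell_1$-minimization method), the only change being that the bound on $\|\g\phi_0\|_0$ furnished by Proposition~\ref{prop:phisparsity} is replaced by the \emph{exact} value of $\|\g\phi_0\|_0$.

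First, since $\g x_0$ is assumed to be the unique solution of \eqref{eq:P0}--\eqref{eq:polyf}, the lifted vector $\g\phi_0 = \phi(\g x_0)$ is feasible for the linear system $\g A\g\phi = \g y - \g b$ underlying both \eqref{eq:P0poly} and \eqref{eq:P1poly}. By the definition of the mapping $\phi$, the nonzero components of $\g\phi_0$ are exactly the monomials of degree $q$, $1\le q\le d$, formed from the $\|\g x_0\|_0$ nonzero entries of $\g x_0$; counting these monomials degree by degree yields the identity
\[
	\|\g\phi_0\|_0 \;=\; \sum_{q=1}^d \binom{\|\g x_0\|_0 + q-1}{q},
\]
which is the same combinatorial fact used in the proof of Proposition~\ref{prop:phisparsity}, now kept as an equality rather than relaxed to an inequality.

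Next, I would apply Theorem~7 of \cite{Bruckstein09}: if a feasible vector $\g\phi_0$ of $\g A\g\phi = \g y-\g b$ satisfies $\|\g\phi_0\|_0 < \frac12\bigl(1+1/\mu(\g A)\bigr)$, then $\g\phi_0$ is at once the unique sparsest solution of that system (hence the unique solution of \eqref{eq:P0poly}) and the unique minimizer of \eqref{eq:P1poly}. Substituting the exact count above, the hypothesis $\sum_{q=1}^d\binom{\|\g x_0\|_0+q-1}{q}\le \frac12\bigl(1+1/\mu(\g A)\bigr)$ delivers precisely this inequality on $\|\g\phi_0\|_0$, using that $\|\g\phi_0\|_0$ is a positive integer to accommodate the strict form required by the cited statement. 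Consequently $\hat{\g\phi}=\g\phi_0=\phi(\g x_0)$, and, Assumption~\ref{ass:nozerocol} being implicitly in force since $\mu(\g A)$ appears in the statement, the inverse mapping $\phi^{-1}$ (reading off the first $n$ coordinates) is well defined and gives $\hat{\g x}=\phi^{-1}(\hat{\g\phi})=\g x_0$.

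There is no real obstacle here: all of the mathematical substance is carried by the exact monomial count and by the coherence-based uniqueness/recovery theorem of \cite{Bruckstein09}. The only points needing a moment's care are the bookkeeping between the non-strict inequality in the statement and the strict one in the cited result — settled by the integrality of $\|\g\phi_0\|_0$ — and the observation that the first $n$ columns of $\g A$ are nonzero (guaranteed by Assumption~\ref{ass:nozerocol1}, itself implied by Assumption~\ref{ass:nozerocol}), so that the reconstruction $\hat{\g x}=\phi^{-1}(\hat{\g\phi})$ is legitimate.
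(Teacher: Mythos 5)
Your proof matches the paper's intended argument exactly: Theorem~\ref{th:2} is stated in the appendix without proof, the evident intention being to rerun the proof of the main $\ell_1$-recovery theorem with the bound from Proposition~\ref{prop:phisparsity} replaced by the exact monomial count $\|\g\phi_0\|_0=\sum_{q=1}^d\binom{\|\g x_0\|_0+q-1}{q}$ before invoking Theorem~7 of \cite{Bruckstein09}, which is precisely what you do. One small caveat: your appeal to the integrality of $\|\g\phi_0\|_0$ does not actually convert the theorem's non-strict ``$\leq$'' into the strict ``$<$'' required by the cited result in the edge case where $\frac{1}{2}\left(1+1/\mu(\g A)\right)$ is itself an integer equal to $\|\g\phi_0\|_0$, but this looseness is inherited from the paper's own statement of the theorem rather than introduced by you.
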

Another more compact but slightly less tight result is as follows.
\begin{theorem}
	Let $\g x_0$ denote the unique solution to \eqref{eq:P0}--\eqref{eq:polyf}. If the inequality 
	\begin{equation}
		\begin{pmatrix}\|\g x_0\|_0 + d-1\\d\end{pmatrix} \leq \frac{1}{2d}\left(1+ \frac{1}{\mu(\g A)}\right) 
	\end{equation}
	holds, then the solution $\hat{\g \phi}$ to \eqref{eq:P1poly} is unique and equal to $\phi(\g x_0)$, thus providing $\hat{\g x} = \g x_0$. 
\end{theorem}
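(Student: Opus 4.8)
The plan is to reduce this statement to Theorem~\ref{th:2}, which is already available, by showing that the compact hypothesis $\binom{\|\g x_0\|_0 + d-1}{d} \leq \frac{1}{2d}\left(1 + \frac{1}{\mu(\g A)}\right)$ implies the hypothesis $\sum_{q=1}^d \binom{\|\g x_0\|_0 + q-1}{q} \leq \frac{1}{2}\left(1 + \frac{1}{\mu(\g A)}\right)$ of that theorem. Write $s = \|\g x_0\|_0$ and $\g\phi_0 = \phi(\g x_0)$. First I would recall, exactly as in the proof of Proposition~\ref{prop:phisparsity}, that $\|\g\phi_0\|_0 = \sum_{q=1}^d \binom{s+q-1}{q}$, since $\g\phi_0$ carries one nonzero entry for each monomial of degree $q$, $1 \leq q \leq d$, built from the $s$ nonzero components of $\g x_0$.

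The key step is the elementary binomial inequality $\sum_{q=1}^d \binom{s+q-1}{q} \leq d\,\binom{s+d-1}{d}$. I would establish it term by term: for $s \geq 1$ (the case $s = 0$ is trivial, since then $\g\phi_0 = \g 0$ and the conclusion is immediate), rewrite $\binom{s+q-1}{q} = \binom{s+q-1}{s-1}$; this quantity is non-decreasing in $q$ because the upper index $s+q-1$ increases with $q$ while the lower index $s-1$ stays fixed — equivalently, the ratio of consecutive terms equals $(s+q)/(q+1) \geq 1$. Hence each of the $d$ summands is bounded above by the last one, $\binom{s+d-1}{d}$, and summing over $q=1,\dots,d$ gives the claim.

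Combining the two observations, under the hypothesis of the theorem one obtains $\|\g\phi_0\|_0 \leq d\,\binom{s+d-1}{d} \leq d \cdot \frac{1}{2d}\left(1 + \frac{1}{\mu(\g A)}\right) = \frac{1}{2}\left(1 + \frac{1}{\mu(\g A)}\right)$, which is precisely the hypothesis of Theorem~\ref{th:2} (stated there, as here, with a non-strict inequality). Invoking Theorem~\ref{th:2} then yields that the solution $\hat{\g\phi}$ to~\eqref{eq:P1poly} is unique and equals $\g\phi_0 = \phi(\g x_0)$, whence $\hat{\g x} = \phi^{-1}(\hat{\g\phi}) = \g x_0$, which is the assertion.

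I do not expect a genuine obstacle here: the whole content is the term-wise monotonicity of $\binom{s+q-1}{q}$ in $q$, which is immediate, together with the bookkeeping of reducing to the already-proved Theorem~\ref{th:2}. The only mild point worth a sentence is the passage from the strict Bruckstein-type inequality to the non-strict form, but this is already absorbed into Theorem~\ref{th:2} (and ultimately into Theorem~7 of~\cite{Bruckstein09}), so nothing new is required. A self-contained alternative would bypass Theorem~\ref{th:2} and apply the uncertainty-principle bound to $\g\phi_0$ directly, at the cost of repeating that same reduction.
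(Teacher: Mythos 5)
Your proposal is correct and follows essentially the same route as the paper: both reduce the statement to Theorem~\ref{th:2} via the bound $\sum_{q=1}^d\binom{\|\g x_0\|_0+q-1}{q}\leq d\binom{\|\g x_0\|_0+d-1}{d}$, which the paper justifies by noting the summands form an increasing sequence and which you verify explicitly through the consecutive-term ratio $(s+q)/(q+1)\geq 1$. Your write-up merely adds detail (the $s=0$ case and the monotonicity computation) to the paper's one-line argument.
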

\begin{proof}
Since the terms in the sum of Theorem~\ref{th:2} form an increasing sequence, we have
$$
	\sum_{q=1}^d\begin{pmatrix}\|\g x_0\|_0 + q-1\\q\end{pmatrix} \leq d \begin{pmatrix}\|\g x_0\|_0 + d-1\\d\end{pmatrix} 
$$
which yields the sought statement by application of Theorem~\ref{th:2}.

\end{proof}

%%%%%%%%%%%%%%%%%%%%%%%%%%%%%%%%%%%%%%%%%%%
\section{Value of $m$}
\label{app:m}

Let us define $m$ as the number of monomials involving a base variable $x$ with a degree $\geq 1$. It can be computed as the sum over $q$, $0\leq q\leq d-1$ of the number of monomials of degree $q$ in $n-1$ variables times the remaining degree $d-q$ (since each monomial in $n-1$ variables can be multiplied by $x$ or $x^2$... or $x^{d-q}$ to produce a monomial of degree at most $d$ in $n$ variables): 
$$
	m = \sum_{q=0}^{d-1} (d-q)\begin{pmatrix}(n-1) + q-1\\q\end{pmatrix} = \sum_{q=0}^{d-1} (d-q)\begin{pmatrix}n + q-2\\q\end{pmatrix}  .
$$
Another technique computes $m$ as the total number of all monomials minus the number of monomials not involving $x$ which is the number of monomials in $n-1$ variables:
\begin{align*}
	m & = M -  \sum_{q=1}^{d} \begin{pmatrix}n + q-2\\q\end{pmatrix} \\
	 & =   \sum_{q=1}^{d}\begin{pmatrix}n + q-1\\q\end{pmatrix}  -  \begin{pmatrix}n + q-2\\q\end{pmatrix} 	\\
	& = \sum_{q=1}^{d} \left[\frac{(n + q-1)}{n-1} -1\right] \begin{pmatrix}n + q-2\\q\end{pmatrix}\\
	&=\sum_{q=1}^{d} \frac{q}{n-1} \begin{pmatrix}n + q-2\\q\end{pmatrix} .
\end{align*}
%which gives
%$$
%	m = \sum_{q=1}^{d} \frac{q}{n-1} \begin{pmatrix}n + q-2\\q\end{pmatrix} .
%$$

%%%%%%%%%%%%%%%%%%%%%%%%%%%%%%%%%%%%%%%%%%%%%%%%%%%%%%%%%%%%%%%%%%%%

\end{document}